\theoremstyle{definition}
\newtheorem{remark}{Remark}
\newtheorem{definition}{Definition}[section]
\theoremstyle{plain}
\newtheorem{proposition}[definition]{Proposition}
\newtheorem{hyp}{Assumption}
\newtheorem{theorem}[definition]{Theorem}
\newtheorem{lemma}[definition]{Lemma}
\numberwithin{equation}{section}
\providecommand{\abs}[1]{\left\lvert#1\right\rvert}
\providecommand{\sabs}[1]{\lvert#1\rvert}
\providecommand{\babs}[1]{\bigg\lvert#1\bigg\rvert}
\providecommand{\norm}[1]{\left\lVert#1\right\rVert}
\providecommand{\bnorm}[1]{\bigg\lVert#1\bigg\rVert}
\providecommand{\snorm}[1]{\lVert#1\rVert}
\providecommand{\prs}[1]{\left\langle #1\right\rangle}
\providecommand{\bprs}[1]{\bigg\langle #1\bigg\rangle}
\providecommand{\ie}{\textit{i.e.} }
\providecommand{\PPP}{\R^J\times[-\pi,\pi[^J\times\R^{2J}}
\providecommand{\R}{\mathbb{R}}
\providecommand{\RR}{\mathbb{R}}
\renewcommand{\P}{\mathbb{P}}
\providecommand{\Z}{\mathbb{Z}}
\providecommand{\bV}{\mathcal{V}}
\providecommand{\BV}{\ensuremath{\boldsymbol{\bV}} }
\providecommand{\1}{\mathds{1}}
\providecommand{\Aa}{\ensuremath{\boldsymbol{a}} }
\renewcommand{\b}{\ensuremath{\boldsymbol{b}\vphantom{b}} }
\providecommand{\x}{\ensuremath{\boldsymbol{x}} }
\providecommand{\y}{\ensuremath{\boldsymbol{y}} }
\providecommand{\N}{\mathbb{N}}
\providecommand{\C}{\mathbb{C}}
\providecommand{\Sh}{\ensuremath{\R^{k\times 2}}}
\DeclareMathOperator{\tr}{tr}                   
\providecommand{\Y}{\mathbf{Y}}
\providecommand{\G}{\mathcal{G}}
\providecommand{\bh}{\boldsymbol{h}}
\providecommand{\f}{\boldsymbol{f}\vphantom{f}} 
\renewcommand{\S}{\mathbb{S}}
\providecommand{\A}{\mathbf{A}}
\renewcommand{\AA}{\mathcal{A}}
\newcommand{\one}{\ensuremath{\boldsymbol{1}} }
\newcommand{\bTheta}{\ensuremath{\boldsymbol{\Theta}} }
\providecommand{\balpha}{\ensuremath{\boldsymbol{\alpha}} }
\newcommand{\bzeta}{\ensuremath{\boldsymbol{\zeta}} }
\providecommand{\E}{\mathbb{E}}
\newcommand{\bSigma}{\ensuremath{\boldsymbol{\Sigma }}}
\DeclareMathOperator*{\argmin}{\mbox{argmin}}
\DeclareMathOperator*{\diag}{\mbox{Diag}}
\providecommand{\LL}{\mathcal{L}}
\title{Consistent estimation of a mean planar curve modulo similarities}
\author{
  {\em J\'er\'emie Bigot}\footnote{Institut Sup\'erieur de l'A\'eronautique et de l'Espace,
D\'epartement Math\'ematiques, Informatique, Automatique,
10 Avenue \'Edouard-Belin, BP 54032-31055, Toulouse CEDEX 4, France. Email: jeremie.bigot@isae.fr},  \quad DMIA/ISAE - Universit\'e de Toulouse \\
  {\em Benjamin.Charlier}\footnote{Institut de Math\'ematiques de Toulouse,
Universit\'e de Toulouse et CNRS (UMR 5219),
31062 Toulouse, Cedex 9, France. Email: benjamin.charlier@math.univ-toulouse.fr}, \quad IMT/CNRS (UMR 5219) - Universit\'e de Toulouse }
\begin{document}
\maketitle

\begin{abstract}
We consider the problem of estimating a mean planar curve from a set of $J$ random planar curves observed on a $k$-points deterministic design. We study the consistency of a smoothed Procrustean mean curve when the observations obey a deformable model including some nuisance parameters such as random translations, rotations and scaling. The main contribution of the paper is to analyze the influence of the dimension $k$ of the data and of the number $J$ of observed configurations on the convergence of the smoothed Procrustean estimator to the  mean curve of the model. Some numerical experiments illustrate these results.
\end{abstract}

\noindent \emph{Keywords :} Shape spaces; Perturbation models; Consistency; Procrustes Means; Nonparametric inference; High-dimensional data; Linear smoothing.

\noindent\emph{AMS classifications:} Primary 62G08; secondary 62H11.

\subsubsection*{Acknowledgements}

The authors acknowledge the support of the French Agence Nationale de la Recherche (ANR) under reference ANR-JCJC-SIMI1 DEMOS.

\section{Introduction}

In this paper, we are interested in the statistical treatment of random planar curves observed through rigid deformations of the plane. In many fields of interest, such data appear as, for instance, contours extracted from digital images or level sets of a real function defined on the plane. In handwriting recognition problems one typically compares curves that describe letters, digits or signatures 
and the acquisition process often create some ambiguity of location, size and orientation. 
Our aim is then to study an estimation procedure for a mean curve from a sample of $J$ noisy and discretized planar curves observed through translation, rotations and scaling. The group generated by this set of transformations is usually called \emph{similarity group} of the plane and in the sequel, an element of this group will be called a \emph{deformation}. 



\subsection{A deformable model coming from statistical shape analysis}\label{part.Intro.1}

\paragraph{Deformable model} In many practical cases of interest, data  are collected through a computer device such as a digital camera or an image scanner. In order to take this into account in our model, we assume that observations at hand  are  discretized versions of continuous planar curves. Each observation $\Y$ is then given by a set of $k$ points of the plane called a \emph{configuration}. It can be written as a $k\times 2$ real matrix
\[
\Y = \begin{pmatrix} Y_{1}  \\ \vdots \\ Y_{k} \end{pmatrix} \in\R^{k\times 2},
\]
where $Y^{\vphantom{(1)}}_{\ell} = \big(Y^{(1)}_\ell,Y_\ell^{(2)} \big) \in \R^2$ with  $\ell=1,\ldots,k$. A degenerated configuration is a configuration composed of $k$ times the same point $b = (b^{(1)},b^{(2)}) \in \R^2$. In tensorial notation, such a configuration is written $\1_k \otimes b\in\Sh$ where $\1_k$ denotes the (column) vector of $\R^{k}$ with all entries equal to one and $\otimes$ denotes the tensor product. From now on, we assume that the observations satisfy the following functional regression model for  $j = 1,\ldots,J$,
\begin{equation}\label{eq:perturbmodel}
\Y_j = e^{a_{j}^*}(\f + \bzeta_{j} )R_{\alpha_{j}^*} +  \1_k \otimes b_{j}^*,  \quad \mbox{ with } R_{\alpha_{j}^*} = \begin{pmatrix} \cos(\alpha_{j}^*) &  -\sin(\alpha_{j}^*) \\   \sin(\alpha_{j}^*) & \cos(\alpha_{j}^*) \end{pmatrix},
\end{equation}
where the unknown mean pattern $\f \in \R^{k\times 2}$ has been obtained by sampling on an equi-spaced design a planar curve $f =(f^{(1)},f^{(2)}): [0,1] \longrightarrow \R^2$  satisfying $f(0) =f(1)$. It means that we have 
\[
	\f = \big(f(\tfrac{\ell}{k})\big)_{\ell=1}^k := \big(f^{(1)}(\tfrac{\ell}{k}),f^{(2)}(\tfrac{\ell}{k})\big)_{\ell=1}^k \in \Sh.
\] 
The error terms $\bzeta_{j} \in\R^{k\times 2},\ j=1,\ldots,J$ are independent copies of a random perturbation $\bzeta$ in $\R^{k\times 2}$ with zero expectation. For  $j=1,\ldots,J$, the scaling, rotation and translation parameters $(a_j^*,\alpha_j^*,b_j^*) \in  \R \times [-\pi,\pi[ \times \R^2$ are independent and identically distributed (i.i.d)  random variables independent of the random perturbations $\bzeta_{j}$.

	Model \eqref{eq:perturbmodel} is a deformable model in the sens of \cite{BC11} and our aim is to estimate the underlying mean curve $f:[0,1] \longrightarrow \R^2$ from $\Y_1,\ldots,\Y_J\in \Sh$. More precisely, we study the influence on the estimation procedure of the number $J$ of configurations at hand and the number $k$ of discretization points composing the configurations $\Y_j$'s. Note that the model \eqref{eq:perturbmodel} has been introduced by Goodall in \cite{MR1108330} but with a fixed number $k$ of points in each configuration and non-random nuisance parameters $(a^*_j,\alpha^*_j,b^*_j)$'s. This framework has been highly popular in the statistical shape community, 
see \cite{MR1436569,MR1618880,MR1225013}. Note that the configuration $\f$ was called a population mean in \cite{MR1108330} or a perturbation mean in \cite{HuckemannSJS}. 

\paragraph{Shape of a configuration.} Since the seminal work of Kendall \cite{MR737237}, one considers that the shape of a configuration $\Y\in\Sh$ is ``what remains when translations, rotations and scaling are filtered out''
. More precisely, two configurations $\Y_1,\Y_2 \in \R^{k\times 2}$ are said to have the same shape if there exists a vector $(a,\alpha,b)\in \R\times[-\pi,\pi[\times\R^2$ such that 
\begin{equation}\label{eq.deform}
\Y_2 = e^{a}\Y_1 R_{\alpha} + \1_k \otimes b,\quad \mbox{ with } R_{\alpha_{}} = \begin{pmatrix} \cos(\alpha_{}) &  -\sin(\alpha_{}) \\   \sin(\alpha_{}) & \hphantom{-}\cos(\alpha_{}) \end{pmatrix},
\end{equation}
The Kendall's Shape Space is the quotient space modulo this equivalent relation. It is usually denoted by $\bSigma_2^k$ and defined as the set of normalized (\ie centered and scaled to size one) configurations quotiented by the rotation of the plane, see part \ref{part:ShapeSpace} for further details. 

The above definition can be trivially extended to the case of planar curves by replacing the configuration $\f\in\Sh$ by the planar curve $f:[0,1] \longrightarrow \R^2$. Thence, we call \emph{shape} of $f$ the set of 
planar curves  that can be written $ e^a f R_\alpha + \1 \otimes b$ for some $(a,\alpha,b)\in\R\times[-\pi,\pi[\times \R^2$.

\subsection{Estimation of the mean curve and of the shape of the mean curve}

\paragraph{Consistency in the Shape Space.} In shape analysis, an important issue is the computation of a sample mean shape from a set of $J$ random planar configurations $\Y_{1},\ldots,\Y_{J}\in \Sh$ satisfying model \eqref{eq:perturbmodel} and the study of its consistency as the number of samples  $J$ goes to infinity ($k$ remaining fixed).  
According to Goodall \cite{MR1108330} a sample mean pattern $\hat \f\in\Sh$ computed from $\Y_{1},\ldots,\Y_{J}$  is said to be consistent if, as $J\to\infty$, it has asymptotically the same shape than the mean pattern $\f$. 
In this framework, the deformations parameters are considered as \emph{nuisance parameters} that  contain no informations. That is why the data are first normalized (\ie centered and scale to unit size) without changes in the statistical analysis. The study of consistent procedures to estimate the shape of the mean pattern $\f$ using this approach has been considered by various authors \cite{MR1436569,MR1618880,MR1225013,HuckemannSJS}. In this setting,  sample mean patterns obtained by a Procrustes procedure have received a special attention. In particular, it is shown in \cite{MR1436569,MR1618880} that, in the very specific case of isotropic perturbations  $\bzeta$, the so-called full and partial Procrustes sample means are  consistent estimators of the shape of $\f$. Nevertheless, these estimators can be inconsistent for non-isotropic  perturbations. Therefore, it is generally the belief that consistent statistical inference based on Procrustes analysis is  restricted to very limited assumptions on the distribution of the data, see also \cite{MR1646114,MR1891212,HuckemannSJS} for further discussions. 


\paragraph{Estimation of the mean curve.} Our approach differ from the one developed in Shape analysis in two main aspects.  First, we assume that the unknown mean pattern $\f \in \R^{k\times 2}$ in model \eqref{eq:perturbmodel} has been obtained by  sampling a planar curve $f : [0,1] \longrightarrow \R^2$ on an equi-spaced design. This allows us to study the influence of the number $k$ of points composing each configuration on the estimation of $f$. Note that if $f$ is sufficently regular and $k$ is large, to estimate $\f\in\Sh$ (\ie the values of $f$ on the design) roughly amounts to estimate $f$. 
The second difference concerns the randomness of the deformation parameters $(a^*_j,\alpha^*_j,b^*_j)\in\R\times[-\pi,\pi[\times \R^2$, $j=1,\ldots,J$, that we want to estimate rather than to consider as nuisance parameters. Indeed, the values of the deformations parameters can be informative in some cases : assume that the size of the data are very similar except for one. This difference in size may be due to a (possibly) relevant factor and if the data are normalized as in Shape analysis this 
	information is lost.  

	Under a suitable smoothness assumption on $f$, we are able to estimate consistently,  with an asymptotic in $k$ only, the curve  $f_j := e^{a_j^*} f R_{\alpha_j^*} + \1\otimes b_j^* $ from $\Y_j$ for each $j=1,\dots,J$. 
	Note that, by definition,  the $f_j$'s  have the same shape as $f$. When $J$ is fixed, this smoothing step allows us to estimate consistently the deformations parameters $(a^*_j,\alpha_j^*,b_j^*)_{j=1}^J$ with a Procrustes matching step. Now, if we assume that the deformations parameters $(a^*_j,\alpha_j^*,b_j^*) $ are centered random variables, we show that it is possible to recover the \emph{true} mean shape $f$ when $\min\{J,k\}\to \infty$. These results are rigorously stated in the next section, see Theorem \ref{theo:main} below. 




\subsection{Main contribution}
 
Our estimating procedure is composed of two steps. First, we perform  a dimension reduction step by projecting the data into a low-dimensional space of $\R^{k\times 2}$ to eliminate the influence of the random perturbations  $\bzeta_{j}$.  Then, in a second step, we apply  Procrustes analysis in this low-dimensional space  to obtain a consistent estimator of  $\f\in\Sh$.

The reduction dimension step is based on an appropriate smoothness assumptions on $f$. Let $L>0$, $s > 0$ and define the Sobolev ball of radius $L$ and degree $s$ as
\begin{equation}\label{eq:H}
H_s(L) = \Big\{f = (f^{(1)},f^{(2)})\in L^2([0,1],\R^2), \ \sum_{m\in\Z} (1 + \sabs{m}^2)^s (\sabs{ c_m(f^{(1)})}^2 +\sabs{ c_m(f^{(2)})}^2  ) <L \ )  \Big\}
\end{equation}
where $c_m(f)= (c_m(f^{(1)}), c_m(f^{(2)}) ) = \Big(\int_{0}^1 f^{(1)}(t) e^{-i2\pi mt}dt,\int_{0}^1 f^{(2)}(t) e^{-i2\pi mt}dt \Big)\in\C^2$ is the $m$-th Fourier coefficient of $f=(f^{(1)},f^{(2)})\in L^2([0,1],\R^2)$, for $m\in\Z$.
\begin{hyp}\label{ass.f}
	The curve $f:[0,1]\longrightarrow \R^2$ is closed, \ie $f(0)=f(1)\in\R^2$, and belongs to $H_s(L)$ for some $L>0$ and $s> \frac{3}{2}$. Moreover, the $k\times2$ matrix $\f= \big(f(\tfrac{\ell}{k})\big)_{\ell=1}^{k}$ is of rank two, \ie the configuration $\f$ is not degenerated.
\end{hyp}
\noindent Assumption \ref{ass.f} implies that $f$ is not reduced to a point, continuously differentiable and is equal to its Fourier series. We introduce the following  $k \times k$  matrix
\begin{equation}\label{eq.Alambda}
\A^{\lambda} = \bigg( \frac{1}{k} \sum_{0 \leq |m| \leq \lambda}   e^{i2\pi m\tfrac{\ell-\ell'}{k}} \bigg)_{\ell,\ell'=1}^{k}. 
\end{equation}
The matrix  $\A^{\lambda}$ is  the smoothing matrix corresponding to a discrete Fourier low pass filter with frequency cutoff $\lambda\in\N$. It is a projection matrix in a sub-space $\BV^\lambda$ of $\R^{k}$ of dimension $2 \lambda+1$. Then, we project the data on $\BV^{\lambda}\times\BV^{\lambda}\subset \R^{k \times 2}$, and we estimate the scaling, rotation and translation parameters in model  \eqref{eq:perturbmodel} using M-estimation as follows: denote the scaling parameters by $\Aa = (a_1,\ldots,a_J)\in\R^J$, the rotation parameters by $\balpha = (\alpha_1,\ldots,\alpha_J)\in\R^J$ and the translation parameters by $\b= (b_1,\ldots,b_J) \in\R^{2J}$, and introduce the functional,
\begin{equation}\label{def.Mlambda}
M^{\lambda}(\Aa,\balpha,\b) =\frac{1}{Jk} \sum_{j=1}^J \bigg \|  e^{-a_{j}}\A^{\lambda} (\Y_{j}- \1_k\otimes b_{j} ) R_{-\alpha_{j}}- \frac{1}{J} \sum_{j'=1}^Je^{-a_{j'}}\A^{\lambda} (  \Y_{j'} - \1_k\otimes b_{j'} ) R_{- \alpha_{j'}}\bigg\|^2_{\Sh},
\end{equation}
where  $\snorm{\cdot}_{\Sh}$ is the standard Euclidean norm in $\Sh$. An M-estimator of
$$
\big(\Aa^*,\balpha^*,\b^*\big) = \big( a^{*}_1,\ldots,a^{*}_J, \alpha^{*}_1,\ldots,\alpha^{*}_J,b^{*}_1 ,\ldots, b^{*}_J \big) \in\R^{J}\times[-\pi,\pi[^{J}\times \R^{2J}
$$
is  given by
\begin{equation}\label{def.estimateurs}
(\hat \Aa^{\lambda} , \hat \balpha^{\lambda}, \hat \b\vphantom{\b}^{\lambda}) \in\argmin_{(\Aa,\balpha,\b)\in \bTheta_0 } M^{\lambda}(\Aa,\balpha,\b),
\end{equation}
where $\big(\hat \Aa^{\lambda} , \hat \balpha^{\lambda}, \hat \b\vphantom{\b}^{\lambda}\big) = \big(\hat a^{\lambda}_1,\ldots,\hat a^{\lambda}_J, \hat \alpha^{\lambda}_1,\ldots,\hat \alpha^{\lambda}_J,\hat b\vphantom{\b}^{\lambda}_1 ,\ldots,\hat b\vphantom{\b}^{\lambda}_J \big)\in\R^{J}\times[-\pi,\pi[^{J}\times \R^{2J} $  
and  
\begin{equation}\label{eq:Theta0}
\bTheta_0 = \bigg\{ (\Aa,\balpha,\b) \in [-A,A]^{J} \times [-\AA,\AA]^{J}\times \R^{2J} : \sum_{j=1}^{J} a_{j} =  0,\ \sum_{j=1}^{J} \alpha_{j} = 0 \mbox{ and } \sum_{j=1}^{J} b_{j} = 0  \bigg\},
\end{equation}
with $A,\AA>0$  being parameters whose values will be discussed   below. 

Finally, the mean pattern $\f$ is estimated by the following smoothed Procrustes mean
\begin{equation}\label{eq.defflambda}
\hat \f\vphantom{\f}^{\lambda} = \frac{1}{J} \sum_{j=1}^{J} e^{-\hat a^{\lambda}_{j}} \left(\A^{\lambda}\Y_{j} - \1_k\otimes \hat b_{j}^{\lambda} \right)R_{-\hat \alpha_{j}^{\lambda}}.
\end{equation}

To analyze the convergence of the estimator $\hat \f\vphantom{\f}^{\lambda}$ to the mean pattern $\f$, let us introduce some regularity conditions on 
the covariance structure of the random variable $\bzeta$ in $\Sh$. Let $\tilde \bzeta = (\bzeta^{(1)},\bzeta^{(2)}) = (\zeta^{(1)}_1,\ldots,\zeta_k^{(1)},\zeta^{(2)}_1 ,\ldots,\zeta^{(2)}_k ) \in\R^{2k}$ be the vectorized version  of $\bzeta=(\zeta^{(1)}_\ell,\linebreak[1]\zeta^{(2)}_\ell)_{\ell=1}^k\in\Sh$. 
\begin{hyp}\label{ass.zeta}
 The random variable $\tilde \bzeta$ is a centered Gaussian vector  in $\R^{2k}$ with covariance matrix $\bSigma_k\in\R^{2k\times 2k}$. 
 Let $\gamma_{\max}(k)$ be the largest eigenvalue of $\bSigma_k$. Then,
$$
\lim_{k\to\infty} \gamma_{\max}(k) k^{-\frac{2s}{2s+1}} =0,
$$
where $s$ is the smoothness parameter  defined in Assumption \ref{ass.f}.
\end{hyp}
For example, the isotropic Gaussian error model corresponds to $\bSigma_k = Id_k$ and $\gamma_{\max}(k) = 1$ satisfies Assumption \ref{ass.zeta}. If there exists correlations terms (\ie non-zero off-diagonal entries in $\bSigma_k$), then the level of  the perturbation $\tilde \bzeta$  has to be sufficiently small. A simple model is the case where  $\bSigma_k = \left[S(\abs{\ell - \ell'})\right]_{\ell,\ell'=1}^k$ for some function $S:\R\longrightarrow \R$ satisfying $\int_\R \abs{S(t)}dt <+\infty$ implying that $\gamma_{\max}(k)\leq \sum_{\ell\in\Z} |S(\ell)|$ (see Lemma 4.11 in \cite{Toep}) and thus Assumption \ref{ass.zeta} is again satisfied. 
The following theorem is the main result of the paper.

\begin{theorem} \label{theo:main}
Consider model \eqref{eq:perturbmodel}  and suppose that Assumptions \ref{ass.f} and \ref{ass.zeta} hold. Suppose also that the random variables $(\Aa^*,\balpha^*,\b^*)$ are bounded  and belong to $[-\tfrac{A}{2},\tfrac{A}{2}]^J\times [-\tfrac{\AA}{2},\tfrac{\AA}{2}]^J \times [-B,B]^{2J}$ for some $0<A,B$ and $0<\AA<\frac{\pi}{4}$ and that $ \lambda(k)= k^{\frac{1}{2s+1}} $. 
\begin{itemize}
	\item 	For any $J\geq 2$ there exists $\f_{\bTheta_0} = e^{a_0} \f R_{\alpha_0} + \1_k \otimes b_0$ for some $(a_0,\alpha_0,b_0)\in\R\times [-\pi,\pi[ \times \R^2$ and a function $V_1(k,x)$ such that for any $x>0$,
			\begin{align} \label{eq.thmain1}
				\P\left(  \frac{1}{k}\snorm{\hat \f\vphantom{\f}^\lambda -  \f_{\bTheta_0}}^2_{\Sh} \geq V_1(k,x) \right) \leq e^{-x}, 
			\end{align}
			with $V_1(k,x) \to 0$ when $k\to\infty$ and $x$ remains fixed.
		\item Suppose, in addition, that the random variables $(\Aa^*,\balpha^*,\b^*)$ have zero expectation  in $[-\tfrac{A}{2},\tfrac{A}{2}]^J\times [-\tfrac{\AA}{2},\tfrac{\AA}{2}]^J \times [-B,B]^{2J}$ with $A,\AA<0.1$. Then, there exists  a function $V_2(J,k,x)$ such that for any $x>0$,
			\begin{align}\label{eq.thmain2}
				\P \left(\frac{1}{k}\snorm{\hat \f\vphantom{\f}^\lambda-\f}^2_{\Sh}\geq V_2(J,k,x) \right) \leq e^{-x}, 
			\end{align}
			where $V_2(J,k,x) \to 0$ when $\min\{J,k\} \to \infty$ and $x$ remains fixed. 
	\end{itemize}
%
\end{theorem}

Statement \eqref{eq.thmain1} means that, under mild assumptions on the covariance structure of the error terms $\bzeta_j$, it is possible to consistently estimate the shape of the mean curve $f$ when the number of observations $J$ is fixed and the number $k$ of discretization points  increases. Note  that $(a_0,\alpha_0,b_0)$  depends on $J$ and is given by formula \eqref{eq.g0} in Section \ref{part.ident}. The function $V_1(k,x)$ is explicitly given  in Section \ref{part.consfunc}. To obtain statement \eqref{eq.thmain2}, we assume the condition $A,\AA<0.1$ which  means  that the random scaling and rotations in model \eqref{eq:perturbmodel} are not too large. Also, it is assumed that random scaling, rotations and translations have zero expectation, meaning that the deformations parameters in model \eqref{eq:perturbmodel} are centered around the identity. Then, under such assumptions, statement \eqref{eq.thmain2} shows that one can consistently estimate the true mean curve $f$ when both the sample size $J$ and the number of landmarks $k$ go to infinity. Again, the function $V_2(J,k,x)$ is explicitly given  in Section \ref{part.consfunc}. These results are consistent with those obtained in \cite{BC11}, where we have studied the consistency of Fréchet means in deformable models for signal and image processing.

\subsection{Organization of the paper}

In Section \ref{part.defgroup}, we recall some properties on the similarity group of the plane, and we describe its action on the mean pattern $\f$. Then, we discuss General Procrustes Analysis (GPA) and we compare it to our approach. In Section  \ref{part.ident} we discuss some identifiability issues in model \eqref{eq:perturbmodel}. The estimating procedure is described in detail in Section \ref{part.EstimProc}. Consistency results are given in Section \ref{part.consistency}. Some  experiments in Section \ref{part.num} illustrate the numerical performances of our approach. All the proofs are gathered in a technical appendix.

\section{Group structure and Generalized Procrustes Analysis}
 \label{part.defgroup}

\subsection{The similarity group}
 \label{part.SIM}

 \paragraph{Group action} First let us introduce some  notations and definitions that will be useful throughout the paper. The similarity group of the plane is the group $(\G,.)$ generated by isotropic scaling, rotations and translations.  The identity element in $\G$ is denoted $e$ and the inverse of $g\in\G$ is denoted by $g^{-1}$. We parametrize the group $\G$ by a scaling parameter $a \in \R$, an angle $\alpha \in [-\pi,\pi[$ and a translation $b \in \R^{2}$, and  we make no difference between $g\in\G$  and its parametrization $(a,\alpha,b)\in\R \times [-\pi,\pi[ \times \R^2$. For all $g_1 = (a_1,\alpha_1,b_1),g_2 = (a_2,\alpha_2,b_2) \in \R\times [-\pi,\pi[ \times \R^2 $ we have
\begin{align}
g_1.g_2  = (a_1,\alpha_1,b_1).(a_2,\alpha_2,b_2) & = (a_1+a_2,\alpha_1+\alpha_2,e^{a_1} b_2 R_{\alpha_1} +b_1), \nonumber\\
g_1^{-1}  = (a_1,\alpha_1,b_1)^{-1} & =   (-a_1,-\alpha_1, - e^{-a_1} b_1 R_{-\alpha_1} ), \label{eq.loicomp}\\
 e & =  (0,0,0).\nonumber
\end{align}
The action of $\G$ onto $\R^{k\times 2}$ is given by the mapping $(g,\x) \longmapsto  g.\x := e^{a} \x R_{\alpha} + \1_{k} \otimes b,$ for  $g = (a,\alpha,b) \in \G$ and $\x \in \R^{k \times 2}$.  Note that we use the same symbol ``$.$'' for the composition law of $\G$ and its action on $\R^{k\times 2}$. This action can also be defined on $L^2([0,1],\R^2)$ by replacing $\f$ by $f\in L^2([0,1],\R^2)$. Coming back to the final dimensional case,  let 
$$
\one_{k\times 2} = \1_{k} \otimes \R^2
$$
 be the two dimensional linear subspace of $\R^{k \times 2}$ consisting of degenerated configurations, \ie configurations composed of $k$ times the same landmarks.  The orthogonal subspace $\one^{\perp}_{k\times 2}$ is the set of centered configurations. We have the orthogonal decomposition
$
\R^{k\times 2} = \one^\perp_{k\times 2} \oplus \one^{\phantom{\perp}}_{k\times 2},
$
and for any configuration $\x\in\Sh$ we write $\x = \x_0 + \bar\x \in \one^\perp_{k\times 2} \oplus \one^{\phantom{\perp}}_{k\times 2}$. We call $\x_0$ the centered configuration of $\x$ and $\bar \x = \1_k\otimes\Big(\frac{1}{k}\sum_{\ell=1}^k x^{(1)}_\ell, \frac{1}{k}\sum_{\ell=1}^k x^{(2)}_\ell\Big)$ the degenerated configuration associated to $\x$, see Figure \ref{fig.orbitG} for an illustration.

\paragraph{Orbit, stabilizer and section}
Given a configuration $\x$ in $\R^{k\times 2}$, the \emph{orbit} of $\x$ is defined as the set  
$$
\G.\x := \{g.\x, \ g\in \G\} \subset \R^{k\times 2}.
$$
This set is also called the shape of $\x$.
The orbit of any degenerated configuration is the entire subspace $\one_{k\times 2}$.  Note also that the linear subspace $\one_{k\times 2}$ is stable by the action of $\G$, and that the action of $\G$ on $\one_{k\times 2}$ is not free, meaning that for any $\bar\x\in\one_{k\times 2}$ the equality $g_1.\bar\x = g_2.\bar\x$ does not imply that $g_1 = g_2$. Now, if $\x \in \Sh \setminus \one_{k\times 2}$ is a non-degenerated configuration of $k$ landmarks, its orbit $\G.\x$ is a sub-manifold  of  $\R^{k\times 2}\setminus\one_{k\times 2}$ of dimension $\dim(\G)=4$.  

Given a configuration $\x\in\R^{k\times2}$, the \emph{stabilizer} $I(\x)$ is the closed subgroup of $\G$ which leaves $\x$ invariant, namely
$$
I(\x) = \{g\in \G :  \ g.\x=\x \}. 
$$
If $\bar\x = \1_k \otimes (\bar x^{(1)}, \bar x^{(2)}) \in\one_{k\times 2}$ is a degenerated configuration, its stabilizer is non trivial and is equal to $I(\bar\x) =\{ (a, \alpha, (\bar x^{(1)},\bar x^{(2)}) - e^{a}(\bar x^{(1)}, \bar x^{(2)}) R_{\alpha}),\ a\in\R,\ \alpha\in[-\pi,\pi[ \}$.  If $\x\in \Sh \setminus \one_{k\times 2}$ is a non-degenerated configuration, its stabilizer $I(\x)$ is trivial, \ie is reduced to the identity $\{e\}$. The action of $\G$ is said  free if the stabilizer of any point is reduced to the identity. Hence, the action of $\G$ is free on the set of non-degenerated configurations of $k$-ads in $\R^2$.

	A \emph{section} of the orbits of $\G$ is a subset of $\Sh$ containing a unique element of each orbit.
A well-known example of section for the similarity group acting on $\Sh\setminus \one_{k\times 2}$ is the so-called Bookstein's coordinates system (see e.g.\ \cite{MR1646114} p. 27).

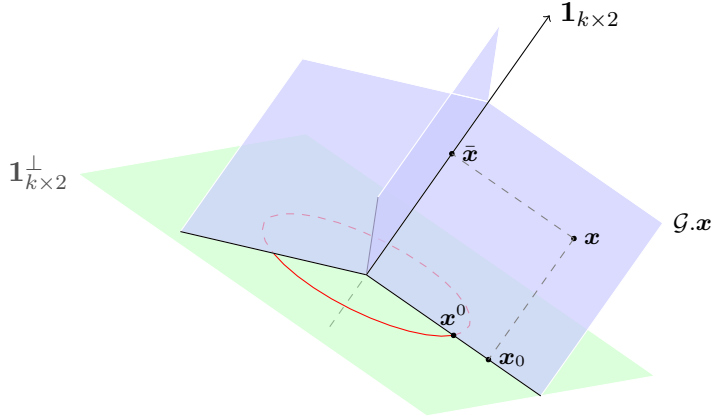
\begin{figure}
\begin{center}
\begin{tikzpicture}[scale=2.8,rotate=-35]
  
  \draw[dashed] (0,-.3,0) --(0,0,0);
  
  \coordinate (a) at (-1, 0,1 ); 
  \coordinate (b) at(1,0,1) ;
  \coordinate (c) at (1,0,-1);
  \coordinate (d) at (-1,0,-1);
  \draw[draw=white,fill=green!20,fill opacity=.7]  (a)node[left]{$\one_{k\times 2}^\perp$} -- (b)-- (c)-- (d)--cycle;
  
  \coordinate (a) at (-.5,1,0.866 ); 
  \coordinate (b) at(-.5,0,0.866 ) ;
  \coordinate (c) at (-.5,1,-0.866 );
  \coordinate (d) at (-.5,0,-0.866 );

  \draw [dashed,red]   (  0.5000     ,     0  ,        0)--
  (  0.4879    ,      0 ,   -0.1093)--
  (  0.4522    ,      0 ,   -0.2134)--
  (  0.3946    ,      0 ,   -0.3071)--
  (  0.3179    ,      0  ,  -0.3860)--
  (  0.2258    ,      0 ,   -0.4461)--
  (  0.1227    ,      0 ,   -0.4847)--
  (  0.0138    ,      0 ,   -0.4998)--
  ( -0.0959    ,      0 ,   -0.4907)--
  ( -0.2008    ,      0 ,   -0.4579)--
  ( -0.2961    ,      0 ,   -0.4029)--
  ( -0.3771    ,      0 ,   -0.3284)--
  ( -0.4397   ,       0  ,  -0.2380)--
  ( -0.4811    ,      0  ,  -0.1361)--
  ( -0.4992   ,       0  ,  -0.0275)--
  ( -0.4932    ,      0 ,    0.0823)--
  ( -0.4632    ,      0  ,   0.1882)--
  ( -0.4109   ,       0  ,   0.2849)--
  ( -0.3386   ,       0  ,   0.3679)--
  ( -0.2500   ,       0  ,   0.4330);

\draw [red] (     0.5000     ,    0    ,     0)--
  (  0.4910     ,    0  ,  0.0946)--
  (  0.4642    ,     0  ,  0.1858)--
  (  0.4206   ,      0  ,  0.2703)--
  (  0.3619  ,       0  ,  0.3450)--
  (  0.2900        , 0  ,  0.4073)--
  (  0.2077       ,  0  ,  0.4548)--
  (  0.1179      ,   0  ,  0.4859)--
  (  0.0238     ,    0  ,  0.4994)--
  ( -0.0712    ,     0   , 0.4949)--
  ( -0.1635   ,      0  ,  0.4725)--
   (-0.2500  ,       0   , 0.4330);
   
   \draw[draw=white,fill=blue!20,fill opacity=.7] (0,0)--(0,1)--(c)--(d)--cycle;\draw[thin]  (0,0,0)--(d);
   \draw[draw=white,fill=blue!20,fill opacity=.7] (0,0)--(0,1)--(a)--(b)--cycle; \draw[thin]  (0,0,0)--(b);
   \draw[draw=white,fill=blue!20,fill opacity=.7] (0,1)--(1,1)--(1,0)--(0,0)--cycle; \draw[thin](0,0,0)--(1,0,0);

\path (1,1) node[right]{{\footnotesize $\G.\x$}}; 
 
 \draw[white,thin] (c) -- (d);
  \draw[white,thin] (0,1,0) --(1,1,0);
  \draw[white,thin] (0,1,0) --(-.5,1,0.866 ); 
  \draw[white,thin] (0,1,0) --(c);
  \draw[->] (0,0,0) --(0,1.5,0) node[right]{$\one_{k\times 2}$};   


\draw[fill=black] (.7,.7 , 0 ) node[right]{{\footnotesize $\x$}} circle (.3pt);
\draw[fill=black] (0,.7 , 0 ) node[right]{{\footnotesize $\bar\x$}} circle (.3pt);
\draw[fill=black] (.7,0, 0 ) node[right]{{\footnotesize $\x_0$}} circle (.3pt);
\draw[fill=black] (.5,0, 0 ) node[above]{{\footnotesize $\x^0$}} circle (.3pt);
\draw[gray,dashed] (.7,.7 , 0 ) --(.7,0 , 0 );
\draw[gray,dashed] (.7,.7 , 0 ) --(0,.7 , 0 );

\end{tikzpicture}
\end{center}
\caption{ Three orbits of the action of the similarity group $\G$ are represented in blue. The space of centered configurations $\one_{k\times 2}^\perp$ is the green plane. The preshape sphere $\S_2^k$ is the red circle. For a particular $\x\in\Sh$, the centered version is $\x_0$ and the centered and normalized version is $\x^0$. The degenerated configuration associated to $\x$ is $\bar \x$. }\label{fig.orbitG}
\end{figure}

\subsection{Kendall's shape space and Generalized Procrustes analysis}
\label{part:ShapeSpace}

\paragraph{Shape space}

Let $\x \in \R^{k\times2} \setminus \one_{k\times 2}$ be a non-degenerated  configuration.  Let $H =Id_k - \frac{1}{k}\1_k\1_k'$ be a centering matrix. The effect of translation can be eliminated by centering the configuration $\x$ using the matrix $H$  (see \cite{MR1646114} for other centering methods), while the effect of isotropic scaling is removed by projecting the centered configuration on a unit sphere, which yields to the so-called pre-shape $\x^0$ of  $\x$ defined as
$$
\x^0 = \frac{H \x}{\| H \x \|}_{\Sh} \in \R^{k \times 2}.
$$
Consider now the pre-shape sphere defined by  
$
\S^{k}_{2} := \left\{\x^0, \ \x \in \R^{k \times 2} \setminus \one_{k\times 2} \right\}
$ and see Figure \ref{fig.orbitG} for an illustration. Note that this normalization of the planar configurations amounts to choose a section for the action of the group generated by the translation and scaling in the plane.  The Kendall's shape space  is then defined as the quotient of $\S^{k}_{2}$ by the group $\mathbf{SO}(2)$ of rotations of the plane, namely 
$$
\bSigma_{2}^{k} := \S^{k}_{2} / \mathbf{SO}(2) = \left\{ [\x^{0}] : \x^{0} \in \S^{k}_{2} \right\} \mbox{ with } [\x^{0}] = \left\{ \x^{0} R_{\alpha}, \alpha \in [-\pi,\pi[ \right\}.
$$
The space $\bSigma_{2}^{k}$ can be endowed with a Riemannian structure and we refer to \cite{MR1891212} for a detail discussion on its geometric properties. 

Let us briefly recall the definition of the so-called partial and full Procrustes distances on $\S^{k}_{2}$. The partial Procrustes distance is defined on the pre-shape sphere $\S_{2}^{k}$ as
$$
d_P^2(\x^0,\y^0) =  \inf_{\alpha\in[-\pi,\pi[} \snorm{ \x^0 - \y^0 R_\alpha}^2_{\Sh}, \qquad \x,\y\in\S_{2}^{k}.
$$
Hence, it is the (Euclidean) distance between the orbits $[\x^0] = \mathbf{SO}(2).\x^0$ and $ [y^0] = \mathbf{SO}(2).\y^0$ with $\x^0,\y^0\in\S_{2}^{k} $. Let now $\mathcal{H}$ be the group of transformations of the plane generated by scaling and rotations. The action of $h\in\mathcal{H}$ on the centered configuration $\x^0$ is defined as
$
h.\x^{0} := e^{a}  \x^{0} R_{\alpha}
$ where $h =(a,\alpha)\in\R\times [-\pi,\pi[$. The Full Procrustes distance is then defined as
$$
d_F^2(\x^0,\y^0) =  \inf_{h\in\mathcal H} \snorm{ \x^0 - h.\y^0 }^2_{\Sh}, \qquad \x,\y\in\S_{2}^{k}.
$$

\paragraph{Generalized Procrustes analysis} 
The full Procrustes sample mean $\hat{\Y}_{F}$ of $\Y_{1},\ldots,\Y_{J}$ (see e.g.\  \cite{MR1108330,MR1646114}) is defined by 
\begin{equation}\label{eq:Procmean}
	\hat{\Y}_{F} = \argmin_{\x^0\in\S_{2}^{k}} \sum_{j=1}^J d_F^2(\Y^0_j,\x^0).
\end{equation}
The partial Procrustes mean $\hat \Y_P$ is defined in the same way by replacing $d_F$ by $d_P$ in \eqref{eq:Procmean}. Thence, this two Procrustes means are Fréchet mean either on $(\S^k_2,d_F)$  or $(\S^k_2,d_P)$ endowed with the empirical measure $\mu_J = \sum_{j=1}^J \delta_{\Y_j}$.

In practice, there are several way to compute the full Procrustes mean. In \cite{MR1175661} the author used complex coordinates and expressed the full Procrustes mean as the biggest eigenvalue of a symmetrical positive definite complex matrix, see \cite{MR1646114} result 3.2 and \cite{bhat}. The full Procrustes mean $\hat\Y_F$ can also be approximated by General Procrustes procedure which amounts to use the following identity
\begin{equation}
\hat{\Y}_{F} = \frac{1}{J} \sum_{j=1}^{J} \hat{h}_{j} \cdot \Y_{j}^{0} \label{def:fullmean}
\end{equation}
where $\hat{h}_{1},\ldots,\hat{h}_{J}$ are the  argmins of the functional 
$M(h_1,\ldots,h_J) = \frac{1}{J} \sum_{j=1}^{J}   \big\| h_{j}.\Y_{j}^{0} - \frac{1}{J} \sum_{j'=1}^{J}   h_{j'} . \Y_{j'}^{0} \big\|^{2}_{\Sh}$ subject to the constraint $\| \frac{1}{J} \sum_{j=1}^{J}  h_{j}.\Y_{j}^{0} \|_{\Sh}^{2} = 1$. The configurations $\hat h.\Y_j$'s are known as Full Procrustes fits and the $\hat h$ can be explicitly computed by using a singular value decomposition. 
In practice, one can use the iterative General Procrustes algorithm to compute  $\hat\Y$, see \cite{MR1646114} pages 90-91. 

\subsection{Discussion on the double asymptotic setting and comparison with GPA}

\paragraph{Asymptotic settings} When random planar curves (such as digits or letters for instance) are observed, a natural framework for statistical inference is an asymptotic setting in the number $J$ of curves. This setting means that increasing the number of curves  at hand should help to compute a more accurate empirical mean curve. Unfortunately, consistency results of Procrustes type procedures are reduced to the very specific case of an isotropic perturbation. In this paper, we show that increasing the number $k$ of discretization points will ensure a consistent estimation of a mean shape in more general cases. 

Consider model \eqref{eq:perturbmodel} where $k>2$ is fixed and 
the random perturbation $\bzeta$ is isotropic, (see Proposition 1 in \cite{MR1618880} for a precise definition of isotropy for random variables belonging to $\R^{k \times 2}$). In this framework, it has been proved in \cite{MR1436569} that the functional $M_J(\x^0) = \frac{1}{J} \sum_{j=1}^J d_F^2(\Y_j,\x^0)$ defined on $\Sh$ converge uniformly in probability to the functional $M(\x^0) = \E(d^2_F(\Y_1,\x^0))$ which admits a unique minimum at $[\f^{0}]$. These two facts imply that $[\hat{\Y}_{F}]$ converges almost surely to $[\f^{0}]$ as $J \to + \infty$. In \cite{MR1618880} the author used a Fréchet mean approach to show the consistency of $\hat\Y_F$ with a  slightly more general kind of noise. Finally, note that the estimator $\hat \Y_P$ defined above is also studied in \cite{MR1436569,MR1618880} with similar consistency results.

When the random perturbation $\bzeta$ in model \eqref{eq:perturbmodel} is non-isotropic, it has been argued in \cite{MR1436569} that the Procrustes estimator $\hat\Y_F$ can be arbitrarily inconsistent when the signal-to-noise ratio decrease. The heuristic presented by the authors suggests that the main phenomenon that prevent the Procrustes estimator to be consistent is the  fact 
 that the functional $M$ do not attain its minimum at $[\f^0]$. In section 4.2 of \cite{HuckemannSJS} the author makes this remark clear as he gives an explicit example : given a mean pattern $\f^0$, the idea is to increase the level of noise $\bzeta$ until the argmin of the functional $M$ which was initially equals to $[\f_0]$ jumps abruptly to another point. This phenomenon seems to be linked to the geometry of the sphere and properties of the Fréchet mean.

\paragraph{Comparison with GPA} Hence, it is commonly the belief that Procrustes sample means can be inconsistent  when considering convergence in $\bSigma_{2}^{k}$ and  the asymptotic setting $J \to + \infty$. Nevertheless, the above discussion suggests that a sufficient condition to ensure the consistency of Procrutres type estimators is to control the level of non-isotropic noise. That is why we introduced a pre-smoothing step that takes advantage of increasing the number $k$ of points composing each configuration in order to ensure more general consistency results. Therefore, our approach and GPA share some similarities. They are both based on the estimation of scaling, rotation and translation parameters by a Procrustean procedure which leads to the M-estimators \eqref{def.estimateurs} which is related to \eqref{eq:Procmean}. To compute a sample mean shape, this M-estimation step is then followed by a standard empirical mean in $\RR^{k \times 2}$ of the aligned  data using these estimated deformation parameters, see equations \eqref{eq.defflambda} and  \eqref{def:fullmean}.

However, one of the main differences between the approach developed in this paper and GPA is the choice of the normalization of the data. In GPA, the deformation parameters $\hat{h}_{1},\ldots,\hat{h}_{J} $ are computed so that the full Procrustes sample mean $\hat{\Y}_{F}$ belongs to  the  pre-shape sphere  $\S^{k}_{2}$, see the   constraint appearing in \eqref{eq:Procmean}. Therefore, the computation of $ \hat{h}_{1},\ldots,\hat{h}_{J} $ is somewhat independent of any assumption on the true parameters $(\Aa^*,\balpha^*,\b^*)$ in model \eqref{eq:perturbmodel}. In this paper, to ensure the well-posedness of the problem \eqref{def.estimateurs}, we chose to compute the estimator $( \hat\Aa^\lambda,\hat\balpha^\lambda,\hat\b\vphantom{\b}^\lambda)$ by minimizing the matching criterion \eqref{def.Mlambda} on the constrained set $\bTheta_0$. The choice of the constraints in $\bTheta_0$ is motivated by the hypothesis that the true deformation parameters $(\Aa^*,\balpha^*,\b^*)$   in \eqref{eq:perturbmodel} are centered around the identity. 

\section{Identifiability conditions}
\label{part.ident}

Recall that in model \eqref{eq:perturbmodel}, the random deformations acting on the the mean pattern $\f$ are parametrized by a vector $(\Aa^*,\balpha^*,\b^*) = (a^*_1,\ldots,a^*_J,\alpha^*_1,\ldots,\alpha^*_J,b^*_1,\ldots,b^*_J)$ in $\R^J\times[-\pi,\pi[^J\times\R^{2J}$. 
\begin{hyp}\label{hyp.Theta}
Let $0<A,B$ and $0<\AA<\pi$  be three real numbers. The deformation parameters $(\Aa^*_j,\balpha^*_j,\b^*_j)$, are i.i.d random variables with zero expectation and and taking their values in 
$$
\Theta^*   = \left[-\frac{A}{2},\frac{A}{2}\right] \times \left[-\frac{\AA}{2},\frac{\AA}{2}\right] \times [-B,B]^{2}.
$$
\end{hyp}
Let
$
\bTheta^* = \left[-\frac{A}{2},\frac{A}{2}\right]^J\times \left[-\frac{\AA}{2},\frac{\AA}{2}\right]^J \times [-B,B]^{2J}.
$
Under Assumption \ref{hyp.Theta}, we have $(\Aa^*,\balpha^*,\b^*)\in\bTheta^*$. Note that the compactness of $\Theta^*$ (and thus of $\bTheta^*$) is  an essential condition to ensure the consistency of our procedure. Indeed, the estimation of the deformation parameters $(\Aa^*,\balpha^*,\b^*)$ and the mean pattern $\f$  is based on the minimization of the criterion \eqref{def.Mlambda}. If there were no restriction on the amplitude of the scaling parameter, the degenerate solution $a_j = -\infty$ for all $j=1,\ldots,J$ is always a minimizer of \eqref{def.Mlambda}. Therefore, the minimization has to be performed under additional compact constraints.

\subsection{The deterministic criterion $D$}

Let $(\Aa,\balpha,\b)\in\PPP$ and consider  the following criterion,
\begin{align}\label{eq.critD}
D (\Aa,\balpha,\b)
& = \frac{1}{Jk}\sum_{j=1}^J \bigg\| (g_{j}^{-1} .g_{j}^*).\f  -  \frac{1}{J}\sum_{j'=1}^J (g_{j'}^{-1} .g_{j'}^*).\f \bigg\|^2_{\Sh}.
\end{align}
where $g_j=(a_j,\alpha_j,b_j) $ and $g^*_j= (a_j^*,\alpha^*_j,b_j^*)$ for all $j = 1,\ldots,J$. The criterion $D$ is a version without noise of the criterion $M^{\lambda}$ defined at \eqref{def.Mlambda}. The estimation procedure described in Section \ref{part.Intro.1} is based on the convergence of the argmins of $M^{\lambda}$ toward the argmin of $D$ when $k$ 
 goes to infinity. As a consequence, choosing identifiability conditions amounts  to fix a subset $\bTheta_0$ of $\PPP$ on which $D$ has a unique argmin. In the rest of this section, we determine the zeros of $D$, and then we fix a convenient constraint set $\bTheta_0$ that contains a unique point at which $D$ vanishes.
 
  \newlength{\gnat}
  \settowidth{\gnat}{$\quad  g^{-1}_{j}. g_{j}^* = g^{-1}_{j'}. g_{j'}^*,$}
The criterion $D$ clearly vanishes at $(\Aa^*,\balpha^*,\b^*)\in\PPP$. This minimum is not unique since easy algebra implies that 
\begin{alignat*}{1}
D(\Aa,\balpha,\b) = 0  \qquad \Longleftrightarrow &\qquad  (g_{j}^{-1} .g_{j}^*).\f =  (g_{j'}^{-1} .g_{j'}^*).\f,\quad \text{ for all } j,j' = 1,\ldots,J.
\intertext{Suppose now that  $\f\notin\one_{2\times k} $ is a non-degenerated planar configuration. In Section \ref{part.SIM}, we have seen that the action of $\G$ on $\f$ is free, that is, the stabilizer $I(\f)$ is reduced to the identity. Thus, we obtain, 
}
D(\Aa,\balpha,\b)= 0\qquad  \Longleftrightarrow &\qquad  g^{-1}_{j}. g_{j}^* = g^{-1}_{j'}. g_{j'}^*, \hspace{-\gnat} \hphantom{\quad  (g_{j}^{-1} .g_{j}^*).\f =  (g_{j'}^{-1} .g_{j'}^*).\f,}\quad \text{ for all } j,j' = 1,\ldots,J  \\
\phantom{D(\Aa,\balpha,\b)= 0 }\qquad  \Longleftrightarrow  &\qquad  \begin{cases} a_{j}^{*}-a_{j} = a_{j'}^{*}-a_{j'} , \\  \alpha_{j}^{*} -\alpha_{j}=\alpha_{j'}^{*} -\alpha_{j'},\\ e^{-a_{j}} (b_{j}^* - b_{j} )R_{-\alpha_{j}} = e^{-a_{j'}} (b_{j'}^* - b_{j'} )R_{-\alpha_{j'}}  \end{cases}  \\  & \hphantom{\qquad  (g_{j}^{-1} .g_{j}^*).\f =  (g_{j'}^{-1} .g_{j'}^*).\f,}\quad \text{ for all } j,j' = 1,\ldots,J
\end{alignat*}
 We have proved the folowing result,
 \begin{lemma}\label{lemma.zeros}
Let $\f\in\R^{k\times 2}$ be a non-degenerated configuration of $k$-ads in the plane, \ie $\f \notin \one_{2\times k}$. Then, $D(\Aa,\balpha,\b)=0$  if and only if $(\Aa,\balpha,\b)$ belongs to the set 
\begin{align*}
(\Aa^*,\balpha^*,\b^*) * \G 
= \Big\{ (\Aa^*,\balpha^*,\b^*)*(a_0,\alpha_0,b_0),\ (a_0,\alpha_0,b_0)\in \R\times[-\pi,\pi[\times\R^2\Big\},
\end{align*}
where $(\Aa^*,\balpha^*,\b^*)* (a_0,\alpha_0,b_0) = (a_{1}^{*}+a_{0},\ldots,a_{J}^{*}+a_{0} ,\alpha_{1}^{*} +\alpha_{0},\ldots,\alpha_{J}^{*} +\alpha_{0} ,e^{a_{1}^*}b_{0}R_{\alpha_{1}^*} +b_1^* ,\ldots ,e^{a_{J}^*}b_{0}R_{\alpha_{J}^*} +b_J^*  ) \in\PPP$.
\end{lemma}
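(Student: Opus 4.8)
The plan is to complete the chain of equivalences started above by recasting, in purely group-theoretic language, the statement that $g_j^{-1}.g_j^* = g_{j'}^{-1}.g_{j'}^*$ for all $j,j'=1,\dots,J$, which (because $\f\notin\one_{2\times k}$ forces the action of $\G$ on $\f$ to be free, \ie $I(\f)=\{e\}$) has already been shown to be equivalent to $D(\Aa,\balpha,\b)=0$. Here $g_j=(a_j,\alpha_j,b_j)$ and $g_j^*=(a_j^*,\alpha_j^*,b_j^*)$, and all the implications leading to this reformulation are in fact equivalences, a point I would record explicitly since it is what gives the converse inclusion for free.

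The first observation is that a pairwise-equality condition over the finite index set $\{1,\dots,J\}$ just says that the element $g_j^{-1}.g_j^*\in\G$ does not depend on $j$. So I would set $g_0:=(g_1^{-1}.g_1^*)^{-1}\in\G$, write $g_0=(a_0,\alpha_0,b_0)$ with $\alpha_0$ reduced modulo $2\pi$ to lie in $[-\pi,\pi[$, and note that $g_j^{-1}.g_j^*=g_0^{-1}$ for every $j$. Multiplying this identity on the left by $g_j$ and on the right by $g_0$, and using associativity of the group law, gives $g_j=g_j^*.g_0$ for all $j$.

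Next I would expand $g_j=g_j^*.g_0$ coordinatewise with the composition formula \eqref{eq.loicomp}, namely $g_j^*.g_0=(a_j^*+a_0,\ \alpha_j^*+\alpha_0,\ e^{a_j^*}b_0R_{\alpha_j^*}+b_j^*)$. Reading off the three coordinates for $j=1,\dots,J$ shows that $(\Aa,\balpha,\b)$ is exactly $(\Aa^*,\balpha^*,\b^*)*(a_0,\alpha_0,b_0)$ in the notation of the statement, so $D(\Aa,\balpha,\b)=0$ implies $(\Aa,\balpha,\b)\in(\Aa^*,\balpha^*,\b^*)*\G$. For the converse, if $(\Aa,\balpha,\b)=(\Aa^*,\balpha^*,\b^*)*(a_0,\alpha_0,b_0)$ for some $(a_0,\alpha_0,b_0)$, then $g_j=g_j^*.g_0$ for every $j$, hence $g_j^{-1}.g_j^*=g_0^{-1}$ is independent of $j$, and running the displayed chain of equivalences backwards yields $D(\Aa,\balpha,\b)=0$.

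I do not expect a genuine obstacle here: once freeness of the $\G$-action on the non-degenerate configuration $\f$ has been invoked, everything else is elementary algebra in the (non-abelian) similarity group. The only points deserving care are keeping track of the order of multiplication so that one lands on $g_j=g_j^*.g_0$ rather than $g_j=g_0.g_j^*$, bringing the angular coordinate $\alpha_j^*+\alpha_0$ back into $[-\pi,\pi[$ so that the image point genuinely lies in $\PPP$, and stating explicitly that every step in the earlier derivation — in particular the passage through $I(\f)=\{e\}$ — is reversible.
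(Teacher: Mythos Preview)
Your proposal is correct and follows essentially the same approach as the paper: the paper derives the chain of equivalences down to the coordinatewise conditions $a_j^*-a_j=a_{j'}^*-a_{j'}$, $\alpha_j^*-\alpha_j=\alpha_{j'}^*-\alpha_{j'}$, and $e^{-a_j}(b_j^*-b_j)R_{-\alpha_j}=e^{-a_{j'}}(b_{j'}^*-b_{j'})R_{-\alpha_{j'}}$, which is exactly your statement that $g_j^{-1}.g_j^*$ is independent of $j$. Your introduction of $g_0$ and the explicit rewriting as $g_j=g_j^*.g_0$ simply makes transparent the identification with the $*$-operation that the paper leaves implicit when it declares ``We have proved the following result''.
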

\begin{remark} \label{rem.orbit}
Lemma \ref{lemma.zeros} is simpler than it appears. By reordering the entries of the vector  $(\Aa^*,\balpha^*,\b^*)$ there is an obvious correspondence between $(\Aa^*,\balpha^*,\b^*)\in\bTheta^*$ and $(g_1^*,\ldots,g_J^*)\in\G^J$ \textit{via} the parametri\-{}zation of the similarity group defined in Section \ref{part.SIM}. Hence, Lemma \ref{lemma.zeros} tells us that the criterion $D$ vanishes for all the vectors $(\Aa,\balpha,\b)\in\PPP$ corresponding to   the subset of the group $\G^J$ given by 
$$
(g_1^*,\ldots,g_J^*) * \G = \{ (g_1^*.g_0, \ldots, g^*_J.g_0 ),\ g_0 \in \G \} \subset \G^J.
$$
The ``$*$'' notation is nothing else than the right composition by a same $g_0\in\G$ of all the entries of a $(g_1,\ldots,g_J)\in\G^J$. Hence the subset $(g_1^*,\ldots,g_J^*) * \G$ can be interpreted as the orbit of $(g_1^*,\ldots,g_J^*)\in\G^J$ under the (right) action of $\G$. Indeed, $\G$ acts naturally by (right) composition on the all the coordinates of an element of $\G^J$. 
\end{remark}

\subsection[The constraint set]{The constraint set $\bTheta_0$}

By Lemma \ref{lemma.zeros}, the set $\bTheta_0$ must intersect at a unique point,  say $(\Aa^*_{\bTheta_0},\balpha^*_{\bTheta_0},\b^*_{\bTheta_0})$, each set $(\Aa^*,\balpha^*,\b^*) * \G$.  It is convenient to choose $\bTheta_0$ to be of the form $\bTheta_0 = \PPP \cap \LL_0$ where $\LL_0$ is a linear space of  $\R^{4J}$. The linear space  $\LL_0$ must be chosen so that for any $(\Aa^*,\balpha^*,\b^*)$ in $\bTheta^*$, there exists a unique point $(\Aa^*_{\bTheta_0},\balpha^*_{\bTheta_0},\b^*_{\bTheta_0})$ in $\bTheta_0$ that can be written as   $(\Aa^*_{\bTheta_0},\balpha^*_{\bTheta_0},\b^*_{\bTheta_0}) = (\Aa^*,\balpha^*,\b^*)* (a_0,\alpha_0,b_0)$  for some $(a_0,\alpha_0,b_0)\in\R\times[-\pi,\pi[\times\R^2$. 

\begin{remark} 
As we have seen in Remark \ref{rem.orbit}, the set $(\Aa^*,\balpha^*,\b^*) * \G$ can be interpreted as an orbit of the action of $\G$ on $\G^J$. In this terminology, the set $\bTheta_0$ can be viewed as a  section of the orbits. Indeed, the section is the set of  representatives $(\Aa^*_{\bTheta_0},\balpha^*_{\bTheta_0},\b^*_{\bTheta_0})$ of each orbit. See Figure \ref{fig:identif2} for an illustration.
\end{remark}

Let us consider a choice of $\bTheta_0$ motivated by the fact that, under Assumption \ref{hyp.Theta}, the random deformation parameters have zero expectation. In this setting, it is natural to impose that  the estimated deformation parameters sum up to zero by choosing $\LL_0 = \one_{4J}^\perp$, which is the orthogonal of the linear  space $\one_{4J}=\1_{4J}.\R \subset\R^{4J}$. Such a choice leads to the set $ \bTheta_0$ defined equation \eqref{eq:Theta0} i.e.\
$$
\bTheta_0 = \{(\Aa,\balpha,\b)\in\Theta^J, \  (a_1+ \ldots +a_J,\alpha_1+ \ldots + \alpha_J,b_1+ \ldots +b_J)=0 \}.
$$
Now, let us show that for any $(\Aa^*,\balpha^*,\b^*)\in\bTheta^*$ there exists a unique $(\Aa^*_{\bTheta_0},\balpha^*_{\bTheta_0},\b^*_{\bTheta_0}) =(\Aa^*,\balpha^*,\b^*)* (a_0,\alpha_0,b_0)\in \bTheta_0$ for some $(a_0,\alpha_0,b_0)\in\R\times[-\pi,\pi[\times\R^2$. This amounts to solve the following equations 
\begin{equation}\label{eq.centre}
\left\{ \begin{array}{cccccc}
a_1^*+a_0 &+ &\ldots &+&a_J^*+a_0 &=0,\\
\alpha_1^*+\alpha_0 & +& \ldots &+& \alpha_J^*+\alpha_0 &=0,\\
e^{a_{1}^*}b_0R_{\alpha_{1}^*} + b_{1}^* &+&\ldots &+&e^{a_{J}^*}b_0 R_{\alpha_{J}^*}+ b_{J}^*   &= 0.
\end{array} \right.
\end{equation}
After some computations, we obtain that equations \eqref{eq.centre} are satisfied if and only if
\begin{equation}\label{eq.g0}
(a_0,\alpha_0,b_0) = (- \bar\Aa^*, -\bar\balpha^*, - \bar\b^* (\overline{e^{\Aa^*} R_{\balpha^*}} )^{-1} ),
\end{equation}
where $\bar\Aa^* =   \frac{1}{J}\sum_{j=1}^J a_j^*\in\R$, $\bar\balpha^* =   \frac{1}{J}\sum_{j=1}^J \alpha_j^*\in\R$, $\bar\b^* =  \frac{1}{J}\sum_{j=1}^J b_j^*\in\R^2$ and $\overline{e^{\Aa^*} R_{\balpha^*}} = \frac{1}{J}\sum_{j=1}^J e^{a^*_j}R_{\alpha_j^*}$ is a $2\times 2$ invertible matrix.  Therefore, $(\Aa^*_{\bTheta_0},\balpha^*_{\bTheta_0},\b^*_{\bTheta_0})$ is uniquely given by
$$
([\Aa^*_{\bTheta_0}]_j,[\balpha^*_{\bTheta_0}]_j,[\b^*_{\bTheta_0}]_j) = ( a^*_j - \bar \Aa^*, \alpha^*_j -\bar \balpha^*, b^*_j -e^{a_j^*} (\bar \b^* (\overline{e^{\Aa^*} R_{\balpha^*}} )^{-1} )R_{\alpha^*_j}),
$$
for $j=1,\ldots,J$.

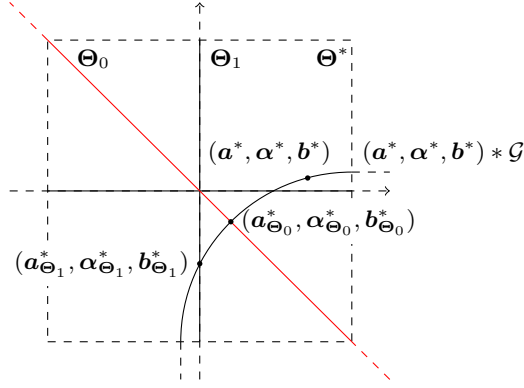
\begin{figure}[]
\begin{center}
\begin{tikzpicture}[scale=0.5]
\draw[dashed] (-4,-4) rectangle (4,4);
\draw[->,dashed] (-5,0) -- (5,0);
\draw[-] (-4,0) -- (4,0);
\draw[->,dashed] (0,-5) -- (0,5);
\draw[-] (0,-4) -- (0,4);
\path (0,3.5) node[right]{{\footnotesize $\bTheta_1$}};

\path (3.5,3.5) node[]{{\footnotesize $\bTheta_{\vphantom{1}}^*$}};

\fill[color=black] (2.839,0.3425)circle (2pt);
\path (1.839,1) node[]{{\footnotesize $(\Aa^*,\balpha^*,\b^*)$}};
 \path  (0,-1.93) node[left,fill=white]{{\footnotesize $(\Aa^*_{\bTheta_1},\balpha_{\bTheta_1}^*,\b_{\bTheta_1}^*)$}};
\path (.818,-.818) node[right,fill=white]{{\footnotesize $(\Aa^*_{\bTheta_0},\balpha_{\bTheta_0}^*,\b_{\bTheta_0}^*)$}};

\draw[red] (4,-4) --(-4,4);
\path (-3.5,3.5) node[right,black]{{\footnotesize $\bTheta_0$}}; 
\draw[red,dashed] (5,-5) -- (4,-4);
\draw[red,dashed] (-4,4) -- (-5,5);
\draw[dashed] (4,0.5) -- (5,.5);
\draw[dashed] (-.5,-4) -- (-.5,-5);

\draw[color=black] (4,0.5) arc (90:180:4.5);
\path (4,1)node[right]{{\footnotesize $(\Aa^*,\balpha^*,\b^*) * \G$}} ;

\fill[color=black] (0,-1.93) circle (2pt);
\fill (.818,-.818)circle (2pt);
\end{tikzpicture}
\end{center}
\caption{Choice of identifiability conditions when $J=2$.  }\label{fig:identif2}
\end{figure}
\begin{remark} Another possible approach is to fix, say the first observation as a reference, meaning that the criterion $D$ could  be optimized on the following subspace of $\PPP$ 
$$
\bTheta_{1}= \{(\Aa,\balpha,\b)\in [-A,A]^J\times [-\AA,\AA]^{2J}\times \R^2, \  (a_1,\alpha_1,b_1) = 0\}. 
$$
With such a choice, for any $(\Aa^*,\balpha^*,\b^*)\in \Theta^J$, the $j$-th coordinate of  $(\Aa^*_{\bTheta_1},\balpha^*_{\bTheta_1},\b^*_{\bTheta_1}) = (\Aa^*,\balpha^*,\b^*)* (a_0,\alpha_0,b_0)$ is given by 
$$
([\Aa^*_{\bTheta_1}]_j,[\balpha^*_{\bTheta_1}]_j,[\b^*_{\bTheta_1}]_j)= (a_j^* - a^*_1, \alpha_j^* - \alpha_1^* , b_j^* -e^{a_j^*- a^*_1}b^*_1 R_{\alpha_j^*-\alpha_1^*}),
$$
where $(a_0,\alpha_0,b_0) = (a_1^*,\alpha_1^*,b_1^*)^{-1} = ( -a_1^*, -\alpha_1^*, -e^{-a_1^*}b_1^*R_{-\alpha_1^*}) $. A graphical illustration of the choice of identifiability conditions for $J=2$ is given in Figure \ref{fig:identif2}.
\end{remark}

\section{The estimating procedure}
\label{part.EstimProc}
\subsection{A dimension reduction step}
\label{part:RedDim}

We use Fourier filtering to project the data into a low-dimensional space as follows. Assume for convenience that $k$ is odd. For $\x = (\x_{1}',\ldots,\x_{k}')' \in \R^{k\times 2}$   and  $m= - \frac{k-1}{2},\ldots,\frac{k-1}{2}$, let 
\begin{align*}
c_m(\x)  =   \sum_{\ell=1}^{k} \x_{\ell}  e^{- i 2\pi m \tfrac{\ell}{k} }
=  \left( \sum_{\ell=1}^{k} x^{(1)}_{\ell} e^{- i 2\pi m \tfrac{\ell}{k} },  \sum_{\ell=1}^{k} x^{(2)}_{\ell} e^{- i 2\pi m \tfrac{\ell}{k} } \right) \in \C^2 \mbox{ with } \x_\ell = (x_\ell^{(1)}, x^{(2)}_{\ell}),
\end{align*}
be the $m$-th (discrete) Fourier coefficient of $\x$. Let $\lambda \in \{1,\ldots, \frac{k-1}{2}\}$ be a smoothing parameter, and define for each $\Y_{j}$ the smoothed shapes
$$
\hat \f\vphantom{\f}^{\lambda}_{j} =  \bigg( \frac{1}{k} \sum_{0 \leq |m| \leq \lambda}  c_m(\Y_{j}) e^{i2\pi m \frac{\ell}{k} } \bigg)_{\ell=1}^{k} = \A^{\lambda} \Y_{j} \in \R^{k \times 2}.  
$$
 In Section \ref{part.SIM}, we have shown that the similarity group is not free on the subset $\one_{k\times2}$ of degenerated configurations composed of $k$ identical landmarks, see Section \ref{part.SIM}. That is why we are going to treat separately the subspace $\one_{k\times2}$ and $\one^{\perp}_{k\times2}$ by considering the matrices
\begin{equation}
\bar A = \frac{1}{k} \1_k \1_k' \qquad\mbox{ and } \qquad A^{\lambda}_0 = \bigg( \frac{1}{k} \sum_{0 < |m| \leq \lambda}   e^{i2\pi m\tfrac{\ell-\ell'}{k}} \bigg)_{\ell,\ell'=1}^{k}. \label{eq:Alambda}
\end{equation}
Remark that $\bar A$ is a projection matrix on the one dimensional sub-space $\bar \bV :=\1_k.\R= \{ c \1_k : c \in \R\}$ of $\R^{k}$.   The matrix $A^{\lambda}_0$ is a projection matrix in a (trigonometric) sub-space $\bV^{\lambda}_0$ of dimension $2 \lambda$. Note that it is included in the linear space $\bar\bV^{\perp} = \{ x \in \R^{k} : x'\1_k = 0 \}$. Hence, $\bV^{\lambda}_0\times\bV^{\lambda}_0 $ is a linear subspace of $\one_{k\times 2}^\perp$ which is the space of the centered configurations. We have,
\begin{equation}\label{eq:decomp}
\A^{\lambda} =A^{\lambda}_{0} + \bar A \qquad \text{ and } \qquad \BV^{\lambda} = \bV^{\lambda}_0 \oplus \bar \bV.
\end{equation}
Thus, we can write the smoothed shape $\hat \f^{\lambda}_{j}$ as 
$$
\hat \f^{\lambda}_{j} =\A^{\lambda}\Y_{j} =  A^{\lambda}_{0}\Y_{j} + \bar A \Y_{j} \in (\bar \bV\times \bar \bV) \oplus (\bV^{\lambda}_0\times \bV^{\lambda}_0)
$$
where $\bar\bV\times \bar \bV = \one_{k\times 2}$ and  $\bV^{\lambda}_0\times \bV^{\lambda}_0\subset  \one_{k\times 2}^{\perp}$. In other words,  $  A^{\lambda}_{0}\Y_{j}$ is the smoothed centered configuration associated to $\Y_j$ and $\bar A \Y_{j}$ is the degenerated configuration given by the Euclidean mean of the $k$ landmarks composing $\Y_j$. Finally, remark that the low pass filter and the action of similarity group commute, that is, we have for all $g\in\G$ and $\f\in\Sh$
$$
 g. (\A^\lambda \f) = e^{a} \A^\lambda \f R_{\alpha} + \1_k\otimes b =  \A^\lambda (e^{a} \f R_{\alpha}+ \1_k\otimes b) = \A^\lambda  (g. \f).
$$

\subsection{Estimation of the deformation parameters}

Recall that  the estimator $(\hat\Aa^{\lambda},\hat\balpha^{\lambda},\hat\b\vphantom{\b}^{\lambda})$ of $(\Aa^*,\balpha^*,\b^*)$ is defined by the optimization problem \eqref{def.estimateurs}. 
Nevertheless, as it is suggested by the discussion of Sections \ref{part.SIM} and \ref{part:RedDim}, one can carry out the estimation process in two steps. First, we estimate the rotation and scaling parameters on the space $\bV^{\lambda}_0 \times \bV^{\lambda}_0 \subset \one_{k\times 2}^\perp$ of the centered configurations. We then use these estimators to estimate the translation parameters which act on $\bar\bV \times\bar\bV = \one_{k\times2}$. Note that this procedure is equivalent to the optimization problem \eqref{def.estimateurs} as shown by Lemma \ref{lemme.argmin} below.

\paragraph{Estimation of rotations and scaling.} Define 
$$
M^{\lambda}_0(\Aa,\balpha) = \frac{1}{Jk} \sum_{j=1}^J \bnorm{  e^{-a_{j}}A^{\lambda}_0 \Y_{j} R_{-\alpha_{j}}- \frac{1}{J} \sum_{j'=1}^J e^{-a_{j'}}A^{\lambda}_0 \Y_{j'}  R_{- \alpha_{j'}}}^2_{\Sh}.
$$
Let $\1_J^{\perp} = \{(a_1,\ldots,a_J)\in\R^J, a_1+\ldots,a_J=0 \} $ and consider the  space
$
\bTheta_{0}^{\Aa,\balpha}  =  ([-A,A]^J \cap\1_J^{\perp}) \times ([-\AA,\AA]^J \cap \1_J^{\perp}).
$
Then, estimators of the rotation and scaling parameters are given by
\begin{equation}\label{eq.estimRotScal}
(\hat \Aa^{\lambda},\hat \balpha^{\lambda}) \in\argmin_{(\Aa,\balpha)\in\bTheta_{0}^{\Aa,\balpha} } M^{\lambda}_0(\Aa,\balpha).
\end{equation}

\paragraph{Estimation of translations.} Now that we have computed  estimators of the rotation and scaling parameters, let us define the  criterion,
$$
\bar M(\Aa,\balpha,\b) =\frac{1}{Jk} \sum_{j=1}^J \bnorm{  e^{- a_{j}}\bar A (\Y_{j} - \1_k\otimes b_j  )R_{-\alpha_{j}}- \frac{1}{J} \sum_{j'=1}^J e^{- a_{j'}}\bar A (\Y_{j'} - \1_k\otimes b_{j'}  )  R_{- \alpha_{j'}}}^2_{\Sh}
$$
and the  space
$
\bTheta_{0}^{\b} = \{(b_1^{(1)},\ldots,b_J^{(1)},b_1^{(2)},\ldots,b^{(2)}_J)\in\R^{2J}, b_1^{(1)}+\ldots+b_J^{(1)} = b_1^{(2)}+\ldots+b_J^{(2)} = 0 \}.
$
The estimator of the translation parameters is then given by,
\begin{equation}\label{eq.defhatb}
\hat\b\vphantom{\b}^\lambda = \argmin_{\b\in\bTheta_0^{\b}} \bar M (\hat \Aa^{\lambda},\hat \balpha^{\lambda} ,\b).
\end{equation}
We emphasis that the estimators of the translation parameters depend on the estimated rotation and scaling parameters. It is shown in the proof of Lemma \ref{lemme.argmin} below that we have an explicit expression of $\hat\b\vphantom{\b}^\lambda$ given by 
\begin{equation}\label{eq.blambda}
\hat\b\vphantom{\b}^\lambda = (\bar\Y_1-e^{a_1} \bar\Y (\overline{e^{\Aa} R_{\balpha}})^{-1} R_{\alpha_1},\ldots,\bar\Y_J-e^{a_J} \bar\Y (\overline{e^{\Aa} R_{\balpha}})^{-1} R_{\alpha_J})
\end{equation}
where  $\overline{e^{\Aa} R_{\balpha}} = \frac{1}{J} \sum_{j=1}^J e^{a_j}R_{\alpha_j} \in\R^{2\times 2}$,  $\bar \Y_j =  \frac{1}{k}\sum_{\ell=1}^k [\Y_j]_\ell \in \R^2$, that is $\bar A \Y_j = \1_{k} \otimes \bar\Y_j \in\R^{k\times 2}$ is a degenerated configuration, and $\bar \Y = \frac{1}{J}\sum_{j=1}^J \bar\Y_j \in\R^{2}$.

\bigskip

This two steps procedure is equivalent to the optimization problem \eqref{def.estimateurs} as we have the following decompositions $M^{\lambda}(\Aa,\balpha,\b) = M^{\lambda}_0(\Aa,\balpha) + \bar M(\Aa,\balpha,\b)$ and $ \bTheta_0 = \bTheta_{0}^{\Aa,\balpha} \times \bTheta_{0}^{\b}$, implying   following result (see the Appendix for a detailed proof)
\begin{lemma}\label{lemme.argmin}
$
{ (\hat \Aa^{\lambda},\hat \balpha^{\lambda},\hat\b\vphantom{\b}^{\lambda}) \in \argmin_{( \Aa,\balpha,\b)\in\bTheta_0}\limits M^{\lambda}(\Aa,\balpha,\b)}
\Longleftrightarrow\begin{cases}
(\hat \Aa^{\lambda},\hat \balpha^{\lambda}) &\in\argmin_{(\Aa,\balpha)\in\bTheta_{0}^{\Aa,\balpha} }\limits M^{\lambda}_0(\Aa,\balpha)  \\ \hfill \hat\b\vphantom{\b}^\lambda  & = \argmin_{\b\in\bTheta_0^{\b}}\limits \bar M (\hat \Aa^{\lambda},\hat \balpha^{\lambda} ,\b).\end{cases}
$
\end{lemma}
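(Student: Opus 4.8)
The plan is to establish the equivalence stated in Lemma~\ref{lemme.argmin} by exploiting the two orthogonal-decomposition facts announced just above its statement: that $M^{\lambda}(\Aa,\balpha,\b) = M^{\lambda}_0(\Aa,\balpha) + \bar M(\Aa,\balpha,\b)$ and that $\bTheta_0 = \bTheta_{0}^{\Aa,\balpha} \times \bTheta_{0}^{\b}$ (as a Cartesian product of the $(\Aa,\balpha)$-constraints with the $\b$-constraints). First I would verify the additive decomposition of the criterion: since $\A^{\lambda} = A^{\lambda}_0 + \bar A$ with $A^{\lambda}_0$ a projection onto $\bV^{\lambda}_0\times\bV^{\lambda}_0\subset\one_{k\times2}^\perp$ and $\bar A$ a projection onto $\one_{k\times2}$, and since these two ranges are orthogonal in $\R^{k\times2}$, the term $e^{-a_j}\A^{\lambda}(\Y_j-\1_k\otimes b_j)R_{-\alpha_j}$ splits as an orthogonal sum of its $A^{\lambda}_0$-part and its $\bar A$-part. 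The key observation is that $A^{\lambda}_0(\1_k\otimes b_j) = 0$ because $\1_k\otimes b_j\in\one_{k\times2}$ is orthogonal to the range of $A^{\lambda}_0$ (equivalently, $c_m(\1_k) = 0$ for $0<|m|\le\lambda$), so the $A^{\lambda}_0$-part does not depend on $\b$ and produces exactly $M^{\lambda}_0(\Aa,\balpha)$; the $\bar A$-part produces $\bar M(\Aa,\balpha,\b)$. Applying the Pythagorean identity inside each summand of \eqref{def.Mlambda} (the averaged vector also splits orthogonally) then gives the claimed decomposition $M^{\lambda} = M^{\lambda}_0 + \bar M$.

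Next I would record that the two pieces live on separable coordinate blocks: $M^{\lambda}_0$ depends only on $(\Aa,\balpha)$, and the constraint set $\bTheta_0$ factors as $\bTheta_{0}^{\Aa,\balpha}\times\bTheta_{0}^{\b}$. Therefore for any fixed $(\Aa,\balpha)\in\bTheta_{0}^{\Aa,\balpha}$, minimizing $M^{\lambda}(\Aa,\balpha,\b)$ over $\b\in\bTheta_0^{\b}$ is the same as minimizing $\bar M(\Aa,\balpha,\b)$ over $\b\in\bTheta_0^{\b}$, and the outer minimization over $(\Aa,\balpha)$ then sees the value $M^{\lambda}_0(\Aa,\balpha) + \min_{\b\in\bTheta_0^{\b}}\bar M(\Aa,\balpha,\b)$. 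To make this rigorous and to disentangle the two minimizations cleanly, I would show that $\min_{\b\in\bTheta_0^{\b}}\bar M(\Aa,\balpha,\b)$ is in fact independent of $(\Aa,\balpha)$ — indeed it equals zero, attained at the explicit minimizer \eqref{eq.blambda}. This is the computational heart of the argument: substituting the candidate $\hat\b^{\lambda}$ of \eqref{eq.blambda} into $\bar M$, the degenerated configuration $\bar A(\Y_j-\1_k\otimes b_j) = \1_k\otimes(\bar\Y_j - b_j)$ becomes $\1_k\otimes e^{a_j}\bar\Y(\overline{e^{\Aa}R_{\balpha}})^{-1}R_{\alpha_j}$, so that $e^{-a_j}\bar A(\Y_j-\1_k\otimes b_j)R_{-\alpha_j} = \1_k\otimes \bar\Y(\overline{e^{\Aa}R_{\balpha}})^{-1}$, which does not depend on $j$; hence every summand of $\bar M$ vanishes. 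One must also check that this $\hat\b^{\lambda}$ genuinely lies in $\bTheta_0^{\b}$, i.e.\ that $\sum_j\hat b_j^{\lambda} = 0$, which follows directly from $\sum_j e^{a_j}R_{\alpha_j} = J\,\overline{e^{\Aa}R_{\balpha}}$ and the definition of $\bar\Y$, and that $\overline{e^{\Aa}R_{\balpha}}$ is invertible (it is a nonzero positive multiple of a rotation-sum, positive-definite in the relevant sense since the $\alpha_j$ stay in a small arc — here the constraint $|\alpha_j|\le\AA<\pi/4$ or similar guarantees $\overline{e^{\Aa}R_{\balpha}}$ is invertible). Since $\bar M\ge0$ always, the value $0$ is the minimum, so $\min_{\b\in\bTheta_0^{\b}}\bar M(\Aa,\balpha,\b) = 0$ for every $(\Aa,\balpha)$, and moreover the minimizer is unique because $\b\mapsto\bar M(\Aa,\balpha,\b)$ is a (positive semidefinite) quadratic whose kernel in $\bTheta_0^{\b}$ is trivial.

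Finally I would assemble the equivalence. Forward direction: if $(\hat\Aa^{\lambda},\hat\balpha^{\lambda},\hat\b^{\lambda})$ minimizes $M^{\lambda}$ over $\bTheta_0$, then since $M^{\lambda} = M^{\lambda}_0 + \bar M \ge M^{\lambda}_0 + 0$ with equality achievable, the minimal value is $\min_{(\Aa,\balpha)} M^{\lambda}_0(\Aa,\balpha)$; optimality forces both $\bar M(\hat\Aa^{\lambda},\hat\balpha^{\lambda},\hat\b^{\lambda}) = 0$ — hence $\hat\b^{\lambda}$ is the (unique) minimizer of $\bar M(\hat\Aa^{\lambda},\hat\balpha^{\lambda},\cdot)$ over $\bTheta_0^{\b}$, i.e.\ \eqref{eq.defhatb} — and $M^{\lambda}_0(\hat\Aa^{\lambda},\hat\balpha^{\lambda}) = \min M^{\lambda}_0$, i.e.\ \eqref{eq.estimRotScal}. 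Conversely, if $(\hat\Aa^{\lambda},\hat\balpha^{\lambda})$ minimizes $M^{\lambda}_0$ over $\bTheta_{0}^{\Aa,\balpha}$ and $\hat\b^{\lambda}$ minimizes $\bar M(\hat\Aa^{\lambda},\hat\balpha^{\lambda},\cdot)$ over $\bTheta_0^{\b}$, then for any $(\Aa,\balpha,\b)\in\bTheta_0$ we have $M^{\lambda}(\Aa,\balpha,\b) = M^{\lambda}_0(\Aa,\balpha) + \bar M(\Aa,\balpha,\b) \ge M^{\lambda}_0(\hat\Aa^{\lambda},\hat\balpha^{\lambda}) + 0 = M^{\lambda}(\hat\Aa^{\lambda},\hat\balpha^{\lambda},\hat\b^{\lambda})$, so the triple is a global minimizer. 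The main obstacle, and the only place where any real computation is needed, is the explicit solution of the translation subproblem — verifying formula \eqref{eq.blambda}, checking membership in $\bTheta_0^{\b}$, invertibility of $\overline{e^{\Aa}R_{\balpha}}$, and that $\bar M$ vanishes there; the rest is bookkeeping with orthogonal projections and Pythagoras.
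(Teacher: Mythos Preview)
Your proposal is correct and follows essentially the same route as the paper's own proof: both arguments rest on the orthogonal splitting $M^{\lambda}=M^{\lambda}_0+\bar M$ together with the observation that for each fixed $(\Aa,\balpha)$ the translation criterion $\bar M(\Aa,\balpha,\cdot)$ attains the value $0$ at the explicit point \eqref{eq.blambda}, which lies in $\bTheta_0^{\b}$ and is unique there. Your write-up is in fact more explicit than the paper's in several places (you spell out both implications of the equivalence, check the constraint $\sum_j \hat b_j^{\lambda}=0$, and flag the invertibility of $\overline{e^{\Aa}R_{\balpha}}$, which the paper uses without comment here and only justifies later), but the underlying idea is identical.
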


\section{Consistency results}
\label{part.consistency}
 In what follows, $C,C_0,C_1$ denote positive constants whose value may change from line to line. The notation $C( \cdot)$ specifies the dependency of $C$ on some quantities. 

\subsection{Consistent estimation of the deformation parameters}
\paragraph{Rotation and scaling.} Recall that the rotation and scaling parameters are estimated separately on the smoothed and centered observations. We have the following result,
\begin{theorem}\label{th:rotscal}
Consider model \eqref{eq:perturbmodel} and suppose that Assumptions \ref{ass.f} and \ref{ass.zeta}  hold and that  Assumption  \ref{hyp.Theta} is verified with $\max\{A,\AA\}<0.1$.  Consider $(\hat \Aa^{\lambda},\hat\balpha^\lambda)$ the estimators defined in equation \eqref{eq.estimRotScal}. If $\lambda = k^{\frac{1}{2s+1}}$ then for all $x>0$ we have
\begin{equation}\label{eq.conrotscal}
\P\left( \frac{1}{J}\snorm{(\hat \Aa^{\lambda}, \hat \balpha^{\lambda}) -(\Aa^*, \balpha^*)}^2_{\R^{2J}} \geq C_1(L,s,A,\AA,\f_0) A_1(k,J,x) + C_2(A,\AA) A_2(J,x) \right) \leq 4 e^{-x},
\end{equation}
where $C_1(L,s,A,\AA,\f_0), C_2(A,\AA)  >0 $ are positive constants independent of $k$ and $J$, $A_1(k,J,x)=F\big(k^{-\frac{2s}{2s+1}} \big) + F\big(V_1(k,J,x)\big)$ with $V_1 (k,J,x )= 3\gamma_{\max}(k) k^{-\frac{2s}{2s+1}}\left( 1 + \sqrt{2\frac{x}{Jk}} + 2\frac{x}{Jk} \right)$ and $A_2(J,x) = \linebreak[1] \Big(\tfrac{x}{3J} +\sqrt{\tfrac{2x}{J}} \Big)^2$, where $F(u) = u+\sqrt{u}$, for all $u\geq 0$.
\end{theorem}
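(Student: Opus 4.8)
The argument is a quantitative M-estimation scheme applied to the restricted criterion $M^{\lambda}_0$ on the compact set $\bTheta_0^{\Aa,\balpha}$: I will show that $M^{\lambda}_0$ stays uniformly close, on $\bTheta_0^{\Aa,\balpha}$, to the deterministic criterion $D$ of \eqref{eq.critD} (restricted to rotations and scalings), and combine this with a quadratic lower bound of $D$ around its minimizer. First I would isolate which zero of $D$ lies in the constraint set. Since the scaling–rotation subgroup acts on centered configurations simply by $(a_j,\alpha_j)\mapsto(a_j+a_0,\alpha_j+\alpha_0)$, the unique zero of $D$ in $\bTheta_0^{\Aa,\balpha}$ is the centered version of the truth, namely $(\Aa^*_{\bTheta_0},\balpha^*_{\bTheta_0})=(\Aa^*-\bar\Aa^*\1_J,\balpha^*-\bar\balpha^*\1_J)$ with $\bar\Aa^*=\frac1J\sum_j a^*_j$ and $\bar\balpha^*=\frac1J\sum_j\alpha^*_j$ (this is the rotation–scaling part of \eqref{eq.g0}). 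By the triangle inequality,
\[
\tfrac1J\snorm{(\hat\Aa^{\lambda},\hat\balpha^{\lambda})-(\Aa^*,\balpha^*)}^2_{\R^{2J}}\le \tfrac2J\snorm{(\hat\Aa^{\lambda},\hat\balpha^{\lambda})-(\Aa^*_{\bTheta_0},\balpha^*_{\bTheta_0})}^2_{\R^{2J}}+2\big((\bar\Aa^*)^2+(\bar\balpha^*)^2\big),
\]
and the last term is controlled by Bernstein's inequality applied to the i.i.d.\ bounded centered variables $a^*_j\in[-\tfrac{A}{2},\tfrac{A}{2}]$, $\alpha^*_j\in[-\tfrac{\AA}{2},\tfrac{\AA}{2}]$, which on an event of probability at least $1-2e^{-x}$ bounds it by $C_2(A,\AA)A_2(J,x)$ with $A_2(J,x)=(\tfrac{x}{3J}+\sqrt{\tfrac{2x}{J}})^2$.

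\textbf{Reduction of the M-estimation term.} Both $(\hat\Aa^{\lambda},\hat\balpha^{\lambda})$ and $(\Aa^*_{\bTheta_0},\balpha^*_{\bTheta_0})$ lie in $\bTheta_0^{\Aa,\balpha}$, so I start from the basic inequality $M^{\lambda}_0(\hat\Aa^{\lambda},\hat\balpha^{\lambda})\le M^{\lambda}_0(\Aa^*_{\bTheta_0},\balpha^*_{\bTheta_0})$. Since $A^{\lambda}_0$ commutes with the action of the similarity group and annihilates $\1_k\otimes\R^2$, one has $e^{-a_j}A^{\lambda}_0\Y_j R_{-\alpha_j}=e^{a^*_j-a_j}(A^{\lambda}_0\f+A^{\lambda}_0\bzeta_j)R_{\alpha^*_j-\alpha_j}$, so $M^{\lambda}_0$ depends on the data only through the smoothed signal $A^{\lambda}_0\f$ and the Gaussian noise $A^{\lambda}_0\bzeta_j$. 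Using $D(\Aa^*_{\bTheta_0},\balpha^*_{\bTheta_0})=0$, the usual two-sided bound gives
\[
\kappa\,\tfrac1J\snorm{(\hat\Aa^{\lambda},\hat\balpha^{\lambda})-(\Aa^*_{\bTheta_0},\balpha^*_{\bTheta_0})}^2_{\R^{2J}}\le D(\hat\Aa^{\lambda},\hat\balpha^{\lambda})\le 2\sup_{(\Aa,\balpha)\in\bTheta_0^{\Aa,\balpha}}\babs{M^{\lambda}_0(\Aa,\balpha)-D(\Aa,\balpha)},
\]
provided (i) a curvature bound $D(\Aa,\balpha)\ge\kappa\,\tfrac1J\snorm{(\Aa,\balpha)-(\Aa^*_{\bTheta_0},\balpha^*_{\bTheta_0})}^2_{\R^{2J}}$ holds with $\kappa=\kappa(L,s,A,\AA,\f_0)>0$ independent of $k$ and $J$, and (ii) one controls the supremum; absorbing $\kappa^{-1}$ into the constant then leads to the $C_1(L,s,A,\AA,\f_0)A_1(k,J,x)$ term.

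\textbf{Controlling the supremum.} I decompose $M^{\lambda}_0(\Aa,\balpha)-D(\Aa,\balpha)$ into: a deterministic smoothing-bias term (from replacing $\f_0$ by $A^{\lambda}_0\f$) of order $\tfrac1k\snorm{A^{\lambda}_0\f-\f_0}^2_{\Sh}\lesssim L\lambda^{-2s}\asymp L\,k^{-2s/(2s+1)}$ under Assumption \ref{ass.f} and the choice $\lambda=k^{1/(2s+1)}$; a term linear in the Gaussian noise $A^{\lambda}_0\bzeta_j$; and the Gaussian quadratic form $\tfrac1{Jk}\sum_j\snorm{A^{\lambda}_0\bzeta_j}^2_{\Sh}$ with its centering. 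The quadratic form has expectation of order $\gamma_{\max}(k)\lambda/k\asymp\gamma_{\max}(k)k^{-2s/(2s+1)}$ (as $A^{\lambda}_0$ has rank $2\lambda$ and $\bSigma$ has largest eigenvalue $\gamma_{\max}(k)$) and, by a Laurent–Massart-type deviation inequality, is at most $V_1(k,J,x)$ on an event of probability at least $1-2e^{-x}$; the linear Gaussian terms are bounded by Gaussian concentration, and Cauchy–Schwarz turns them, together with the cross term carrying the smoothing bias, into the square roots of $k^{-2s/(2s+1)}$ and of $V_1(k,J,x)$. This is exactly why the bound appears through $F(u)=u+\sqrt u$, giving $\sup_{\bTheta_0^{\Aa,\balpha}}\babs{M^{\lambda}_0-D}\lesssim F(k^{-2s/(2s+1)})+F(V_1(k,J,x))=A_1(k,J,x)$; a union bound over the relevant events (those for $(\bar\Aa^*,\bar\balpha^*)$ and those for the Gaussian quadratic and linear terms) produces the final $4e^{-x}$.

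\textbf{Main obstacle: the curvature bound (i).} One must show that $D$ grows at least quadratically around its minimizer with a constant that does not degenerate as $k\to\infty$. Setting $\delta_j=a^*_j-a_j$, $\epsilon_j=\alpha^*_j-\alpha_j$, one checks that for $(\Aa,\balpha)\in\bTheta_0^{\Aa,\balpha}$ the quantity $D(\Aa,\balpha)$ equals $\tfrac1k$ times the empirical variance over $j$ of $e^{\delta_j}\f_0 R_{\epsilon_j}\in\R^{k\times2}$, while $\tfrac1J\snorm{(\Aa,\balpha)-(\Aa^*_{\bTheta_0},\balpha^*_{\bTheta_0})}^2_{\R^{2J}}$ is the empirical variance of $(\delta_j,\epsilon_j)$. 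The differential of the orbit map $(\delta,\epsilon)\mapsto e^{\delta}\f_0 R_{\epsilon}$ has orthogonal columns, each of squared Frobenius norm $e^{2\delta}\snorm{\f_0}^2_{\Sh}$, and Assumption \ref{ass.f} forces $\tfrac1k\snorm{\f_0}^2_{\Sh}$ to stay bounded below by a positive constant for $k$ large, so this map is a quantitative immersion; the smallness condition $\max\{A,\AA\}<0.1$ keeps the increments $\delta_j,\epsilon_j$ in a range where this differential varies little along segments between parameter configurations, so a uniform Jacobian lower bound — hence the quadratic lower bound for $D$ with an explicit $\kappa$ — holds on all of $\bTheta_0^{\Aa,\balpha}$ and not merely near the minimizer. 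Obtaining this uniform, dimension-free $\kappa$ and tracking its dependence on $\f_0,A,\AA$ (and on $L,s$ through the bias constant) into $C_1$ is the delicate part; the estimates in (ii), while technical, are routine Gaussian-concentration and Bernstein arguments.
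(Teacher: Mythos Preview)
Your proposal is correct and matches the paper's proof: the same triangle-inequality split into an M-estimation term (basic inequality, uniform bound on $|M_0^\lambda-D_0|$ via smoothing bias plus a Laurent--Massart bound on the Gaussian quadratic form, and a quadratic lower bound for $D_0$ under $\max\{A,\AA\}<0.1$) and a centering term $(\bar\Aa^*)^2+(\bar\balpha^*)^2$ handled by Bernstein. The only minor differences are that the paper derives the curvature bound by an explicit Taylor expansion of $D_0$ on $\R^{2J}$ (Hessian eigenvalue $\ge e^{-2A}\tfrac{2}{Jk}\snorm{\f_0}^2$ on $\bTheta_0^{\Aa,\balpha}$, third-order remainder $\le 40\max\{A,\AA\}e^{2A}\tfrac{1}{Jk}\snorm{\f_0}^2\snorm{\cdot}^2$), which is the computational counterpart of your Jacobian/immersion argument, and that the linear-in-noise cross term is controlled there purely by Cauchy--Schwarz against the quadratic noise term rather than by a separate Gaussian concentration bound.
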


Remark that a direct consequence of Theorem \ref{th:rotscal} is the consistency of $(\hat \Aa^{\lambda}, \hat \balpha^{\lambda}) $ to $(\Aa^*, \balpha^*)$ when $\min\{k,J\}\to\infty$. Indeed, we have  $\lim_{u \to 0} F(u) = 0$  and under Assumption \ref{ass.zeta} and for any fixed $J\geq 1$ and $x>0$, the term $V_1(k,J,x)$ tends to zero as $k$ goes to infinity. Hence for any $x>0$ and $J\geq 1$, we have 
$$
 \lim_{k\to\infty} A_1(k,J,x) = 0.
$$ 
Under the same hypothesis as in Theorem \ref{th:rotscal} but without the bounds on $A$ and $\AA$, Proposition \ref{prop.ConvAAlpha} then ensures the convergence of $(\hat \Aa^{\lambda}, \hat \balpha^{\lambda})$ to $(\Aa^*_{\bTheta_0}, \balpha^*_{\bTheta_0})$ as $J$ remains fixed and $k\to\infty$. Now, for all $x>0$, we have
$$
\lim_{J\to\infty}  A_2(J,x)= 0.
$$ 
Thus a double asymptotic $\min\{k,J\} \to\infty$ ensures that $\frac{1}{J}\snorm{(\hat \Aa^{\lambda}, \hat \balpha^{\lambda}) -(\Aa^*, \balpha^*)}^2_{\R^{2J}}$ tends to 0 in probability.

\paragraph{Translation parameters.} We have the following result,
\begin{theorem}\label{th:shift}
Consider the hypothesis and notations of Theorem \ref{th:rotscal} and the estimator $\hat \b\vphantom{\b}^{\lambda}$ given by formula \eqref{eq.defhatb}. Then we have for all $x>0$,
\begin{equation}
\label{eq.conshift}
\P\Big(\frac{1}{J} \snorm{\hat \b\vphantom{\b}^{\lambda} - \b^* }^2_{\R^{2J}} \geq C_3(L,s,A,\AA,B,\f) A_3(k,J,x) + C_4(A,\AA,B) A_2(J,x)  \Big) \leq 9e^{-x},
\end{equation}
where $C_3(L,s,A,\AA,B,\f), C_4(A,\AA,B)>0 $ are positive constants independent of $k$ and $J$, $A_3(k,J,x) =   F( k^{-\frac{2s}{2s+1}} ) + F(V_1(k,J,x))+V_2(k,J,x)$ with  
$ V_2(k,J,x)=4\frac{\gamma_{\max}(k)}{k} \big(1 +  \sqrt{2\frac{x}{J}} + \frac{x}{J}\big)$.
\end{theorem}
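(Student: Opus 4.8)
\textbf{Proof plan for Theorem \ref{th:shift}.}

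The plan is to exploit the explicit formula \eqref{eq.blambda} for $\hat\b\vphantom{\b}^\lambda$ together with the analogous closed form for $\b^*_{\bTheta_0}$ (via \eqref{eq.g0}), and to control the discrepancy through the already-established concentration bounds of Theorem \ref{th:rotscal}. Writing $[\hat\b\vphantom{\b}^\lambda]_j = \bar\Y_j - e^{\hat a_j^\lambda}\bar\Y(\overline{e^{\hat\Aa^\lambda}R_{\hat\balpha^\lambda}})^{-1}R_{\hat\alpha_j^\lambda}$ and expanding each $\bar\Y_j$ using the model \eqref{eq:perturbmodel} — so that $\bar\Y_j = e^{a_j^*}(\bar\f + \bar\bzeta_j)R_{\alpha_j^*} + b_j^*$ where $\bar\f,\bar\bzeta_j$ are the landmark averages — one sees that $[\hat\b\vphantom{\b}^\lambda]_j - b_j^*$ is a smooth function of $(\hat\Aa^\lambda,\hat\balpha^\lambda)$, of $(\Aa^*,\balpha^*)$, and of the noise averages $\bar\bzeta_j$. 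The first step is therefore to set up this algebraic decomposition cleanly and to split the error into (i) a \emph{deformation-estimation} term, controlled by $\frac1J\snorm{(\hat\Aa^\lambda,\hat\balpha^\lambda)-(\Aa^*,\balpha^*)}^2$, and (ii) a \emph{noise} term coming from the $\bar\bzeta_j$'s.

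For step (ii), the key observation is that $\bar\bzeta_j = \frac1k\sum_{\ell}[\bzeta_j]_\ell$, so $\snorm{\bar\bzeta_j}^2_{\R^2}$ is a quadratic form in the Gaussian vector $\tilde\bzeta_j$ with operator norm of order $\gamma_{\max}(k)/k$; averaging over $j$ and applying a Hanson–Wright / Gaussian-chaos concentration inequality (the same tool used to produce $V_1$ in Theorem \ref{th:rotscal}) yields a bound of the form $\frac1J\sum_j\snorm{\bar\bzeta_j}^2 \lesssim \frac{\gamma_{\max}(k)}{k}\big(1+\sqrt{2x/J}+x/J\big)$ with probability at least $1-e^{-x}$ — this is precisely the term $V_2(k,J,x)$. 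One also needs that the random matrix $\overline{e^{\hat\Aa^\lambda}R_{\hat\balpha^\lambda}}$ stays uniformly invertible with well-conditioned inverse on the high-probability event; this follows because $\hat\alpha_j^\lambda\in[-\AA,\AA]$ and $\hat a_j^\lambda\in[-A,A]$ with $\AA,A<0.1$, so $\overline{e^{\hat\Aa^\lambda}R_{\hat\balpha^\lambda}}$ is a small perturbation of the identity and its inverse has norm bounded by an absolute constant. Combining these Lipschitz-type estimates (with Lipschitz constants depending on $A,\AA,B$ and on $\snorm{\f}$ through $\bar\f$) with the deformation bound \eqref{eq.conrotscal} from Theorem \ref{th:rotscal}, and taking a union bound over the two bad events (which costs $4e^{-x}+5e^{-x}=9e^{-x}$ after rescaling $x$ by absolute constants), produces \eqref{eq.conshift} with $A_3 = F(k^{-2s/(2s+1)}) + F(V_1) + V_2$ and the stated constants.

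The main obstacle I anticipate is the careful bookkeeping around the matrix inverse $(\overline{e^{\Aa}R_{\balpha}})^{-1}$: it is a nonlinear, non-Lipschitz-in-general function of the estimated parameters, so one must first restrict to the high-probability event on which both $(\hat\Aa^\lambda,\hat\balpha^\lambda)$ is close to $(\Aa^*,\balpha^*)$ and the true parameters lie in the small box $\AA,A<0.1$, establish there a uniform lower bound on the smallest singular value of $\overline{e^{\Aa}R_{\balpha}}$, and only then invoke the local Lipschitz property of matrix inversion. A secondary delicate point is that $\bar\Y$ (the grand mean of landmark averages) couples all observations, so the error does not split cleanly across $j$; one handles this by writing $\bar\Y = \frac1J\sum_j\big(e^{a_j^*}\bar\f R_{\alpha_j^*} + b_j^* + e^{a_j^*}\bar\bzeta_j R_{\alpha_j^*}\big)$ and treating the deterministic, deformation-dependent, and noise contributions separately, using $\snorm{\bar\b^*}\le B$ and boundedness of the scaling to keep every constant under control.
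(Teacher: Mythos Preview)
Your overall strategy---expand $\hat\b\vphantom{\b}^\lambda$ via the explicit formula \eqref{eq.blambda}, plug in the model for $\bar\Y_j$ and $\bar\Y$, and combine Lipschitz control with Gaussian-quadratic-form concentration---is exactly what the paper does. The observations about the conditioning of $(\overline{e^{\Aa}R_{\balpha}})^{-1}$ and the coupling through $\bar\Y$ are both on point.

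There is, however, a genuine gap in your two-way split. Expanding directly,
\[
[\hat\b\vphantom{\b}^\lambda]_j-b_j^*
=\underbrace{\bar\f\bigl(e^{a_j^*}R_{\alpha_j^*}-e^{\hat a_j^\lambda}(\overline{e^{\Aa^*}R_{\balpha^*}})(\overline{e^{\hat\Aa^\lambda}R_{\hat\balpha^\lambda}})^{-1}R_{\hat\alpha_j^\lambda}\bigr)}_{\text{(i) deformation}}
\;-\;\underbrace{e^{\hat a_j^\lambda}\,\bar\b^*\,(\overline{e^{\hat\Aa^\lambda}R_{\hat\balpha^\lambda}})^{-1}R_{\hat\alpha_j^\lambda}}_{\text{(iii) centering}}
\;+\;\text{(ii) noise terms},
\]
where $\bar\b^*=\tfrac1J\sum_j b_j^*$. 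Term (iii) is neither a deformation-estimation term nor a noise term: it does \emph{not} vanish when $(\hat\Aa^\lambda,\hat\balpha^\lambda)=(\Aa^*,\balpha^*)$ and $\bar\bzeta_j=0$; in that limit it equals exactly $[\b^*_{\bTheta_0}]_j-b_j^*$. Your remark ``using $\snorm{\bar\b^*}\le B$ \ldots\ to keep every constant under control'' treats $\bar\b^*$ as a deterministic bound to be absorbed into a Lipschitz constant, but that would leave an $O(1)$ residual in $\tfrac1J\snorm{\hat\b\vphantom{\b}^\lambda-\b^*}^2$. What is actually required is a separate Bernstein-type concentration for the bounded i.i.d.\ mean $\bar\b^*$, giving $\snorm{\bar\b^*}^2\le C(B)\,A_2(J,x)$ with probability $\ge 1-4e^{-x}$; this is precisely the source of the $C_4(A,\AA,B)\,A_2(J,x)$ term in \eqref{eq.conshift} and of four of the nine exponentials (your arithmetic $4+5=9$ does not match your own description, which only produces $4+1$). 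The paper isolates this cleanly by passing through the intermediate point $\b^*_{\bTheta_0}$ via the triangle inequality, bounding $\snorm{\hat\b\vphantom{\b}^\lambda-\b^*_{\bTheta_0}}^2$ by your terms (i)+(ii) (Proposition \ref{prop.ConvShift}, cost $5e^{-x}$) and $\snorm{\b^*_{\bTheta_0}-\b^*}^2$ by Bernstein on $\bar\b^*$ (Lemma \ref{lemme.bernstein2}, cost $4e^{-x}$). Once you add this missing step, your plan goes through.
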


Similar comments to those made after Theorem \ref{th:rotscal} are still valid here. For any $J\in\N$ and $x>0$, we have  
$$
\lim_{k\to\infty} A_3(k,J,x) = 0,
$$ 
since $V_2(k,J,x)$ tends to 0 as $k$ goes to infinity by Assumption \ref{ass.zeta}. Under the same hypothesis as in Theorem \ref{th:shift} but without the bounds on $A$ and $\AA$, Proposition \ref{prop.ConvShift} ensures the convergence in probability of $\b\vphantom{\b}^{\lambda} $ to $\b^*_{\bTheta_0}$  with only an asymptotic in $k$. In the double asymptotic setting $\min\{k,J\} \to \infty$, Theorem \ref{th:shift} ensure the consistency of $\hat\b\vphantom{\b}^\lambda$ to the true value $\b^*$ of the translation parameters.

\subsection{Consistent estimation of the mean shape}
\label{part.consfunc}

\begin{theorem}\label{th:shape}
Consider model \eqref{eq:perturbmodel} and suppose that Assumptions \ref{ass.f} and \ref{ass.zeta} hold and that Assumption \ref{hyp.Theta} is verified with $\max\{A,\AA\}<0.1$. Consider the estimator $\hat \f\vphantom{\f}^\lambda$ defined in \eqref{eq.defflambda} and let $ \lambda = k^{\frac{1}{2s+1}}$. Then, we have for all $x>0$,
$$
\P\bigg( \frac{1}{k}\snorm{\hat \f\vphantom{\f}^\lambda - \f}^2_{\Sh}\geq  C(L,s,A,\AA,B,\f)\Big( A_1(k,J,x) + A_3(k,J,x) + A_2(J,x) \Big) \bigg) \leq 14e^{-x},
$$
where $C(L,s,A,\AA,B,\f)>0$ is a constant independent of $k$ and $J$,    $ A_1(k,J,x)= F\big(k^{-\frac{2s}{2s+1}} \big) + F\big(V_1(k,J,x)\big)$ with $ V_1 (k,J,x )= 3\gamma_{\max}(k) k^{-\frac{2s}{2s+1}}\left( 1 + \sqrt{2\frac{x}{Jk}} + 2\frac{x}{Jk} \right)$, $A_3(k,J,x) =  V_2(k,J,x)+\linebreak[5] F( k^{-\frac{2s}{2s+1}} ) + F(V_1(k,J,x)) $ with $ V_2(k,J,x)=4\frac{\gamma_{\max}(k)}{k} \big(1 +  \sqrt{2\frac{x}{J}} + \frac{x}{J}\big) $ and $A_2(J,x) = \Big(\sqrt{\tfrac{2x}{J}} + \tfrac{x}{3J}\Big)^2$.
\end{theorem}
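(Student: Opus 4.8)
The plan is to write the estimation error $\hat\f\vphantom{\f}^\lambda-\f$ as a sum of three contributions — the error inherited from the estimated deformation parameters, the Fourier truncation bias of the mean pattern, and the fluctuation of the smoothed noise — and to bound each of them using Theorems~\ref{th:rotscal} and~\ref{th:shift} together with classical Fourier approximation theory and Gaussian concentration. First I would rewrite $\hat\f\vphantom{\f}^\lambda$ in terms of the true parameters: setting $g_j^*=(a_j^*,\alpha_j^*,b_j^*)$ and $\hat g_j=(\hat a_j^\lambda,\hat\alpha_j^\lambda,\hat b_j^\lambda)$, formula~\eqref{eq.defflambda} reads $\hat\f\vphantom{\f}^\lambda=\frac1J\sum_j\hat g_j^{-1}.(\A^\lambda\Y_j)$; since the low-pass filter $\A^\lambda$ commutes with the action of $\G$, model~\eqref{eq:perturbmodel} gives $\A^\lambda\Y_j=g_j^*.(\A^\lambda\f+\A^\lambda\bzeta_j)$, and using associativity of the action, with $\x_j:=\A^\lambda\f+\A^\lambda\bzeta_j$ and $\bar\bzeta:=\frac1J\sum_j\bzeta_j$,
\begin{equation*}
\hat\f\vphantom{\f}^\lambda-\f=\frac1J\sum_{j=1}^J\Bigl[(\hat g_j^{-1}.g_j^*).\x_j-\x_j\Bigr]+(\A^\lambda\f-\f)+\A^\lambda\bar\bzeta,
\end{equation*}
a deformation-estimation term, a smoothing-bias term and a noise term, to which I then apply $\snorm{u+v+w}^2_{\Sh}\le 3(\snorm{u}^2_{\Sh}+\snorm{v}^2_{\Sh}+\snorm{w}^2_{\Sh})$.

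For the deformation-estimation term, the composition and inversion laws~\eqref{eq.loicomp} show that $(\hat g,g)\mapsto\hat g^{-1}.g$ is Lipschitz on the compact constraint set (here $\max\{A,\AA\}<0.1$ is used, exactly as in Theorems~\ref{th:rotscal} and~\ref{th:shift}, in particular to keep the $2\times2$ matrices appearing in~\eqref{eq.loicomp} uniformly invertible), while $g\mapsto g.\x$ is Lipschitz on $\R^{k\times2}$ with constant controlled by $1+\tfrac1k\snorm{\x}^2_{\Sh}$. This bounds $\tfrac1k\snorm{\,\cdot\,}^2_{\Sh}$ of the first term by $C\bigl(\tfrac1J\sum_j\sabs{\hat g_j-g_j^*}^2\bigr)$ times a factor built from $\tfrac1k\snorm{\A^\lambda\f}^2_{\Sh}\le\tfrac1k\snorm{\f}^2_{\Sh}$ — bounded by some $C(L)$ uniformly in $k$ through the Sobolev embedding $H_s(L)\hookrightarrow C^0$ for $s\ge\tfrac32$ — and from the quantities $\tfrac1k\snorm{\A^\lambda\bzeta_j}^2_{\Sh}$, which are $O_P(\gamma_{\max}(k)k^{-2s/(2s+1)})$ and therefore contribute only lower-order cross terms absorbed by Young's inequality. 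The crucial factor $\tfrac1J\sum_j\sabs{\hat g_j-g_j^*}^2$ is itself controlled, on an event of probability at least $1-13e^{-x}$, by $C\bigl(A_1(k,J,x)+A_3(k,J,x)+A_2(J,x)\bigr)$ by Theorems~\ref{th:rotscal} and~\ref{th:shift}; in particular the $A_2(J,x)$ part is precisely the $\sqrt{x/J}$-type fluctuation of $(\bar\Aa^*,\bar\balpha^*,\bar\b^*)$ about its zero mean, inherited from those theorems under Assumption~\ref{hyp.Theta}.

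For the smoothing-bias term, $\tfrac1k\snorm{\A^\lambda\f-\f}^2_{\Sh}$ is the classical truncation error of the discrete Fourier series of $f\in H_s(L)$ sampled on the equispaced grid, which for $\lambda=\lfloor k^{1/(2s+1)}\rfloor$ is of order $Lk^{-2s/(2s+1)}$ and, after $\snorm{u+v}^2\le\snorm{u}^2+\snorm{v}^2+2\snorm{u}\snorm{v}$, accounts for the $F(k^{-2s/(2s+1)})$ terms. For the noise term, $\A^\lambda\bar\bzeta$ is a centered Gaussian vector supported on a subspace of dimension $2\lambda+1$ with covariance $\tfrac1J$ times a compression of $\bSigma$, so $\E\,\tfrac1k\snorm{\A^\lambda\bar\bzeta}^2_{\Sh}\le C\gamma_{\max}(k)k^{-2s/(2s+1)}$, and a Laurent--Massart type deviation bound gives, with probability at least $1-e^{-x}$, a bound of order $V_1(k,J,x)$ on the nondegenerate part and of order $V_2(k,J,x)$ on the degenerate one, producing the $F(V_1(k,J,x))$ and $V_2(k,J,x)$ contributions. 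Collecting the three bounds, absorbing numerical constants into one $C(L,s,A,\AA,B,\f)$ and the exceptional events ($4e^{-x}$ from Theorem~\ref{th:rotscal}, $9e^{-x}$ from Theorem~\ref{th:shift}, $e^{-x}$ from the noise quadratic form), yields the claimed inequality with failure probability at most $14e^{-x}$.

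I expect the main obstacle to be the first estimate: propagating the deformation-parameter error into $\hat\f\vphantom{\f}^\lambda$ with a multiplicative factor that splits cleanly into the deterministic $\tfrac1k\snorm{\f}^2_{\Sh}$ and a genuinely lower-order stochastic remainder, so that the cross term between that remainder and $\tfrac1J\sum_j\sabs{\hat g_j-g_j^*}^2$ — a product of two small quantities — can be absorbed via Young's inequality rather than a power-losing Cauchy--Schwarz. This step is delicate because the estimated translations~\eqref{eq.blambda} are coupled to the estimated rotations and scalings through the nonlinear inversion law~\eqref{eq.loicomp}, so the error does not propagate additively and the interaction between the smoothed noise $\A^\lambda\bzeta_j$ and the group action must be tracked carefully; the remaining ingredients are the already-established deformation-parameter bounds of Theorems~\ref{th:rotscal} and~\ref{th:shift} and standard Fourier-approximation and Gaussian-concentration estimates.
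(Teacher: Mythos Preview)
Your plan is correct and mirrors the paper's proof: the same three-term split (deformation-parameter error, Fourier bias, smoothed noise), the same invocation of Theorems~\ref{th:rotscal} and~\ref{th:shift} for the first piece, Lemma~\ref{lemme.biais} for the second, and a Laurent--Massart concentration bound for the third, with the same $4+9+1=14$ accounting of exceptional events.

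The one place where you work harder than the paper is exactly the ``main obstacle'' you flag. The paper sidesteps it by splitting slightly differently: since translation acts additively, one has
\[
(\hat g_j^{-1}.g_j^*).(\f+\bzeta_j)=(\hat g_j^{-1}.g_j^*).\f + e^{a_j^*-\hat a_j^\lambda}\,\bzeta_j\,R_{\alpha_j^*-\hat\alpha_j^\lambda},
\]
so the noise is only rotated and scaled and its $\A^\lambda$-smoothed norm is bounded by $e^{2A}\snorm{\A^\lambda\bzeta_j}_{\Sh}$ regardless of the estimated parameters. After a preliminary Jensen step $\snorm{\tfrac1J\sum_j u_j}^2\le\tfrac1J\sum_j\snorm{u_j}^2$, the Lipschitz lemma (Lemma~\ref{lemme.AssOp}) is then applied to $\f$ alone, with deterministic constant $C(A,\f)$, and no cross term between the smoothed noise and $\tfrac1J\sum_j|\hat g_j-g_j^*|^2$ ever appears. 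Your route through $\x_j=\A^\lambda\f+\A^\lambda\bzeta_j$ and a Young-inequality absorption of the resulting product of two small quantities also works, but the paper's separation is cleaner and removes the delicacy you anticipate.
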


The terms $A_1(k,J,x),\ A_3(k,J,x)$ and $A_2(J,x) $ that appear in the statement of Theorem \ref{th:shape} are the same to those appearing in Theorems \ref{th:rotscal} and \ref{th:shift}. As a consequence, we have $\frac{1}{k}\snorm{\hat \f\vphantom{\f}^\lambda - \f}^2_{\Sh} \to 0$ in probability when $\min\{k,J\} \to \infty$ and Theorem \ref{th:shape} gives rates of convergences of $\hat \f\vphantom{\f}^\lambda $ to $ \f $ thanks to a concentration inequality.

Theorem \ref{th:shape} is similar to Theorems \ref{th:rotscal} and \ref{th:shift} and there is no need to assume an extra bound on $A$ and $\AA$ to ensure the convergence in probability of $\hat \f\vphantom{\f}^\lambda $ 
when $J$ is fixed and $k\to\infty$, see  the discussion following Theorems \ref{th:rotscal} and \ref{th:shift}. Let 
\begin{equation}\label{eq.fg0}
\f_{\bTheta_0} = e^{\bar \Aa^*}(\f+\1_{k} \otimes \bar \b^* (\overline{e^{\Aa^*} R_{\balpha^*}})^{-1} )R_{\bar\balpha^*}
\end{equation}
where $\bar\Aa^* =   \frac{1}{J}\sum_{j=1}^J a_j^*\in\R$, $\bar\balpha^* =   \frac{1}{J}\sum_{j=1}^J \alpha_j^*\in\R$, $\bar\b^* =  \frac{1}{J}\sum_{j=1}^J \b_j^*\in\R^2$ and  $\overline{e^{\Aa^*} R_{\balpha^*}} = \frac{1}{J}\sum_{j=1}^J e^{a^*_j}R_{\alpha_j^*}$ is an invertible $2\times 2$ matrix, see also formula \eqref{eq.g0}. A slight modification of the proof of Theorem \ref{th:shape} gives the following inequality,
$$
\P\bigg( \frac{1}{k}\snorm{\hat \f\vphantom{\f}^\lambda - \f_{\bTheta_0}}^2\geq  C(L,s,A,\AA,B,\f)\Big( A_1(k,J,x) + A_3(k,J,x)\Big) \bigg) \leq 14e^{-x}.
$$
This yields to statements \eqref{eq.thmain1} 
in Theorem \ref{theo:main}.

\medskip


\section{Numerical experiments}\label{part.num}
\subsection{Description of the data}
\label{part.desData}

We make here some numerical simulations to show the effect of the dimension $k$ and the number $J$  of observations on the estimation of the deformation parameters and the mean pattern with data generated by model \eqref{eq:perturbmodel}. Different types of noise are considered. For all $t\in[0,1]$, let
$$
f(t)= (10 \sin^2(\pi t) + \cos(10\pi) +20, 2\sin( 6 \pi t) - 11 \sin^2(\pi t) + 12  \exp( - 25 ( t- 0.4)^2) +1).
$$
This curve is plotted in Figure \ref{fig.meanpattern}. The deformation parameters $(a^*_j,\alpha^*_j,b^*_j)$, $j=1,\ldots,J $, are i.i.d uniform random variables taking their values in $\Theta = [-\frac{1}{4},\frac{1}{4}]\times[-\frac{1}{2},\frac{1}{2}]\times [-1,1]^2$. The law of the deformation parameters is supposed to be unknown \textit{a priori} and the minimization is performed on the constraint set 
$$
 \bTheta_0 = \bigg\{ (\Aa,\balpha,\b) \in [-1,1]^{J} \times [-1,1]^{J}\times [-5,5]^{2J}, \  \sum_{j=1}^{J} a_{j} =  0,\ \sum_{j=1}^{J} \alpha_{j} = 0 \mbox{ and } \sum_{j=1}^{J} b_{j} = 0  \bigg\}.
$$ 
Recall our notations: the error term is denoted by $\bzeta = (\bzeta^{(1)},\bzeta^{(2)})\in\R^{k\times 2}$ and the vectorized version of $\bzeta$ is denoted by $\tilde \bzeta \in \R^{2k}$. The simulations were run with three different kinds of noise. 
\begin{figure}[t]
\begin{center}
\subfloat[]{\includegraphics[width=5.3cm]{./figures/meanpattern}\label{fig.meanpattern0}}\hfill
\subfloat[]{\includegraphics[width=5.3cm]{./figures/meanpattern1}\label{fig.meanpattern1}}\hfill
\subfloat[]{\includegraphics[width=5.3cm]{./figures/meanpattern2}\label{fig.meanpattern2}}
\caption[f]{\subref{fig.meanpattern0} Plot of the mean pattern $\f = (\f^{(1)}, \f^{(2)})$ used in the simulations with $k=1024$. \subref{fig.meanpattern1} The first coordinates $\f^{(1)}$. \subref{fig.meanpattern2} The second coordinates $\f^{(2)}$.} \label{fig.meanpattern}
\end{center}
\end{figure} 
\begin{description}
\item[White noise :] the random variable $ \tilde \bzeta = (\zeta_1^{(1)},\zeta_2^{(1)}, \ldots ,\zeta_1^{(k)},\zeta_2^{(k)})'\in\R^{2k}$ is a centered Gaussian vector of variance 
$$
\bSigma_1= 4 Id_{2k}.
$$ 
We have $\gamma_{\max}(k) = 4$ and  this correspond to an isotropic Gaussian noise as in \cite{MR1436569,MR1618880}, see Figure \ref{fig.exempleBlanc}.
\item[Weakly correlated noise :] the random variable $\tilde \bzeta$ is a centered Gaussian vector  of variance $\bSigma_2$ with 
$$
\bSigma^{\frac{1}{2}}_2 = Id_2 \otimes \left[\tfrac{1}{2}\exp\left(-\tfrac{\sabs{\ell-\ell'}}{100}^2\right)\right]_{\ell,\ell'=1}^k
$$
Hence, $\bSigma_2 $ is a Toeplitz matrix and it follows from classical matrix theory, see e.g. \cite{MR1084815}, that $\gamma_{\max}(k)$ is bounded (here $\gamma_{\max} \leq 80$). See Figure  \ref{fig.exempleBig}. 
\item[Highly correlated noise :] the random variable  $\tilde \bzeta $  is a centered Gaussian vector of variance 
$$
\bSigma_3 =Id_2 \otimes  P \diag\left(\tfrac{\ell^2}{2k}, \ \ell=1,\ldots,k\right) P'
$$
where $P$ is an arbitrary matrix in $\mathbf{SO}(k)$. Hence, in this case $\gamma_{\max}(k) = \frac{k}{4}$ and the level of noise increase with $k$, see Figure \ref{fig.exempleBen}. 
\end{description}

\begin{figure}[]
\begin{center}
\subfloat[Observations $\Y$]{\includegraphics[width=5.3cm]{./figures/Bruitblancex}\label{fig.noiseBlanc}}\hfill
\subfloat[A realization of $\bzeta_{1}$]{\includegraphics[width=5.3cm]{./figures/Bruitblancex2}\label{fig.noiseBlanc2}}\hfill
\subfloat[Spectrum of $\bSigma_1$ ]{\includegraphics[width=5.3cm]{./figures/spectreBruitBlanc}\label{fig.spectreblanc}}
\caption{Example of data generated by model \eqref{eq:perturbmodel} with white noise.} \label{fig.exempleBlanc}
\vspace*{2cm}
\subfloat[Observations $\Y$]{\includegraphics[width=5.3cm]{./figures/Bruitbigcex}\label{fig.noiseBig}}\hfill
\subfloat[A realization of $\bzeta_{1}$]{\includegraphics[width=5.3cm]{./figures/Bruitbigcex2}\label{fig.noiseBig2}}\hfill
\subfloat[Spectrum of $\bSigma_2$]{\includegraphics[width=5.3cm]{./figures/spectreBruitBig}\label{fig.spectrebig}}
\caption{Example of data generated by model \eqref{eq:perturbmodel} with weakly correlated noise.} \label{fig.exempleBig}
\vspace*{2cm}
\subfloat[Observations $\Y$]{\includegraphics[width=5.3cm]{./figures/Bruitbencex}\label{fig.noiseBen}}\hfill
\subfloat[A realization of $\bzeta_{1}$]{\includegraphics[width=5.3cm]{./figures/Bruitbencex2}\label{fig.noiseBen2}}\hfill
\subfloat[Spectrum of $\bSigma_3$]{\includegraphics[width=5.3cm]{./figures/spectreBruitBen}\label{fig.spectreben}}
\caption{Example of observations generated by model \eqref{eq:perturbmodel} with the highly correlated noise. 
} \label{fig.exempleBen}
\end{center}
\end{figure} 

\subsection{Description of the procedure}

The estimation procedure follows the guidelines described in Section \ref{part.EstimProc}. We are testing the effect of the number $J$ of observations and the number $k$ of landmarks on the estimation of the parameters of interest of model \eqref{eq:perturbmodel}. All the simulations are performed with $J=10,100,500$ and $k=20,50,100,1000,3000$ and for each combination of these two factors the simulations are performed with $M=30$ repetitions of model \eqref{eq:perturbmodel}. 

Moreover, estimations are done without and with the pre-smoothing step. In the former case we have $\lambda(k) = \frac{k}{2}$, that is, there is no reduction of the dimension. In the latter case, the smoothing parameter $\lambda(k)$ is fixed manually to ensure a proper  reconstruction of the mean pattern $f$. 
Note that we need $\lambda>7$ to get correct results and we took $\lambda_{20} = \lambda_{50}  = \lambda_{100} = 7$, $\lambda_{1000} = 11 $ and $\lambda_{3000} = 25 $.

We use a quasi-Newton trust-region based algorithm to solve the optimization problems \eqref{eq.estimRotScal} and \eqref{eq.defhatb}. The formula for the gradient is given in \eqref{eq:GradD}. All the computations are performed with \emph{Matlab}.

\subsection{Results: estimation of the mean pattern}

For each of the $30$ repetitions of model  \eqref{eq:perturbmodel} with the possible values of $k$ and $J$, we compute the quantities $\frac{1}{k}\|\hat \f\vphantom{\f}^\lambda -\f^*_{\bTheta_0}\|^2_{\Sh} $ where $\hat\f\vphantom{\f}^\lambda  $ corresponds to the smoothed Procrustes mean of the observations defined in \eqref{eq.defflambda} and, $\frac{1}{k}\|\hat \f  -\f^*_{\bTheta_0}\|^2_{\Sh} $ where $\hat \f$ is the (non smoothed) Procrustes mean of the data. Recall that $\f_{\bTheta_0} $ is defined by formula \eqref{eq.fg0}.

Boxplots of the results are given in Figures \ref{fig.diffFblanc}, \ref{fig.diffFbig} and  \ref{fig.diffFben} for the different kinds of error terms described in Section \ref{part.desData}. In the figures, the abscissa represents the different values of the number  $k$ of landmarks and boxplots in red correspond to $J=10$ observations,  in green to $J=100$ observations and in blue to $J=500$ observations.

The estimation of the mean pattern with the  white noise error term is given by Figure \ref{fig.diffFblanc}. In Figure \ref{fig.BruitBlan-diffF}, for a fixed $k$, the non-smoothed version $\frac{1}{k}\|\hat \f -\f^*\|^2_{\Sh} $ decreases when $J$ increases. Moreover, the values of $\frac{1}{k}\|\hat \f -\f^*\|^2_{\Sh} $  remain stable when $J$ remains fixed and $k$ increases. Recall that this framework corresponds to the isotropic Gaussian noise described in \cite{MR1436569}. The simulations seem to confirm their conclusions and show that in this framework the dimension $k$ is not preponderant. In Figure \ref{fig.BruitBlan-diffFlambda},  the smoothed version  $\frac{1}{k}\|\hat\f\vphantom{\f}^\lambda - \f\|^2_{\Sh}$  decreases   when $J$ and $k$ increase. The main difference with the non-smoothed estimation is the convergence to 0 of $\frac{1}{k}\|\hat\f\vphantom{\f}^\lambda - \f\|^2_{\Sh}$ when $J$ remains fixed and $k$ increases.
 
In Figure \ref{fig.diffFbig}, the results of the estimation of the mean pattern are plotted for the weakly correlated noise term.  Figure \ref{fig.BruitBig-diffF} shows us a similar behavior of the non-smoothed Procrustes mean but with non-decreasing values of $\frac{1}{k}\|\hat \f\vphantom{\f}^\lambda -\f^*\|^2_{\Sh}  $ when $k$ increases and $J$ remains fixed. In Figure \ref{fig.BruitBig-diffFlambda}, the smoothed Procrustes mean converges as $k$ goes to infinity and the bigger $J$ is the faster the convergence is.

The results of the estimations of the mean pattern with the highly correlated noise are presented Figure \ref{fig.diffFben}. The results that appear in Figure \ref{fig.BruitBen-diffF} are quite different compared to those presented Figures \ref{fig.BruitBlan-diffF} and \ref{fig.BruitBig-diffF}. The estimation seems to be worst when $k$ increases and $J$ remains fixed. The reason is that the level of noise, measured by $\gamma_{\max}(k)$ is increasing with $k$. 
The smoothing step is efficient and the estimations presented Figure \ref{fig.BruitBen-diffFlambda} have a similar behavior to  those given in  Figures \ref{fig.BruitBlan-diffFlambda} and \ref{fig.BruitBig-diffFlambda}.

\begin{figure}[]
\begin{center}
\subfloat[f][Boxplot of $\frac{1}{k}\|\hat \f -\f^*\|^2_{\Sh}$ (estimation without smoothing).]{\includegraphics[height=5cm]{./figures/BruitBlanc/BruitBlan-diffFbtheta}\label{fig.BruitBlan-diffF}}\hfill
\subfloat[f][ Boxplot of $\frac{1}{k}\|\hat \f\vphantom{\f}^\lambda -\f^*\|^2_{\Sh}$ (estimation with smoothing). The missing boxplots for $k=20$ belong to the range $ \lbrack 0.7,1.9\rbrack$.]{\includegraphics[height=5cm]{./figures/BruitBlanc/BruitBlan-diffFbthetalambda}\label{fig.BruitBlan-diffFlambda}}

\caption{ Estimation of the mean pattern with the white noise. Boxplot in red correspond to $J=10$, in green to $J=100$ and in blue to $J=500$.}\label{fig.diffFblanc}

\subfloat[f][Boxplot of $\frac{1}{k}\|\hat \f -\f^*\|^2_{\Sh}$ (estimation without smoothing).]{\includegraphics[height=5cm]{./figures/BruitBig/BruitBig-diffFbtheta}\label{fig.BruitBig-diffF}}\hfill
\subfloat[f][ Boxplot of $\frac{1}{k}\|\hat \f\vphantom{\f}^\lambda -\f^*\|^2_{\Sh}$ (estimation with smoothing).]{\includegraphics[height=5cm]{./figures/BruitBig/BruitBig-diffFbthetalambda}\label{fig.BruitBig-diffFlambda}}
\caption{ Estimation of the mean pattern  with the weakly correlated noise.  Boxplot in red correspond to $J=10$, in green to $J=100$ and in blue to $J=500$.}\label{fig.diffFbig}

\subfloat[f][Boxplot of $\frac{1}{k}\|\hat \f -\f^*\|^2_{\Sh}$ (estimation without smoothing). The missing red boxplot for $k=20$ and $J=10$ belongs to the range $ \lbrack 70,110\rbrack$.]{\includegraphics[height=5cm]{./figures/BruitBen/BruitBen-diffFbtheta}\label{fig.BruitBen-diffF}}\hfill
\subfloat[f][ Boxplot of $\frac{1}{k}\|\hat \f\vphantom{\f}^\lambda -\f^*\|^2_{\Sh}$ (estimation with smoothing). The missing boxplots for $k=20$ belong to the range $ \lbrack 0.7,1.9\rbrack$. ]{\includegraphics[height=5cm]{./figures/BruitBen/BruitBen-diffFbthetalambda}\label{fig.BruitBen-diffFlambda}}
\end{center}
\caption{ Estimation of the mean pattern with the highly correlated noise. Boxplot in red correspond to $J=10$, in green to $J=100$ and in blue to $J=500$. }\label{fig.diffFben}
\end{figure}

\appendix


\section{Proofs}

\subsection{Proof of Lemma \ref{lemme.argmin}}

Using the decomposition \eqref{eq:decomp}, we obtain the following identity 
 \begin{equation}\label{eq.decomp2}
M^{\lambda}(\Aa,\balpha,\b) = M^{\lambda}_0(\Aa,\balpha)+\bar M(\Aa,\balpha,\b).
\end{equation}
Note that we have used the fact that the subspaces $\bV^\lambda_0$ and $\bar\bV$ are orthogonal and that $\bar A \1_k\otimes b_j = \1_k\otimes b_j$, for all $b_j\in\R^2$. Let us also introduce the notation $\bar \Y_j =  \frac{1}{k}\sum_{\ell=1}^k [\Y_j]_\ell \in \R^2$, that is $\bar A \Y_j = \1_{k} \otimes \bar\Y_j \in\R^{k\times 2}$ is a degenerated configuration, and $\bar \Y = \frac{1}{J}\sum_{j=1}^J \bar\Y_j \in\R^{2}$. For a fixed $(\Aa,\balpha)\in\R^J\times[-\pi,\pi[^J$, the functional $\b \longmapsto \bar M(\Aa,\balpha,\b)$ vanishes if and only if there exists a $b_0\in\R^2$ such that $e^{-a_j} (\bar A \Y_j- \1_k\otimes b_j)R_{-\alpha_j} = \1_k\otimes b_0 $ for all $ j=1,\ldots,J$. Therefore,  for this fixed $(\Aa,\balpha)$, there is a unique point $\b =\b(\Aa,\balpha) :=(\bar\Y_1-e^{a_1} \bar\Y (\overline{e^{\Aa} R_{\balpha}})^{-1} R_{\alpha_1},\ldots,\bar\Y_J-e^{a_J} \bar\Y (\overline{e^{\Aa} R_{\balpha}})^{-1} R_{\alpha_J})$  with  $\overline{e^{\Aa} R_{\balpha}} = \frac{1}{J} \sum_{j=1}^J e^{a_j}R_{\alpha_j} \in\R^{2\times 2}$ and  
 which satisfies,
$$
\b(\Aa,\balpha) =  \argmin_{\b\in\bTheta^{\b}_0} \bar M(\Aa,\balpha,\b)
$$
 Thence, thanks to the decomposition \eqref{eq.decomp2} and the fact that $\bar M(\Aa,\balpha,\b(\Aa,\balpha)) = 0 $, we have
$$
\argmin_{( \Aa,\balpha,\b)\in\bTheta_0} M^{\lambda}(\Aa,\balpha,\b) = \bigg (\argmin_{(\Aa,\balpha)\in\bTheta^{\Aa,\balpha}_0} M^{\lambda}_0(\Aa,\balpha) \ ,\ \b\Big(\argmin_{(\Aa,\balpha)\in\bTheta^{\Aa,\balpha}_0} M^{\lambda}_0(\Aa,\balpha) \Big)\bigg),
$$
and the claim is proved.
\hfill \qed

\subsection{Proof of Theorem \ref{th:rotscal}}

For all $(\hat \Aa^{\lambda} ,\hat \balpha^{\lambda})\in\bTheta^{\Aa,\balpha}_0$ and $(\Aa^*, \balpha^*)\in[-A,A]\times[-\AA,\AA]$, we have the following inequality 
$$
 \frac{1}{J}\snorm{(\hat \Aa^{\lambda}, \hat \balpha^{\lambda}) -(\Aa^*, \balpha^*)}^2_{\R^{2J}} \leq \frac{2}{J}\snorm{(\hat \Aa^{\lambda}, \hat \balpha^{\lambda}) -(\Aa^*_{\bTheta_0}, \balpha^*_{\bTheta_0})}^2_{\R^{2J}}  + \frac{2}{J}\snorm{(\Aa^*_{\bTheta_0}, \balpha^*_{\bTheta_0})-(\Aa^*, \balpha^*)}^2_{\R^{2J}}.
$$
The proof of Theorem \ref{th:rotscal} is a direct consequence of Proposition \ref{prop.ConvAAlpha} and Lemma \ref{lemme.bernstein} below which control the convergence in probability of the two terms in the right hand side in the preceding inequality.

\begin{proposition}\label{prop.ConvAAlpha}
Consider model \eqref{eq:perturbmodel} and suppose that Assumptions \ref{ass.f} and \ref{ass.zeta} hold and that Assumption \ref{hyp.Theta} is verified with $A,\AA<0.1$. If $\lambda=\lambda(k)= k^{\frac{1}{2s+1}}$ then there exists a constant $C(L,s,A,\AA,\f_0)$ such that for all $x>0$
$$
\P\left( \frac{1}{J}\snorm{(\hat \Aa^{\lambda}, \hat \balpha^{\lambda}) -(\Aa^*_{\bTheta_0}, \balpha^*_{\bTheta_0})}^2_{\R^{2J}} \geq  C(L,s,A,\AA,\f_0)\left(F\big(k^{-\frac{2s}{2s+1}} \big) + F\big(V_1(k,J,x)\big) \right)\right) \leq 2e^{-x},
$$
where $V_1 (k,J,x )= 3\gamma_{\max}(k) k^{-\frac{2s}{2s+1}}\left( 1 + \sqrt{2\frac{x}{Jk}} + 2\frac{x}{Jk} \right) $ and  $F:\R^+ \longrightarrow \R$, with $F(u) = u+\sqrt{u}$.
\end{proposition}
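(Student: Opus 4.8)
I would treat \eqref{eq.estimRotScal} as a classical M-estimation problem: compare the random criterion $M^\lambda_0$ with its noiseless counterpart, and combine a quadratic lower bound for the latter with a concentration inequality for the stochastic fluctuations. Since $A^{\lambda}_0\1_k=0$, substituting \eqref{eq:perturbmodel} into $M^\lambda_0$ and setting $u_j=e^{a^*_j-a_j}$, $\rho_j=R_{\alpha^*_j-\alpha_j}$ gives a decomposition
\[
M^\lambda_0(\Aa,\balpha)=D^\lambda_0(\Aa,\balpha)+\mathrm{cross}(\Aa,\balpha)+\mathrm{noise}(\Aa,\balpha),
\]
where $D^\lambda_0(\Aa,\balpha)=\frac{1}{Jk}\sum_j\snorm{u_j(A^{\lambda}_0\f)\rho_j-\frac1J\sum_{j'}u_{j'}(A^{\lambda}_0\f)\rho_{j'}}^2_{\Sh}$ depends on $\f$ only through $\f_0$ (hence the constant will depend on $\f_0$), $\mathrm{cross}$ is the bilinear term involving $A^{\lambda}_0\bzeta_j$, and $\mathrm{noise}$ is the quadratic term in $A^{\lambda}_0\bzeta_j$. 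One checks directly that $D^\lambda_0$ and $\mathrm{cross}$ both vanish at $(\Aa^*_{\bTheta_0},\balpha^*_{\bTheta_0})$, because there $u_j\equiv e^{\bar\Aa^*}$ and $\rho_j\equiv R_{\bar\balpha^*}$ are constant in $j$, so that $M^\lambda_0(\Aa^*_{\bTheta_0},\balpha^*_{\bTheta_0})=\mathrm{noise}(\Aa^*_{\bTheta_0},\balpha^*_{\bTheta_0})$.

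\textbf{Identifiability and a quadratic lower bound.} The key deterministic ingredient is an inequality of the form $D^\lambda_0(\Aa,\balpha)\ge\kappa\,\frac1J\snorm{(\Aa,\balpha)-(\Aa^*_{\bTheta_0},\balpha^*_{\bTheta_0})}^2_{\R^{2J}}$ on $\bTheta_0^{\Aa,\balpha}$, for some $\kappa>0$ depending only on $s,A,\AA,\f_0$. Under Assumption \ref{ass.f}, $A^{\lambda}_0\f$ still has rank two for $k$ large, and $\frac1k\snorm{A^{\lambda}_0\f}^2_{\Sh}$ stays bounded away from $0$, with a Fourier-truncation error of order $k^{-\frac{2s}{2s+1}}$ controlled by the Sobolev bound $\sum_{|m|>\lambda}\sabs{c_m(\f)}^2\le L\lambda^{-2s}$. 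The argument of Lemma \ref{lemma.zeros} (freeness of the action on a rank-two configuration) then shows that $D^\lambda_0$ vanishes on $\bTheta_0^{\Aa,\balpha}$ only at $(\Aa^*_{\bTheta_0},\balpha^*_{\bTheta_0})$, and the quantitative bound follows by Taylor-expanding $(\Aa,\balpha)\mapsto D^\lambda_0$ around that point. This is precisely where the assumption $A,\AA<0.1$ is consumed: it keeps $\frac1J\sum_j e^{a_j}R_{\alpha_j}$ uniformly away from singular matrices and makes the (positive definite) Hessian dominate the higher-order remainder, yielding a curvature constant $\kappa$ that is uniform in $k$.

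\textbf{Stochastic control.} Since $\sabs{a_j}\le A$ and the $\rho_j$ are orthogonal, $\mathrm{noise}(\Aa,\balpha)\le e^{A}\widehat V$ with $\widehat V:=\frac{1}{Jk}\sum_j\snorm{A^{\lambda}_0\bzeta_j}^2_{\Sh}$, while Cauchy--Schwarz together with the crude bound $\sup_{(\Aa,\balpha)}\frac1{Jk}\sum_j\snorm{u_j(A^{\lambda}_0\f)\rho_j-\overline{(\cdot)}}^2_{\Sh}\le C\frac1k\snorm{\f}^2_{\Sh}=O(1)$ gives $\sup_{(\Aa,\balpha)}\sabs{\mathrm{cross}(\Aa,\balpha)}\le C\sqrt{\widehat V}$. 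Now $\widehat V$ is an average of $J$ independent Gaussian quadratic forms, each with mean at most $\frac1k\tr(A^{\lambda}_0\bSigma)\le\frac{2\lambda}{k}\gamma_{\max}(k)$ and with effective operator-norm parameter $\le\frac1k\gamma_{\max}(k)$; a Hanson--Wright / Bernstein deviation inequality and the choice $\lambda=\lfloor k^{\frac1{2s+1}}\rfloor$ (so that $\frac{2\lambda}{k}=2k^{-\frac{2s}{2s+1}}$) yield $\widehat V\le C\,\gamma_{\max}(k)k^{-\frac{2s}{2s+1}}\bigl(1+\sqrt{2\tfrac{x}{Jk}}+2\tfrac{x}{Jk}\bigr)\le C\,V_1(k,J,x)$ with probability at least $1-e^{-x}$ (a second application, needed to control the deviation of $\mathrm{noise}$ around its mean, accounts for the factor $2e^{-x}$).

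\textbf{Conclusion and main obstacle.} On the good event, the basic inequality $M^\lambda_0(\hat\Aa^\lambda,\hat\balpha^\lambda)\le M^\lambda_0(\Aa^*_{\bTheta_0},\balpha^*_{\bTheta_0})$ combined with the previous steps gives
\[
\kappa\,\tfrac1J\snorm{(\hat\Aa^\lambda,\hat\balpha^\lambda)-(\Aa^*_{\bTheta_0},\balpha^*_{\bTheta_0})}^2_{\R^{2J}}\le D^\lambda_0(\hat\Aa^\lambda,\hat\balpha^\lambda)\le C\bigl(\widehat V+\sqrt{\widehat V}\bigr)+C\,F\bigl(k^{-\frac{2s}{2s+1}}\bigr),
\]
where the last term collects the deterministic truncation bias (entering through the comparison of $A^{\lambda}_0\f$ with $\f_0$ in the curvature estimate). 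Since $\widehat V+\sqrt{\widehat V}=F(\widehat V)\le C\,F(V_1(k,J,x))$, dividing by $\kappa$ gives the stated bound with $C(L,s,A,\AA,\f_0)=C/\kappa$. The genuinely delicate step is the uniform quadratic lower bound of Step 2 with a constant independent of $k$ and explicit in $(A,\AA,\f_0)$: without the smallness condition $A,\AA<0.1$ one can still obtain qualitative consistency as $k\to\infty$ with $J$ fixed, but not the explicit rate; the uniform control of the cross term over the compact set $\bTheta_0^{\Aa,\balpha}$ in Step 3 is, by contrast, routine.
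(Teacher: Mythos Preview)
Your approach is essentially the paper's: both rest on the M-estimation basic inequality, a quadratic lower bound for the noiseless criterion obtained by Taylor expansion at $(\Aa^*_{\bTheta_0},\balpha^*_{\bTheta_0})$ (the paper's Lemmas~\ref{lemme.eigmin} and~\ref{lem:D3}, where $A,\AA<0.1$ is used solely to make the Hessian dominate the cubic remainder), and one Laurent--Massart concentration inequality for the Gaussian quadratic form $\widehat V=\frac1{Jk}\sum_j\|A^\lambda_0\bzeta_j\|^2$. The only real difference is your choice of deterministic reference: the paper compares $M^\lambda_0$ to the criterion $D_0$ built on the \emph{unsmoothed} centered pattern $\f_0$, so that the truncation error $\frac1k\|A^\lambda_0\f-\f_0\|^2\le C(L,s)k^{-2s/(2s+1)}$ (Lemma~\ref{lemme.biais}) appears as an additive bias $\mathbf B^\lambda_0$ in $\sup|M^\lambda_0-D_0|$, while the curvature constant involves $\frac1k\|\f_0\|^2$ directly; you instead base Step~2 on $D^\lambda_0$ built on $A^\lambda_0\f$, which removes the bias from the decomposition but pushes the approximation into the curvature constant $\kappa$. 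Your final display is therefore slightly mis-stated: with your own decomposition the chain $D^\lambda_0(\hat\cdot)\le\mathrm{noise}(\cdot^*_{\bTheta_0})+|\mathrm{cross}(\hat\cdot)|\le C\,F(\widehat V)$ carries no additive $F(k^{-2s/(2s+1)})$ term---a perturbation of $\kappa$ is multiplicative, not additive---so your route actually yields a (harmlessly) tighter bound than claimed. Two minor inaccuracies: the non-singularity of $\frac1J\sum_j e^{a_j}R_{\alpha_j}$ follows already from $\AA<\pi/4$ and is used only in the translation step, not here; and only one concentration inequality is really needed, the factor~$2$ in $2e^{-x}$ being a conservative bookkeeping constant rather than a second application.
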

\noindent The proof of Proposition \ref{prop.ConvAAlpha} is postponed to Section \ref{part.proofConvAAlpha}. The following lemma is a direct consequence of Bernstein's inequality for bounded random variables, see e.g. Proposition 2.9 in \cite{MR2319879}.

\begin{lemma}\label{lemme.bernstein}
Suppose that Assumption \ref{hyp.Theta} holds and that the random variables $(a^*_j,\alpha^*_j)$, $j=1,\ldots,J$ have zero expectation in $[-A,A]\times[-\AA,\AA]$. Then, for any $x>0$, we have
$$
\P\left( \frac{1}{J}\snorm{(\Aa^*_{\bTheta_0}, \balpha^*_{\bTheta_0})-(\Aa^*, \balpha^*)}^2_{\R^{2J}} \geq C(A,\AA)\Big(\sqrt{\tfrac{2x}{J}} + \tfrac{x}{3J}\Big)^2\right) \leq 4e^{-x},
$$
where $C(A,\AA) = 4\max\{A^2,\AA^2\}$.\qed
\end{lemma}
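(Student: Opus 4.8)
The plan is to reduce the claim to a one-line computation followed by a textbook Bernstein bound. First I would use the explicit description of the $\bTheta_0$-projection obtained in Section~\ref{part.ident}: since $\LL_0=\one_{4J}^\perp$, the projection of $(\Aa^*,\balpha^*,\b^*)$ onto $\bTheta_0$ has scaling and rotation coordinates $[\Aa^*_{\bTheta_0}]_j=a^*_j-\bar\Aa^*$ and $[\balpha^*_{\bTheta_0}]_j=\alpha^*_j-\bar\balpha^*$, where $\bar\Aa^*=\frac1J\sum_{j=1}^J a^*_j$ and $\bar\balpha^*=\frac1J\sum_{j=1}^J\alpha^*_j$. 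Hence $(\Aa^*_{\bTheta_0},\balpha^*_{\bTheta_0})-(\Aa^*,\balpha^*)$ is the constant vector whose first $J$ entries equal $-\bar\Aa^*$ and whose last $J$ entries equal $-\bar\balpha^*$, so that
\[
\frac1J\snorm{(\Aa^*_{\bTheta_0},\balpha^*_{\bTheta_0})-(\Aa^*,\balpha^*)}_{\R^{2J}}^2=(\bar\Aa^*)^2+(\bar\balpha^*)^2 .
\]
This is the only place the identifiability analysis is used, and it turns the lemma into a concentration statement for the empirical means $\bar\Aa^*$ and $\bar\balpha^*$.

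Next I would apply Bernstein's inequality for bounded variables (Proposition~2.9 in \cite{MR2319879}) to each of the centered i.i.d.\ sums $\sum_{j=1}^J a^*_j$ and $\sum_{j=1}^J\alpha^*_j$. Under Assumption~\ref{hyp.Theta} the $a^*_j$ are centered with $|a^*_j|\le A$, so $\sum_{j=1}^J\E[(a^*_j)^2]\le JA^2$, and Bernstein (applied to the two-sided deviation) yields, for every $x>0$,
\[
\P\Big(\babs{\sum_{j=1}^J a^*_j}\ \ge\ \sqrt{2JA^2x}+\tfrac{Ax}{3}\Big)\le 2e^{-x},
\]
hence, dividing by $J$, $\P\big(|\bar\Aa^*|\ge A(\sqrt{2x/J}+x/(3J))\big)\le 2e^{-x}$. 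The identical argument with $A$ replaced by $\AA$ gives $\P\big(|\bar\balpha^*|\ge \AA(\sqrt{2x/J}+x/(3J))\big)\le 2e^{-x}$.

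Finally I would take a union bound: outside an event of probability at most $4e^{-x}$,
\[
(\bar\Aa^*)^2+(\bar\balpha^*)^2\le(A^2+\AA^2)\Big(\sqrt{\tfrac{2x}{J}}+\tfrac{x}{3J}\Big)^2\le 4\max\{A^2,\AA^2\}\Big(\sqrt{\tfrac{2x}{J}}+\tfrac{x}{3J}\Big)^2 ,
\]
which together with the identity of the first paragraph is exactly the assertion with $C(A,\AA)=4\max\{A^2,\AA^2\}$ (the bound $A^2+\AA^2\le 2\max\{A^2,\AA^2\}$ would already suffice, so the stated constant leaves room to spare). There is essentially no obstacle here: the only points requiring care are quoting the projection formula of Section~\ref{part.ident} correctly and bookkeeping the four tail events in the union bound; the probabilistic content is a direct application of Bernstein's inequality, as the statement of the lemma already indicates.
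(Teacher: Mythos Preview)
Your proposal is correct and is exactly the argument the paper has in mind: the paper merely states that the lemma ``is a direct consequence of Bernstein's inequality for bounded random variables'' and gives no further details, and your writeup spells out precisely that route --- the projection identity $\frac{1}{J}\snorm{(\Aa^*_{\bTheta_0},\balpha^*_{\bTheta_0})-(\Aa^*,\balpha^*)}_{\R^{2J}}^2=(\bar\Aa^*)^2+(\bar\balpha^*)^2$, two applications of Bernstein, and a union bound. This is also parallel to the explicit proof the paper gives for the companion Lemma~\ref{lemme.bernstein2}.
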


\subsection{Proof of Theorem \ref{th:shift}}

The proof of Theorem \ref{th:shift} follows the same guideline as the proof of Theorem \ref{th:rotscal}. Consider the   inequality
$$
\frac{1}{J} \snorm{\hat \b\vphantom{\b}^{\lambda} - \b^* }^2_{\R^{2J}} \leq \frac{2}{J}\snorm{\hat\b\vphantom{\b}^\lambda - \b^*_{\bTheta_0} }^2_{\R^{2J}} + \frac{2}{J}\snorm{\b^*_{\bTheta_0} - \b^*  }^2_{\R^{2J}}.
$$
Theorem \ref{th:shift} is now a direct consequence of Proposition \ref{prop.ConvShift} and Lemma \ref{lemme.bernstein2}.

\begin{proposition}\label{prop.ConvShift}
Under the hypothesis of Proposition \ref{prop.ConvAAlpha}, there exists a constant $C(L,s,A,\AA,\linebreak[1]B,\f)>0$ such that for all $x>0$,
$$
\P \left(\frac{2}{J}\snorm{\hat\b\vphantom{\b}^\lambda-\b_{\bTheta_0}^*}^2_{\R^{2J}} \geq C(L,s,A,\AA,B,\f) \Big( F(k^{-\frac{-2s}{2s+1}}) + F(V_1(k,J,x)) + V_2(k,J,x) \Big) \right) \leq 5e^{-x},
$$
where $ V_1 (k,J,x )= 3\gamma_{\max}(k) k^{-\frac{2s}{2s+1}}\left( 1 + \sqrt{2\frac{x}{Jk}} + 2\frac{x}{Jk} \right)$, $ V_2(k,J,x)=\frac{4}{k}\gamma_{\max}(k) \big(1 +  \sqrt{2\frac{x}{J}} + \frac{x}{J}\big)$ and $F(u) = u +\sqrt{u}$, $u\geq 0$.
\end{proposition}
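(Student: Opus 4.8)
The plan is to start from the explicit formulas \eqref{eq.blambda} for $\hat\b\vphantom{\b}^\lambda$ and \eqref{eq.g0} for $\b^*_{\bTheta_0}$, to feed model \eqref{eq:perturbmodel} into \eqref{eq.blambda}, and to split the resulting difference into a ``deformation part'', controlled through Proposition \ref{prop.ConvAAlpha}, and a ``noise part'', governed by the per-landmark averages of the $\bzeta_j$. Note first that $\bar A\,\A^\lambda=\bar A$, so the pre-smoothing does not affect $\bar\Y_j=\bar A\Y_j$; this is precisely why the noise contribution here turns out to be of size $\gamma_{\max}(k)/k$ rather than $\gamma_{\max}(k)k^{-\frac{2s}{2s+1}}$.

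First I would compute, under \eqref{eq:perturbmodel} and using that $\bar A$ commutes with the action of $\G$, that $\bar\Y_j=e^{a_j^*}(\bar\f+\bar\bzeta_j)R_{\alpha_j^*}+b_j^*$, where $\bar A\f=\1_k\otimes\bar\f$ and $\bar A\bzeta_j=\1_k\otimes\bar\bzeta_j$ with $\bar\f,\bar\bzeta_j\in\R^2$, and consequently $\bar\Y=\bar\b^*+\bar\f\,\overline{e^{\Aa^*}R_{\balpha^*}}+\bar\bzeta$ with $\bar\bzeta:=\tfrac1J\sum_{j'=1}^Je^{a_{j'}^*}\bar\bzeta_{j'}R_{\alpha_{j'}^*}$. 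Substituting these into \eqref{eq.blambda} evaluated at $(\hat\Aa^\lambda,\hat\balpha^\lambda)$ and subtracting $[\b^*_{\bTheta_0}]_j=b_j^*-e^{a_j^*}(\bar\b^*(\overline{e^{\Aa^*}R_{\balpha^*}})^{-1})R_{\alpha_j^*}$ from \eqref{eq.g0}, I obtain a termwise decomposition $[\hat\b\vphantom{\b}^\lambda]_j-[\b^*_{\bTheta_0}]_j=T_j^{\mathrm{def}}+T_j^{\mathrm{noise}}$, where $T_j^{\mathrm{noise}}$ collects the terms carrying $\bar\bzeta_j$ and $\bar\bzeta$ and $T_j^{\mathrm{def}}$ is the remainder. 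The coefficients appearing in $T_j^{\mathrm{noise}}$ are uniformly bounded: the scalings $a_j^*,\hat a_j^\lambda$ lie in fixed intervals, and the smallness of $A,\AA$ (assumed in Proposition \ref{prop.ConvAAlpha}) forces both $\overline{e^{\Aa^*}R_{\balpha^*}}$ and $\overline{e^{\hat\Aa^\lambda}R_{\hat\balpha^\lambda}}$ into a fixed compact set of invertible $2\times2$ matrices. The key algebraic point is that both of these averaged matrices are of the form $\rho R_\theta$ with $\rho>0$ (a nonnegative combination of positive multiples of planar rotations is again one), hence they and their inverses commute with every $R_\alpha$; using this, a direct computation shows that $T_j^{\mathrm{def}}$ vanishes when $(\hat\Aa^\lambda,\hat\balpha^\lambda)=(\Aa^*_{\bTheta_0},\balpha^*_{\bTheta_0})$, and a Lipschitz estimate for $(a,\alpha)\mapsto e^aR_\alpha$ and for matrix inversion then yields $\tfrac1J\sum_{j=1}^J\snorm{T_j^{\mathrm{def}}}^2_{\R^2}\le C(A,\AA,B,\f)\,\tfrac1J\snorm{(\hat\Aa^\lambda,\hat\balpha^\lambda)-(\Aa^*_{\bTheta_0},\balpha^*_{\bTheta_0})}^2_{\R^{2J}}$; by Proposition \ref{prop.ConvAAlpha} this is $\le C(L,s,A,\AA,B,\f)(F(k^{-\frac{2s}{2s+1}})+F(V_1(k,J,x)))$ outside an event of probability $2e^{-x}$.

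For the noise part, by Assumption \ref{ass.zeta} the vector $\bar\bzeta_j$ is the image of $\bzeta_j$ under averaging of the $k$ rows, a linear map of operator norm $k^{-1/2}$, hence a centered Gaussian vector of $\R^2$ whose covariance has operator norm at most $\gamma_{\max}(k)/k$; moreover the $\bar\bzeta_j$, $j=1,\ldots,J$, are i.i.d. Since $\snorm{T_j^{\mathrm{noise}}}^2_{\R^2}\le C(A,\AA)(\snorm{\bar\bzeta_j}^2_{\R^2}+\snorm{\bar\bzeta}^2_{\R^2})$ and $\snorm{\bar\bzeta}^2_{\R^2}\le C(A,\AA)\tfrac1J\sum_{j'=1}^J\snorm{\bar\bzeta_{j'}}^2_{\R^2}$ by Cauchy--Schwarz, it remains to control $\tfrac1J\sum_{j=1}^J\snorm{\bar\bzeta_j}^2_{\R^2}$, which is a quadratic form in a $2J$-dimensional Gaussian vector with the above covariance bound; a Laurent--Massart type inequality then gives $\tfrac1J\sum_{j=1}^J\snorm{\bar\bzeta_j}^2_{\R^2}\le C\,\tfrac{\gamma_{\max}(k)}{k}(1+\sqrt{2x/J}+x/J)=C\,V_2(k,J,x)$ outside an event of probability at most $3e^{-x}$.

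Combining the two bounds on the intersection of these events, together with $\snorm{\hat\b\vphantom{\b}^\lambda-\b^*_{\bTheta_0}}^2_{\R^{2J}}\le 2\sum_j\snorm{T_j^{\mathrm{def}}}^2_{\R^2}+2\sum_j\snorm{T_j^{\mathrm{noise}}}^2_{\R^2}$, yields $\tfrac2J\snorm{\hat\b\vphantom{\b}^\lambda-\b^*_{\bTheta_0}}^2_{\R^{2J}}\le C(L,s,A,\AA,B,\f)(F(k^{-\frac{2s}{2s+1}})+F(V_1(k,J,x))+V_2(k,J,x))$, and a union bound over the two events produces the announced factor $5e^{-x}$. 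I expect the main difficulty to be the algebra of the second step: carrying out the cancellation of $T_j^{\mathrm{def}}$ cleanly while tracking the uniform invertibility of $\overline{e^{\hat\Aa^\lambda}R_{\hat\balpha^\lambda}}$ (which is exactly where the smallness of $A$ and $\AA$ enters) and the dependence of the Lipschitz constants on $\snorm{\f}$ and on $A,\AA,B$; by comparison, the Gaussian concentration in the third step is routine.
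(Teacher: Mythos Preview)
Your approach is essentially the paper's: plug model \eqref{eq:perturbmodel} into the explicit formulas for $\hat\b\vphantom{\b}^\lambda$ and $\b^*_{\bTheta_0}$, split the resulting difference into a deformation piece and a noise piece, control the former through Proposition~\ref{prop.ConvAAlpha} and the latter through a Laurent--Massart bound on $\tfrac1J\sum_j\snorm{\bar\bzeta_j}^2$. The paper's decomposition is exactly \eqref{eq.term1}--\eqref{eq.term3}, with \eqref{eq.term1}+\eqref{eq.term2} playing the role of your $T_j^{\mathrm{def}}$ and \eqref{eq.term3} that of $T_j^{\mathrm{noise}}$.

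One small remark on the comparison. Your treatment of the deformation part is a shade cleaner than the paper's displayed argument: by exploiting that the averaged matrices $\overline{e^{\Aa}R_{\balpha}}$ are themselves of the form $\rho R_\theta$ and hence commute with each $R_{\alpha_j}$, you observe directly that $T_j^{\mathrm{def}}$ vanishes at $(\Aa^*_{\bTheta_0},\balpha^*_{\bTheta_0})$ and apply a Lipschitz bound from there, invoking Proposition~\ref{prop.ConvAAlpha} with its $2e^{-x}$. The paper instead bounds each of \eqref{eq.term1}, \eqref{eq.term2} by $C\tfrac1J\sum_j\snorm{(a_j^*-\hat a_j^\lambda,\alpha_j^*-\hat\alpha_j^\lambda)}^2$ (distance to $(\Aa^*,\balpha^*)$ rather than to $(\Aa^*_{\bTheta_0},\balpha^*_{\bTheta_0})$), which costs $4e^{-x}$ in their accounting. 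Your route avoids that extra slack and makes the absence of an $A_2(J,x)$ term in Proposition~\ref{prop.ConvShift} transparent. On the noise side, the concentration event actually has probability $\le e^{-x}$ (as in the paper's \eqref{eq.conterm3}); your $3e^{-x}$ is generous but harmless, and the totals match.
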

\noindent The proof of Proposition \ref{prop.ConvShift} is postponed to Section \ref{part.proofprop.ConvShift}.
\begin{lemma}\label{lemme.bernstein2}
Suppose that Assumption \ref{hyp.Theta} holds with  $\AA<\frac{\pi}{4}$ and that the random variables $(a^*_j,\alpha^*_j,b^*_j)$ $j=1,\ldots,J$ have zero expectation in $\Theta\subset\R^{4J}$.  For any $x>0$ , we have
$$
\P\Big( \frac{2}{J}\snorm{\b^*_{\bTheta_0} - \b^* }^2_{\R^{2J}} \geq C(A,\AA,B)\Big(\sqrt{\tfrac{2x}{J}} + \tfrac{x}{3J}\Big)^2\Big ) \leq 4 e^{-x},
$$
where $C(A,\AA,B) = 8B^2 e^{4A}(\cos \AA- \sin \AA)^{-2} $.
\end{lemma}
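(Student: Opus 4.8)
The plan is to reduce the statement to a one-dimensional concentration bound for the empirical mean $\bar\b^*=\frac1J\sum_{j=1}^J b_j^*$, using the closed form of $\b^*_{\bTheta_0}$ obtained in Section~\ref{part.ident}. Concretely, from \eqref{eq.g0} and the explicit expression of $\b^*_{\bTheta_0}$ derived there, one has for every $j=1,\ldots,J$
\[
[\b^*_{\bTheta_0}]_j-b_j^*=-\,e^{a_j^*}\Big(\bar\b^*\,\big(\overline{e^{\Aa^*}R_{\balpha^*}}\big)^{-1}\Big)R_{\alpha_j^*}.
\]
Since $R_{\alpha_j^*}$ is orthogonal it preserves the Euclidean norm of a row vector, so $\snorm{[\b^*_{\bTheta_0}]_j-b_j^*}_{\R^2}^2=e^{2a_j^*}\snorm{\bar\b^*(\overline{e^{\Aa^*}R_{\balpha^*}})^{-1}}_{\R^2}^2\le e^{2a_j^*}\snorm{\bar\b^*}_{\R^2}^2\big\|(\overline{e^{\Aa^*}R_{\balpha^*}})^{-1}\big\|_{\mathrm{op}}^2$. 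Summing over $j$ and bounding $a_j^*\le A$ (a fortiori, by Assumption~\ref{hyp.Theta}) yields
\[
\frac2J\snorm{\b^*_{\bTheta_0}-\b^*}_{\R^{2J}}^2\ \le\ 2\,e^{2A}\,\big\|(\overline{e^{\Aa^*}R_{\balpha^*}})^{-1}\big\|_{\mathrm{op}}^2\;\snorm{\bar\b^*}_{\R^2}^2 .
\]

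The key step is the deterministic bound on $\big\|(\overline{e^{\Aa^*}R_{\balpha^*}})^{-1}\big\|_{\mathrm{op}}$, and this is where the hypothesis $\AA<\pi/4$ is used. Identifying scaled plane rotations with complex numbers, $\overline{e^{\Aa^*}R_{\balpha^*}}=\frac1J\sum_j e^{a_j^*}R_{\alpha_j^*}$ corresponds to $\frac1J\sum_j e^{a_j^*}e^{i\alpha_j^*}$, whose real part is $\frac1J\sum_j e^{a_j^*}\cos\alpha_j^*\ge e^{-A}\cos(\AA/2)$; as $|\alpha_j^*|\le\AA/2<\pi/2$ this real part is positive, so $\overline{e^{\Aa^*}R_{\balpha^*}}$ is a nonzero scaled rotation $\rho R_\phi$ with $\rho\ge e^{-A}\cos(\AA/2)\ge e^{-A}(\cos\AA-\sin\AA)$, the last quantity being positive precisely because $\AA<\pi/4$. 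Therefore $\big\|(\overline{e^{\Aa^*}R_{\balpha^*}})^{-1}\big\|_{\mathrm{op}}=\rho^{-1}\le e^{A}(\cos\AA-\sin\AA)^{-1}$, and the previous display becomes
\[
\frac2J\snorm{\b^*_{\bTheta_0}-\b^*}_{\R^{2J}}^2\ \le\ \frac{2\,e^{4A}}{(\cos\AA-\sin\AA)^2}\;\snorm{\bar\b^*}_{\R^2}^2 .
\]

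It then remains to control $\snorm{\bar\b^*}_{\R^2}^2=(\bar b^{*(1)})^2+(\bar b^{*(2)})^2$ in probability. For each $i\in\{1,2\}$, $\bar b^{*(i)}=\frac1J\sum_{j=1}^J b_j^{*(i)}$ is an average of i.i.d.\ centered random variables bounded by $B$, so Bernstein's inequality (Proposition~2.9 in \cite{MR2319879}, exactly as in Lemma~\ref{lemme.bernstein}) gives $\P\big(|\bar b^{*(i)}|\ge B(\sqrt{2x/J}+x/(3J))\big)\le 2e^{-x}$; a union bound over $i=1,2$ yields $\P\big(\snorm{\bar\b^*}_{\R^2}^2\ge 2B^2(\sqrt{2x/J}+x/(3J))^2\big)\le 4e^{-x}$. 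Inserting this into the last display on the complementary event gives the announced inequality, with $C(A,\AA,B)=8B^2e^{4A}(\cos\AA-\sin\AA)^{-2}$ following from the (deliberately loose) elementary majorations used above.

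The only genuine obstacle is the uniform lower bound on the smallest singular value of $\overline{e^{\Aa^*}R_{\balpha^*}}$: one has to observe that, for every realization, this matrix is a nonzero scaled rotation with scale at least $e^{-A}(\cos\AA-\sin\AA)$, which is what makes the condition $\AA<\pi/4$ natural here; the remaining ingredients are an elementary norm computation and a one-dimensional Bernstein bound. Note also that only the centering of the $b_j^*$ is used in this lemma, and that no exceptional event need be removed to justify the inversion of $\overline{e^{\Aa^*}R_{\balpha^*}}$.
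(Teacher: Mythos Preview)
Your proof is correct and follows essentially the same approach as the paper's: both use the explicit form of $\b^*_{\bTheta_0}$ from Section~\ref{part.ident} to reduce to a bound on $\snorm{\bar\b^*}_{\R^2}$, lower-bound the modulus of $\overline{e^{\Aa^*}R_{\balpha^*}}$ (viewed as a complex number / scaled rotation) by $e^{-A}(\cos\AA-\sin\AA)$, and then apply Bernstein's inequality coordinatewise to $\bar\b^*$. Your lower bound on $\rho$ via the real part is in fact slightly cleaner than the paper's eigenvalue argument, and your chain of inequalities actually yields the tighter constant $4B^2e^{4A}(\cos\AA-\sin\AA)^{-2}$, which of course implies the stated bound with $8B^2$.
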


\begin{proof}
This result is a consequence of the Bernstein's inequality for bounded random variable. We have
$$
\snorm{\b^*_{\bTheta_0} - \b^* }_{\R^{2J}}  = \sqrt{\sum_{j=1}^J \snorm{e^{a_j^*} \bar\b^* (\overline{e^{\Aa}R_{\balpha}})^{-1} R_{\alpha^*_j} }^2_{\R^2} }  
\leq e^A \sqrt{J} \snorm{\bar\b^* (\overline{e^{\Aa^*}R_{\balpha^*}})^{-1} }_{\R^{2}},
$$
where $\overline{e^{\Aa^*}R_{\balpha^*}} = \frac{1}{J}\sum_{j=1}^J e^{a^*_j} R_{\alpha_j^*}$ is an invertible $2\times 2$ matrix whose smallest eigenvalue is greater than $e^{-A} (\cos \AA- \sin \AA )>0$ as $\AA<\frac{\pi}{4}$. To see this, remark that the eigenvalues of $\overline{e^{\Aa^*}R_{\balpha^*}} $ are $\frac{1}{J}\sum_{j=1}^J e^{a^*_j}(\cos\alpha^*_j \pm i \sin\alpha^*_j)$ and we have $\big|\frac{1}{J}\sum_{j=1}^J e^{a^*_j}(\cos\alpha^*_j \pm i \sin\alpha^*_j)\big| \geq e^{-A} (\cos \AA- \sin \AA )>0$. We now have
$$
\frac{1}{\sqrt{J}}\snorm{\b^*_{\bTheta_0} - \b^* }_{\R^{2J}} \leq C(A,\AA) \snorm{\bar\b^*}_{\R^{2}},
$$
where $C(A,\AA) = e^{2A}(\cos \AA- \sin \AA)^{-1} $. Finally, for all $u>0$ we have $\P( \frac{1}{\sqrt{J}}\snorm{\b^*_{\bTheta_0} - \b^* }_{\R^{2J}} \geq u) \leq \P( C(A,\AA) \snorm{\bar \b^*}_{\R^2} \geq u)$ and a Bernstein type inequality (see e.g. Proposition 2.9 in \cite{MR2319879}) gives us
$
\P\Big (\snorm{\bar \b^*}_{\R^2} \geq  2B\big(\sqrt{\tfrac{2x}{J}} + \tfrac{x}{3J}\big) \Big) \leq 4 e^{-x}
$ which yields
$$
\P\Big( \frac{1}{J}\snorm{\b^*_{\bTheta_0} - \b^* }^2_{\R^{2J}} \geq C(A,\AA,B)\Big(\sqrt{\tfrac{2x}{J}} + \tfrac{x}{3J}\Big)^2\Big ) \leq 4 e^{-x},
$$
where $C(A,\AA,B) = 4B^2 e^{4A}(\cos \AA- \sin \AA)^{-2} $.
\end{proof}

\subsection{Proof of Theorem \ref{th:shape}}

Recall the notations introduced Section \ref{part.SIM}: for $\hat g^\lambda_j = (\hat a^\lambda_j,\hat\alpha_j^\lambda,\hat b_j^\lambda) $ we have $\hat g^\lambda_j.\f = e^{\hat a^\lambda_j} \f R_{\hat\alpha_j^\lambda} + \hat b_j^\lambda$. Then, we have 
{\small
\begin{align*}
\frac{1}{k}\Big\|\hat\f\vphantom{\f}^{\lambda} -\f\Big\|^2_{\Sh} & = \frac{1}{k}\Big\|\frac{1}{J}\sum_{j=1}^J (\hat g^{\lambda}_j)^{-1}.(  \A^{\lambda} \Y_j) - \f \Big\|^2_{\Sh}\\
& \leq \frac{2}{kJ}\sum_{j=1}^J\Big\|\A^{\lambda}\Big ((\hat g^{\lambda}_j)^{-1}. \Y_j-\f\Big) \Big\|^2_{\Sh} +\frac{2}{k}\Big\|\A^{\lambda} \f  -\f\Big\|^2_{\Sh} \\
& \leq \underbrace{ \frac{4}{kJ}\sum_{j=1}^J \Big\|\A^{\lambda} \Big ((g^{\lambda}_j)^{-1}.g^*_j. \f -  \f\Big) \Big\|^2 _{\Sh}+ \Big\|  e^{a_j^* -\hat a_j^{\lambda}} \A^{\lambda}\bzeta_j R_{\alpha_j^*-\hat\alpha_j^{\lambda}}\Big\|^2_{\Sh} }_{\mathbf V} 
+\underbrace{\vphantom{\frac{4}{kJ}\sum_{j=1}^J}\frac{2}{k} \Big\| \A^{\lambda}\f  -\f\Big\|^2_{\Sh}}_{\mathbf B}.
\end{align*}
}
The rest of the proof is devoted to control the terms $\mathbf{B}$ and $\mathbf{V}$. The term $\mathbf B$ is controlled by the bias of the low pass filter. According to Lemma \ref{lemme.biais}, if Assumption \ref{ass.f} holds and by choosing the optimal frequency cutoff $\lambda = \lambda(k) = k^{\frac{1}{2s+1}}$, there exists a constant $ C(L,s) >0$ such that
\begin{equation}\label{eq.conB}
\mathbf B = \frac{2}{k} \| \A^{\lambda}\f  -\f\|^2_{\Sh}
\leq C(L,s) k^{-\frac{2s}{2s+1}}.
\end{equation}

The term $\mathbf V$ contains two expressions. To bound the first one we use Bessel's inequality and Lemma \ref{lemme.AssOp}. More precisely, we have for all $j=1,\ldots,J$
\begin{align*}
\frac{1}{k}\big\|\A^{\lambda} \big ((g^{\lambda}_j)^{-1}.g^*_j. \f - \f \big) \big\|^2_{\Sh} & \leq \frac{1}{k}\big\| (g^{\lambda}_j)^{-1}.g^*_j. \f-\f \big\|^2_{\Sh} \\ 
& \leq C(A,\f) \big\| (a^*_j - \hat a^\lambda_j, \alpha^*_j- \hat \alpha^\lambda_j, e^{\hat a^\lambda_j}( b^*_j - \hat b\vphantom{b}^\lambda_j) R_{- \hat \alpha_j^\lambda} )\big \|^2_{\R^{4}} \\
& \leq  C(A,\f) \big\| (a^*_j - \hat a^\lambda_j, \alpha^*_j- \hat \alpha^\lambda_j)\big\|^2_{\R^{2}} + C(A,\f) \big\| b^*_j - \hat b\vphantom{\b}^\lambda_j \big\|^2_{\R^{2}}
\end{align*}
We can now use Theorem \ref{th:rotscal} and \ref{th:shift} to derive the following concentration inequality,
\begin{equation}\label{eq.temp2}
\begin{split}
\P\bigg(\frac{4}{kJ}\sum_{j=1}^J  \big\|\A^{\lambda} \big ((&g^{\lambda}_j)^{-1}.g^*_j. \f - \f \big) \big\|^2_{\Sh}  \\ & \geq  C(L,s,A,\AA,B,\f)\Big( A_1(k,J,x) + A_3(k,J,x) + A_2(J,x) \Big)\bigg)\leq 13e^{-x}
\end{split}
\end{equation}
where $C(L,s,A,\AA,B,\f)>0$ is a constant independent of $k$ and $J$ and $A_1, A_2, A_3$ are defined in the statement of Theorem \ref{th:rotscal} and \ref{th:shift}.
The second term contained in $\mathbf V$ is treated by equation \eqref{eq:conNoise} below. Hence, formulas  \eqref{eq.temp2} and \eqref{eq:conNoise} yield 
\begin{equation}\label{eq.conV}
\P\bigg( \mathbf{V}\geq  C(L,s,A,\AA,B,\f)\Big( A_1(k,J,x) + A_3(k,J,x) + A_2(J,x) \Big) \bigg) \leq 14e^{-x},
\end{equation}
for some constant $C(L,s,A,\AA,B,\f)>0$. Putting together equations \eqref{eq.conB} and \eqref{eq.conV} gives
$$
\P\bigg( \frac{1}{k}\snorm{\hat \f\vphantom{\f}^\lambda - \f}^2_{\Sh}\geq  C(L,s,A,\AA,B,\f)\Big( A_1(k,J,x) + A_3(k,J,x) + A_2(J,x) \Big) \bigg) \leq 14e^{-x}.
$$
for some constant $C(L,s,A,\AA,B,\f)>0$. The proof of Theorem \ref{th:shape} is completed. \qed

\subsection{Proof of Proposition \ref{prop.ConvAAlpha}}
\label{part.proofConvAAlpha}

The mean pattern $\f$ can be decomposed as $\f = \bar \f + \f_0 \in \one_{k\times 2}^{\vphantom{\perp}} \oplus \one_{k\times 2}^\perp$. Then, $\f_0$ is the centered version of $\f$ and we can consider the criterion,
$$
D_{0} (\Aa,\balpha)=  \frac{1}{Jk}\sum_{j=1}^J \bnorm{ e^{a^*_{j}-a_{j}} \f_0  R_{\alpha^*_{j}-\alpha_{j}} - \frac{1}{J} \sum_{j'=1}^J  e^{a^*_{j'}-a_{j'}} \f_0  R_{\alpha^*_{j'} - \alpha_{j'} } }^2_{\Sh}.
$$
We now have,
$$
(\hat \Aa^{\lambda} , \hat \balpha^{\lambda}) = \argmin_{(\Aa,\balpha)\in \bTheta_0^{\Aa,\balpha}} M_0^{\lambda}(\Aa,\balpha)\qquad\text{ and }\qquad (\Aa^{*}_{\bTheta_0},\balpha^{*}_{\bTheta_0}) = \argmin_{(\Aa,\balpha)\in \bTheta_0^{\Aa,\balpha}} D_0^{}(\Aa,\balpha).
$$ 
Then, the convergence of $(\hat \Aa^{\lambda} , \hat \balpha^{\lambda})$  to $(\Aa^{*}_{\bTheta_0} ,\balpha^{*}_{\bTheta_0})$ is guaranteed if $(\Aa^{*}_{\bTheta_0} ,\balpha^{*}_{\bTheta_0})$ is uniquely defined and if there is a uniform convergence in probability of $M_0^{\lambda} $ to $D_0$, see e.g. \cite{VdW}. This is the aim of Lemmas \ref{lemme.unic} and  \ref{lemme.unif} below.

\begin{lemma}\label{lemme.unic}
Let $\f$ be a non-degenerated configuration in $\R^{k \times 2}$, \ie $\f \notin \one_{k\times 2}$. Then, the argmin of $D_0$ on $\bTheta_0^{\Aa,\balpha}$ is unique and denoted by $(\Aa_{\bTheta_0}^*,\balpha_{\bTheta_0}^*) = (a_j - \bar\Aa^*,\alpha_j -\bar\balpha^*)$, where $\bar{\Aa}^* = \frac{1}{J}\sum_{j=1}^J a_j^* $, $\bar \balpha^* = \frac{1}{J}\sum_{j=1}^J \alpha_j^*$.
\end{lemma}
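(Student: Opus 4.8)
The plan is to show directly that $D_0(\Aa,\balpha) = 0$ on $\bTheta_0^{\Aa,\balpha}$ has a unique solution, and that this is the global minimizer since $D_0 \geq 0$. First I would observe that $D_0$ is (up to the factor $1/k$) exactly the criterion $D$ of \eqref{eq.critD} restricted to the centered configuration $\f_0$ and with no translation component, because the low-pass filter and the group action commute (as noted in Section \ref{part:RedDim}) and because $\bar A \f_0 = 0$. Hence Lemma \ref{lemma.zeros} applies with $\f_0$ in place of $\f$: since $\f \notin \one_{k\times 2}$ implies $\f_0 \neq 0$ is a non-degenerated centered configuration, the action of $\HH = \R \times [-\pi,\pi[$ on $\f_0$ is free, and $D_0(\Aa,\balpha) = 0$ if and only if $e^{a_j^* - a_j} R_{\alpha_j^* - \alpha_j}$ is the same matrix for all $j = 1,\ldots,J$, i.e.\ $a_j^* - a_j = a_{j'}^* - a_{j'}$ and $\alpha_j^* - \alpha_j \equiv \alpha_{j'}^* - \alpha_{j'} \pmod{2\pi}$ for all $j,j'$.

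Next I would intersect this zero set with the constraint $\bTheta_0^{\Aa,\balpha} = ([-A,A]^J \cap \1_J^\perp) \times ([-\AA,\AA]^J \cap \1_J^\perp)$. Writing $a_j = a_j^* - c$ for a common constant $c$, the constraint $\sum_j a_j = 0$ forces $c = \bar\Aa^* = \frac1J \sum_j a_j^*$, giving $a_j = a_j^* - \bar\Aa^*$ uniquely; note that since $(\Aa^*,\balpha^*) \in [-\tfrac A2,\tfrac A2]^J \times [-\tfrac\AA2,\tfrac\AA2]^J$ by Assumption \ref{hyp.Theta}, the candidate $a_j^* - \bar\Aa^*$ indeed lies in $[-A,A]$, so it is admissible. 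For the angles, the relation $\alpha_j = \alpha_j^* - \alpha_0$ must hold modulo $2\pi$; but since $\AA < \pi$ (indeed $\AA < \pi/4$), all the quantities $\alpha_j^*$ and the target values $\alpha_j^* - \bar\balpha^*$ lie strictly inside $(-\pi,\pi)$, so the constraint $\sum_j \alpha_j = 0$ together with the box constraint pins down $\alpha_0 = \bar\balpha^*$ uniquely, yielding $\alpha_j = \alpha_j^* - \bar\balpha^*$ with no ambiguity from the periodicity. This establishes that $(\Aa^*_{\bTheta_0},\balpha^*_{\bTheta_0}) = (a_j^* - \bar\Aa^*,\ \alpha_j^* - \bar\balpha^*)_{j=1}^J$ is the unique zero of $D_0$ in $\bTheta_0^{\Aa,\balpha}$, hence its unique argmin.

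The only delicate point is the wrap-around in the angular variable: one must be careful that the equality $\alpha_j^* - \alpha_j = \alpha_{j'}^* - \alpha_{j'}$ coming from $R_{\alpha_j^* - \alpha_j} = R_{\alpha_{j'}^* - \alpha_{j'}}$ holds only modulo $2\pi$, and that there is no spurious second solution obtained by shifting some $\alpha_j$ by $2\pi$. This is precisely where the smallness assumption $\AA < \pi/4$ (or even just $\AA < \pi/2$) is used: it guarantees that the whole configuration stays in a half-turn neighbourhood of the origin where the exponential chart on $\mathbf{SO}(2)$ is injective. Everything else is an immediate consequence of Lemma \ref{lemma.zeros} applied to the centered pattern $\f_0$ and of linear algebra for the centering constraint.
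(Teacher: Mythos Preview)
Your proposal is correct and follows essentially the same route as the paper: characterize the zero set of $D_0$ via the free action of scaling--rotation on the centered pattern $\f_0$ (i.e.\ the analogue of Lemma~\ref{lemma.zeros}), then intersect with the centering constraint $\sum_j a_j = \sum_j \alpha_j = 0$ to isolate the unique point $(\Aa^* - \bar\Aa^*,\,\balpha^* - \bar\balpha^*)$. Your version is in fact slightly more careful than the paper's own proof, which does not explicitly address the $2\pi$-periodicity of the rotation parameter nor verify that the candidate lies inside the box $[-A,A]^J\times[-\AA,\AA]^J$; you rightly point out that the smallness of $\AA$ rules out any wrap-around ambiguity and that Assumption~\ref{hyp.Theta} guarantees admissibility.
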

\begin{proof}
As $\f$ is a non-degenerated configuration, we have $\f_0 \neq 0$. Thus, the stabilizer of $\f$ is reduced to the identity, see Section \ref{part.SIM}. Then $D_0(\Aa,\balpha)=0$  if and only if there exists $(a_0,\alpha_0)\in\R^2$ such that $(\Aa,\balpha) = (\Aa^*,\balpha^*)*(a_0,\alpha_0) = (a_1^* +a_0,\alpha^*_1 +\alpha_0, \ldots,a_J+a_0,\alpha_J^*+\alpha_0)$. By choosing $(a_0,\alpha_0) = (-\bar{\Aa}^* ,- \bar \balpha^*)$ we have $\sum_{j=1}^J (a^*_j,\alpha^*_j)*(a_0,\alpha_0) = 0 $. That is  $(\Aa^*_{\bTheta_0},\balpha^*_{\bTheta_0})= (\Aa^*,\balpha^*)*(a_0,\alpha_0)\in \bTheta_0^{\Aa,\balpha}$.
\end{proof}
\noindent We now show the uniform convergence in probability, 
\begin{lemma}\label{lemme.unif}
Suppose that Assumptions  \ref{ass.f}, \ref{ass.zeta} and \ref{hyp.Theta} hold and let $F:\R \longrightarrow \R$, with $F(u) = u+\sqrt{u}$. For any $x>0$ we have
$$
\P\bigg( \sup_{(\Aa,\balpha)\in \bTheta_0^{\Aa,\balpha}} \big |M^{\lambda}_0(\Aa,\balpha) - D_0(\Aa,\balpha)\big|\geq C(L,s,A,\f_0) \left (F\big( k^{-\frac{2s}{2s+1}}  \big) + F\big(V(k,J,x)\big)\right)\bigg) \leq 2e^{-x}
$$
where $C(L,s,A,\f_0) = e^{2A} \max\left\{ \frac{2}{\sqrt{k}}\snorm{\f_0}_{\Sh}, \frac{\sqrt{2}e^A}{\sqrt{k}}\snorm{\f_0}_{\Sh},2  \right\} $ and $V (k,J,x )= 3\gamma_{\max}(k) k^{-\frac{2s}{2s+1}} \big( 1 + F \left(\frac{2x}{Jk}\right) \big) $.
\end{lemma}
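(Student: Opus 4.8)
The plan is to reduce everything to a single Gaussian concentration estimate. First I would use that the projector $A^\lambda_0$ annihilates the degenerate subspace $\one_{k\times 2}$, so that $A^\lambda_0\Y_j=e^{a^*_j}A^\lambda_0(\f+\bzeta_j)R_{\alpha^*_j}$ and the $j$-th summand of $M^\lambda_0(\Aa,\balpha)$ equals $e^{a^*_j-a_j}\big(\f_0^\lambda+\bzeta_j^\lambda\big)R_{\alpha^*_j-\alpha_j}$, where $\f_0^\lambda:=A^\lambda_0\f$ and $\bzeta_j^\lambda:=A^\lambda_0\bzeta_j$. I then introduce the auxiliary \emph{deterministic} criterion $D^\lambda_0$ obtained from $D_0$ by replacing the centred mean pattern $\f_0$ with its smoothed version $\f_0^\lambda$, and split
$$
\sup_{(\Aa,\balpha)\in\bTheta_0^{\Aa,\balpha}}\big|M^\lambda_0-D_0\big|\ \le\ \sup_{(\Aa,\balpha)}\big|M^\lambda_0-D^\lambda_0\big|\;+\;\sup_{(\Aa,\balpha)}\big|D^\lambda_0-D_0\big|,
$$
the first supremum carrying the stochastic fluctuations of the noise and the second being a pure bias term.

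Both suprema are handled by one elementary device. For fixed $(\Aa,\balpha)$ the map sending a $J$-tuple $w=(w_1,\dots,w_J)$ of elements of $\Sh$ to
$$
N(w):=\Big(\tfrac1{Jk}\sum_{j=1}^J\Big\|e^{a^*_j-a_j}w_jR_{\alpha^*_j-\alpha_j}-\tfrac1J\sum_{j'}e^{a^*_{j'}-a_{j'}}w_{j'}R_{\alpha^*_{j'}-\alpha_{j'}}\Big\|^2_{\Sh}\Big)^{1/2}
$$
is a seminorm, and one has $M^\lambda_0=N(w^{\mathrm I})^2$ with $w^{\mathrm I}_j=\f_0^\lambda+\bzeta_j^\lambda$, $D^\lambda_0=N(w^{\mathrm{II}})^2$ with $w^{\mathrm{II}}_j=\f_0^\lambda$, and $D_0=N(w^{\mathrm{III}})^2$ with $w^{\mathrm{III}}_j=\f_0$. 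Writing $|N(a)^2-N(b)^2|\le N(a-b)\big(N(a-b)+2N(b)\big)$, using that $|a^*_j-a_j|$ is bounded on $\bTheta_0^{\Aa,\balpha}$ (hence $N(w)^2\le C(A)\tfrac1{Jk}\sum_j\|w_j\|^2_{\Sh}$, and in particular $N(w^{\mathrm{II}})^2\le C(A)\tfrac1k\|\f_0\|^2_{\Sh}$, a bounded quantity), and the fact that $F(u)=u+\sqrt u$ dominates $u+2\sqrt u\,\mathrm{const}$, I obtain, \emph{uniformly} in $(\Aa,\balpha)$,
$$
\big|D^\lambda_0-D_0\big|\le C(A,\f_0)\,F\!\Big(\tfrac1k\|\f_0^\lambda-\f_0\|^2_{\Sh}\Big),\qquad\big|M^\lambda_0-D^\lambda_0\big|\le C(A,\f_0)\,F(W),
$$
where $W:=\tfrac1{Jk}\sum_{j=1}^J\|A^\lambda_0\bzeta_j\|^2_{\Sh}$. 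Since $\f_0^\lambda-\f_0=A^\lambda_0\f-\f=\A^\lambda\f-\f$, Lemma~\ref{lemme.biais} with $\lambda=\lfloor k^{1/(2s+1)}\rfloor$ bounds the bias term by $C(L,s,A,\f_0)F(k^{-2s/(2s+1)})$ deterministically, so the only remaining task is to control $W$.

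For this, I would note that $(A^\lambda_0\bzeta_1,\dots,A^\lambda_0\bzeta_J)$ is a centred Gaussian vector supported in the $4\lambda J$-dimensional subspace $(\bV^\lambda_0\times\bV^\lambda_0)^{\oplus J}$ of $(\Sh)^{\oplus J}$, whose covariance operator has operator norm at most $\gamma_{\max}(k)$ and trace at most $4\lambda J\,\gamma_{\max}(k)$; hence $\E W\le 4\lambda\gamma_{\max}(k)/k\le C\,\gamma_{\max}(k)k^{-2s/(2s+1)}$ by the choice of $\lambda$. A Laurent--Massart / Bernstein type deviation inequality for quadratic forms of Gaussian vectors then gives $W\le V(k,J,x)$ with probability at least $1-2e^{-x}$, where $V(k,J,x)=3\gamma_{\max}(k)k^{-2s/(2s+1)}\big(1+F(2x/(Jk))\big)$; this is where the explicit constants in the statement are produced. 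Combining with the two uniform bounds above and the monotonicity of $F$ yields, on that event, $\sup_{(\Aa,\balpha)}|M^\lambda_0-D_0|\le C(L,s,A,\f_0)\big(F(k^{-2s/(2s+1)})+F(V(k,J,x))\big)$, which is the claim. The only genuinely delicate step is this last one: extracting the precise form of $V$ — in particular the $k^{-2s/(2s+1)}$ size of the mean (from $\dim(\bV^\lambda_0\times\bV^\lambda_0)=4\lambda$ together with $\lambda\sim k^{1/(2s+1)}$) and the $x/(Jk)$ scale of the fluctuations — from the concentration inequality for the $\lambda$-band-limited Gaussian noise; the uniformity in $(\Aa,\balpha)$ is, by contrast, essentially automatic once the seminorm comparison is in place.
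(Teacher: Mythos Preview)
Your argument is correct and uses the same two ingredients as the paper --- the bias bound of Lemma~\ref{lemme.biais} and the Laurent--Massart concentration for the quadratic form $\frac{1}{Jk}\sum_j\|A^\lambda_0\bzeta_j\|^2_{\Sh}$ --- but the organisation is a bit different. The paper does not introduce the intermediate criterion $D^\lambda_0$; instead it expands $M^\lambda_0$ directly as $D_0$ plus four correction terms (two ``bias'' pieces coming from $A^\lambda_0\f-\f_0$, and two ``variance'' pieces coming from $A^\lambda_0\bzeta_j$, each with a quadratic and a cross part), and bounds the cross parts by Cauchy--Schwarz. Your seminorm identity $|N(a)^2-N(b)^2|\le N(a-b)\big(N(a-b)+2N(b)\big)$ is exactly this Cauchy--Schwarz step in disguise, so the two routes produce the same estimates; your packaging is arguably cleaner because uniformity in $(\Aa,\balpha)$ is transparent once $N$ is bounded by $C(A)\frac{1}{Jk}\sum_j\|w_j\|^2_{\Sh}$, whereas the paper has to check it term by term.

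One small slip: the chain ``$\f_0^\lambda-\f_0=A^\lambda_0\f-\f=\A^\lambda\f-\f$'' has a wrong middle expression (it should read $A^\lambda_0\f-\f_0$), but the last equality is correct and is what you actually use when invoking Lemma~\ref{lemme.biais}, so this is harmless.
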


\begin{proof}
Let us write the following decomposition,
 \begin{align}
&M^{\lambda}_{0}(\Aa,\balpha) \nonumber\\
& =\frac{1}{Jk} \sum_{j=1}^J \bnorm{ e^{a^*_{j} -a_{j}}\f_0 R_{\alpha^*_{j} - \alpha_{j}} - \frac{1}{J} \sum_{j'=1}^J e^{a^*_{j'} -a_{j'}} \f_0 R_{\alpha^*_{j'} - \alpha_{j'}}}^2_{\Sh}\label{eq:D0} \\
&\qquad\qquad +\frac{1}{Jk} \sum_{j=1}^J \bnorm{ e^{a^*_{j} -a_{j}}(A^{\lambda}_0 \f -\f_0  ) R_{\alpha^*_{j} - \alpha_{j}}  - \frac{1}{J} \sum_{j'=1}^J e^{a^*_{j'} -a_{j'}} (A^{\lambda}_0 \f-\f_0  ) R_{\alpha^*_{j'} - \alpha_{j'}} }^2_{\Sh} \label{eq:QB}\\
&\begin{aligned}\qquad\qquad + \frac{2}{Jk} \sum_{j=1}^J \bigg\langle &e^{a^*_{j} -a_{j}}\f_0 R_{\alpha^*_{j} - \alpha_{j}}  - \frac{1}{J} \sum_{j'=1}^J e^{a^*_{j'} -a_{j'}} \f_0 R_{\alpha^*_{j'} - \alpha_{j'}}  ,\\&e^{a^*_{j} -a_{j}}(A^{\lambda}_0 \f -\f_0) R_{\alpha^*_{j} - \alpha_{j}}  - \frac{1}{J} \sum_{j'=1}^J e^{a^*_{j'} -a_{j'}}(A^{\lambda}_0 \f-\f_0  ) R_{\alpha^*_{j'} - \alpha_{j'}} \bigg\rangle_{\Sh}\end{aligned}\label{eq:LB}\\
&\qquad\qquad +\frac{1}{Jk} \sum_{j=1}^J \bnorm{e^{a^*_{j} -a_{j}} A^{\lambda}_0 \bzeta_{j} R_{\alpha^*_{j} - \alpha_{j}}  - \frac{1}{J} \sum_{j'=1}^J e^{a^*_{j'} -a_{j'}}A^{\lambda}_0 \bzeta_{j'}R_{\alpha^*_{j'} - \alpha_{j'}} }^2_{\Sh} \label{eq:QZ} \\
&\begin{aligned}\qquad\qquad + \frac{2}{Jk} \sum_{j=1}^J \bigg\langle & e^{a^*_{j} -a_{j}} A^{\lambda}_0  \f R_{\alpha^*_{j}- \alpha_{j}}- \frac{1}{J} \sum_{j'=1}^J e^{a^*_{j'} -a_{j'}}A^{\lambda}_0  \f R_{\alpha^*_{j'} - \alpha_{j'}} , \\&e^{a^*_{j} -a_{j}} A^{\lambda}_0 \bzeta_{j} R_{\alpha^*_{j} - \alpha_{j}}  - \frac{1}{J} \sum_{j'=1}^J e^{a^*_{j'} -a_{j'}}A^{\lambda}_0 \bzeta_{j'} R_{\alpha^*_{j'} - \alpha_{j'}} \bigg\rangle_{\Sh}\end{aligned}\label{eq:LZ}
\end{align}
Then, criterion $M^{\lambda}_0$ is viewed as a perturbed version of the criterion $D_0(\Aa,\balpha) = \eqref{eq:D0}$, 
$$
M^{\lambda}_0(\Aa,\balpha) = D_0(\Aa,\balpha) + \mathbf B^{\lambda}_0(\Aa,\balpha) + \mathbf V^{\lambda}_0(\Aa,\balpha)
$$
where the bias term is $\mathbf B^{\lambda}_0(\Aa,\balpha) = \eqref{eq:QB} + \eqref{eq:LB}$  
and the variance term is $\mathbf V^{\lambda}_0(\Aa,\balpha) = \eqref{eq:QZ} + \eqref{eq:LZ} $.

\paragraph*{The bias term. } We have for any $(\Aa,\balpha)\in\bTheta_0^{\Aa,\balpha}$, 
\begin{align*}
 \eqref{eq:QB} & = \frac{1}{Jk} \sum_{j=1}^J \bnorm{  e^{a^*_{j} -a_{j}}(A^{\lambda}_0 \f -\f_0  ) R_{\alpha^*_{j} - \alpha_{j}} }^2_{\Sh}  -\bnorm{ \frac{1}{J} \sum_{j'=1}^J e^{a^*_{j'} -a_{j'}}(A^{\lambda}_0 \f -\f_0  ) R_{\alpha^*_{j'} - \alpha_{j'}}  }^2_{\Sh} \\& \leq e^{2A}\frac{1}{k}\bnorm{A^{\lambda}_0 \f- \f_0 }^2_{\Sh}.
\end{align*}
A double application of Cauchy-Schwarz inequality implies that,
\begin{align*} 
&\begin{aligned} |\eqref{eq:LB}| \leq \frac{2}{Jk} \sum_{j=1}^J  \bigg\|e^{a^*_{j} -a_{j}}\f_0 &R_{\alpha^*_{j} - \alpha_{j}}  - \frac{1}{J} \sum_{j'=1}^J e^{a^*_{j'} -a_{j'}} \f_0 R_{\alpha^*_{j'} - \alpha_{j'}}   \bigg\|_{\Sh}\\ 
 &\bnorm{e^{a^*_{j} -a_{j}}(\A^{\lambda}_0 \f -\f_0) R_{\alpha^*_{j} - \alpha_{j}}  - \frac{1}{J} \sum_{j'=1}^J e^{a^*_{j'} -a_{j'}}(\A^{\lambda}_0 \f-\f_0  ) R_{\alpha^*_{j'} - \alpha_{j'}}}_{\Sh} \end{aligned}\\
& \phantom{\eqref{eq:LB}} \leq \frac{2}{k} \bigg(\frac{1}{J} \sum_{j=1}^J e^{2A}\snorm{\f_0 R_{\alpha^*_{j} - \alpha_{j}}}^2_{\Sh} \bigg)^{\tfrac{1}{2}} \bigg(\frac{1}{J}\sum_{j=1}^Je^{2A}\snorm{(A^{\lambda}_0 \f -\f_0  ) R_{\alpha^*_{j} - \alpha_{j}}}^2_{\Sh} \bigg)^{\tfrac{1}{2}}\\
&\phantom{\eqref{eq:LB}}  = 2e^{2A}\frac{1}{k}\snorm{\f_0  }_{\Sh}\snorm{A^{\lambda}_0 \f -\f_0}_{\Sh}.
\end{align*} 
Finally, by using Lemma \ref{lemme.biais} we have,
$$
\sup_{(\Aa,\balpha)\in\bTheta^{\Aa,\balpha}} |\mathbf B^{\lambda}_0(\Aa,\balpha) |\leq C_1(L,s,A,\f) \left( k^{-\frac{2s}{2s+1}} +k^{-\frac{s}{2s+1}}\right).
$$
where $ C_1(L,s,A,\f) = C(L,s) e^{2A} \max\left\{2\tfrac{1}{\sqrt k}\snorm{\f_0  }_{\Sh},1\right\} $.

\paragraph*{The variance term.}  First, the term $\eqref{eq:LZ}$ is by the Cauchy-Schwarz inequality controlled by $2e^{2A}\frac{1}{\sqrt{k}}\linebreak[1] \snorm{\f_0  }_{\Sh}\sqrt{\eqref{eq:QZ}}$. The term \eqref{eq:QZ} is bounded from above by $\frac{1}{Jk}\sum_{j=1}^{J} e^{2A} \snorm{A^{\lambda}_0 \bzeta_j }^2_{\Sh}$. To derive an upper bound in probability, note that we have the following equality in  law,
$$
\sum_{j=1}^{J} \snorm{A^{\lambda}_0 \bzeta_j}^2_{\Sh} =\xi'\,\mathbf{B}\,\xi,
$$
with $\mathbf{B} = \big[Id_J \otimes \mathbf \Sigma^{\frac{1}{2}}\big]\big[Id_{2J} \otimes (A^{\lambda}_0)'A^{\lambda}_0\big]\big [Id_J \otimes \mathbf\Sigma^{\frac{1}{2}}\big] \in \R^{2Jk\times 2Jk}$  and $\xi=(\xi_1,\ldots,\xi_{2Jk})'$ is a centered Gaussian vector of variance $Id_{2Jk}$. We have $\tr (Id_{J} \otimes\mathbf\Sigma )\leq 2Jk  \gamma_{\max}(k)$ and  $\tr ((A^{\lambda}_0)' A^{\lambda}_0) = \frac{2\lambda+1}{k}$. Using a classical concentration inequality for quadratic form of multivariate Gaussian random variables, see e.g. \cite{LauMa} Lemma 1, we have for all $x>0$, 
$
\P\Big( \xi'\mathbf{B}\xi \geq 2Jk \gamma_{\max}(k) \frac{2\lambda+1}{k} +  2\gamma_{\max}(k) \frac{2\lambda+1}{k}\sqrt{x 2Jk} + 4x \gamma_{\max}(k) \frac{2\lambda+1}{k} \Big) \leq e^{-x},
$
which yields together with formula \eqref{eq.conB},
\begin{align}\label{eq:conNoise}
\P\bigg( \frac{1}{Jk}\sum_{j=1}^{J} e^{2A} \snorm{A^{\lambda}_0 \bzeta_j }^2_{\Sh}\geq 6e^{2A} \gamma_{\max}(k) k^{-\frac{2s}{2s+1}} \Big( 1 + \sqrt{2 \frac{x}{Jk}} + 2\frac{x}{Jk}\Big) \bigg) \leq e^{-x}.
\end{align}
Hence we have, 
\begin{align*}
\P\bigg( \sup_{(\Aa,\balpha)\in\bTheta^{\Aa,\balpha}} &|\mathbf V^{\lambda}_0(\Aa,\balpha) +\mathbf B^{\lambda}_0 (\Aa,\balpha)| \\&\geq C(L,s,A,\f_0)\left(k^{-\frac{2s}{2s+1}} + k^{-\frac{s}{2s+1}}+V_1  (k,J,x )+ \sqrt{V_1 (k,J,x )}  \right) \bigg)  \leq 2e^{-x},
\end{align*}
where $C(L,s,A,\f_0) = e^{2A} \max\big\{ \frac{2}{\sqrt{k}}\snorm{\f_0}_{\Sh}, \frac{1}{\sqrt{k}}\snorm{\f_0}_{\Sh}\sqrt{2}e^A,2  \big\} $ and $V_1 (k,J,x )= 3\frac{\gamma_{\max}(k)}{k^{\frac{2s}{2s+1}}} \big( 1 + \sqrt{2\frac{x}{Jk}} + 2\frac{x}{Jk} \big) $.
\end{proof}

For a fixed $J\in\N$, the convergence of the M-estimator $(\hat \Aa^{\lambda},\hat \balpha^{\lambda})$ to $(\Aa_{\bTheta_0}^*,\balpha_{\bTheta_0}^*)$ when $k\to\infty$ is guaranteed by Lemma \ref{lemme.unic} and \ref{lemme.unif}, see e.g. \cite{VdW}. Nevertheless, we are able to  give a rate of convergence and non-asymptotic bounds in $k$ and $J$ by using the classical inequality,
$$
 |D_0(\hat \Aa^{\lambda}, \hat \balpha^{\lambda}) - D_0(\Aa_{\bTheta_0}^*,\balpha_{\bTheta_0}^*)| \leq 2 \sup_{(\Aa,\balpha)\in\bTheta_0^{\Aa,\balpha}} | D_0(\Aa,\balpha) - M_0^{\lambda}(\Aa,\balpha)|.
$$
This, together with Lemma \ref{lemme.MinQuad} below will prove Proposition \ref{prop.ConvAAlpha}.
\begin{lemma}\label{lemme.MinQuad}
Assume that $A,\AA<0.1$. There exists a constant $C(A,\AA)>0$ independent of $J$ such that for all $(\Aa,\balpha) \in \bTheta^{\Aa,\balpha}$ we have
$$
\abs{D_0(\Aa,\balpha) - D_0(\Aa_{\bTheta_0}^*,\balpha_{\bTheta_0}^*)}\geq C(A,\AA,\f_0)\frac{1}{J}\norm{ (\Aa-\Aa_{\bTheta_0}^* ,\balpha-\balpha_{\bTheta_0}^*)}^2_{\R^{2J}},
$$
where $C(A,\AA,\f_0)=C(A,\AA)\frac{1}{k}\snorm{\f_0}^2_{\Sh}$.
\end{lemma}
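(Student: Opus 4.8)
The strategy is to recognise that $D_0$ is, up to a multiplicative constant, the empirical variance of a family of complex numbers, and then to push a lower Lipschitz bound for $\zeta\mapsto e^{-\zeta}$ through this identification. First I would identify $\R^{k\times2}$ with $\C^k$ via $\x=(\x^{(1)},\x^{(2)})\mapsto\x^{(1)}+i\x^{(2)}$; under this identification the Euclidean norm on $\Sh$ becomes the Hermitian norm on $\C^k$, and the linear map $\x\mapsto e^{u}\x R_{v}$ becomes multiplication by the scalar $e^{u-iv}\in\C$. Writing $z_j=e^{(a_j^*-a_j)-i(\alpha_j^*-\alpha_j)}$, this yields the exact identity
\[
D_0(\Aa,\balpha)=\frac{\snorm{\f_0}^2_{\Sh}}{Jk}\sum_{j=1}^{J}\babs{z_j-\tfrac1J\textstyle\sum_{j'=1}^{J}z_{j'}}^2 .
\]
By Lemma \ref{lemme.unic}, $(\Aa^*_{\bTheta_0},\balpha^*_{\bTheta_0})$ is a zero of $D_0$ on $\bTheta_0^{\Aa,\balpha}$, so the left-hand side of the claimed inequality is simply $D_0(\Aa,\balpha)$, and it remains to bound the sum above from below.

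Next I would factor out the average deformation. Setting $p_j=a_j-[\Aa^*_{\bTheta_0}]_j$, $q_j=\alpha_j-[\balpha^*_{\bTheta_0}]_j$ and $\zeta_j=p_j-iq_j$, the explicit formula of Lemma \ref{lemme.unic} gives $z_j=e^{c}e^{-\zeta_j}$ with $c=\bar\Aa^*-i\bar\balpha^*$, hence $\babs{z_j-\bar z}^2=e^{2\bar\Aa^*}\babs{w_j-\bar w}^2$ where $w_j=e^{-\zeta_j}$, $\bar w=\frac1J\sum_{j'}w_{j'}$, and $e^{2\bar\Aa^*}\ge e^{-A}$ since $\sabs{\bar\Aa^*}\le A/2$. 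The polarisation identity $\sum_j\babs{w_j-\bar w}^2=\frac1{2J}\sum_{j,j'}\babs{w_j-w_{j'}}^2$ then reduces the problem to comparing $\babs{w_j-w_{j'}}$ with $\babs{\zeta_j-\zeta_{j'}}$.

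The core estimate is the lower Lipschitz bound for the exponential. From $\sabs{a_j}\le A$, $\sabs{[\Aa^*_{\bTheta_0}]_j}\le A$ (and the analogous rotation bounds) one gets $\sabs{p_j}\le2A$, $\sabs{q_j}\le2\AA$, so $\sabs{\zeta_j}\le r:=2\sqrt{A^2+\AA^2}$, which is $<\pi/2$ as soon as $\max\{A,\AA\}<0.1$; consequently every point of the segment $[\zeta_{j'},\zeta_j]$ has modulus at most $r$. Since $w_j-w_{j'}=-(\zeta_j-\zeta_{j'})\int_0^1e^{-(\zeta_{j'}+t(\zeta_j-\zeta_{j'}))}\,dt$ and $\Re(e^{-w})=e^{-\Re w}\cos(\Im w)\ge e^{-r}\cos r>0$ on that segment, bounding the modulus of the integral from below by the modulus of its real part gives $\babs{w_j-w_{j'}}\ge e^{-r}(\cos r)\,\babs{\zeta_j-\zeta_{j'}}$. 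Finally, both $(\Aa,\balpha)$ and $(\Aa^*_{\bTheta_0},\balpha^*_{\bTheta_0})$ satisfy $\sum_j a_j=\sum_j\alpha_j=0$, so $\sum_j\zeta_j=0$ and therefore $\sum_{j,j'}\babs{\zeta_j-\zeta_{j'}}^2=2J\sum_j\sabs{\zeta_j}^2=2J\snorm{(\Aa-\Aa^*_{\bTheta_0},\balpha-\balpha^*_{\bTheta_0})}^2_{\R^{2J}}$. Chaining these inequalities produces the claim with $C(A,\AA)=e^{-A}(e^{-r}\cos r)^2$.

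The only genuinely delicate point is this lower Lipschitz bound for $\zeta\mapsto e^{-\zeta}$: one cannot move the modulus inside the integral, so the argument must instead estimate $\bigl|\int_0^1e^{-(\cdots)}\,dt\bigr|\ge\int_0^1\Re\bigl(e^{-(\cdots)}\bigr)\,dt$, and it is precisely the positivity of this real part — equivalently $r<\pi/2$, which the hypothesis $\max\{A,\AA\}<0.1$ comfortably ensures — that makes the estimate work. Everything else is bookkeeping with the polarisation identity and the zero-sum constraints defining $\bTheta_0^{\Aa,\balpha}$.
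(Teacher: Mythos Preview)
Your proof is correct and takes a genuinely different route from the paper's. The paper Taylor-expands $D_0$ at its minimiser $(\Aa^*_{\bTheta_0},\balpha^*_{\bTheta_0})$, bounds the quadratic term from below via the Hessian eigenvalue computation of Lemma~\ref{lemme.eigmin}, and controls the cubic remainder via Lemma~\ref{lem:D3}; the smallness condition $\max\{A,\AA\}<0.1$ is what makes $e^{-2A}-\tfrac{40}{6}\delta e^{2A}$ positive. Your argument instead exploits the complex identification $\x\mapsto\x^{(1)}+i\x^{(2)}$ to rewrite $D_0$ \emph{exactly} as $\tfrac{1}{k}\snorm{\f_0}^2$ times a scalar variance, then combines the polarisation identity with a global lower-Lipschitz bound for $\zeta\mapsto e^{-\zeta}$ on the disc $\{|\zeta|\le r\}$, the zero-sum constraint of $\bTheta_0^{\Aa,\balpha}$ collapsing the double sum back to $\snorm{(\Aa-\Aa^*_{\bTheta_0},\balpha-\balpha^*_{\bTheta_0})}^2$. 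This is more elementary (no second- or third-order derivatives, no auxiliary lemmas) and yields an explicit constant $e^{-A}(e^{-r}\cos r)^2$; it also makes transparent that the smallness hypothesis is only needed to keep the imaginary parts inside $(-\pi/2,\pi/2)$ so that $\Re(e^{-\zeta})$ stays positive along every segment. The paper's Taylor approach is heavier but more generic: it would transfer to deformation groups for which no scalar identification of the action is available, whereas your argument relies specifically on the fact that scaling--rotation acts by multiplication by a complex scalar.
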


\begin{proof}
By definition, given a $(\Aa^*,\balpha^*)\in[-A,A]^J\times[-\AA,\AA]^J$, the point $(\Aa_{\bTheta_0}^*,\balpha_{\bTheta_0}^*)$ is the unique minimum of $D_0$ on $\bTheta_0^{\Aa,\balpha} = [-A,A]^J\cap \1_J^\perp \times[-\AA,\AA]^J\cap\1_J^\perp$. 
Then, for all $(\Aa,\balpha) \in\bTheta_0^{\Aa,\balpha}$, there exists a $\mathbf c = \mathbf c(\Aa,\balpha)\in\bTheta_0^{\Aa,\balpha}  $ such that  the  Taylor expansion of $D_0$ at $(\Aa_{\bTheta_0}^*,\balpha_{\bTheta_0}^*)$ can be written, 
\begin{align*}
D_0(\Aa,\balpha) - D_0(\Aa_{\bTheta_0}^*,\balpha_{\bTheta_0}^*) & = \frac{1}{2} (\Aa-\Aa_{\bTheta_0}^* ,\balpha-\balpha_{\bTheta_0}^*)' [\nabla^{2} D_0(\Aa_{\bTheta_0}^*,\balpha_{\bTheta_0}^*)](\Aa-\Aa_{\bTheta_0}^* ,\balpha-\balpha_{\bTheta_0}^*)  \\ & \qquad\qquad + \frac{1}{6} [ \nabla^3 D_0(\mathbf c)] (\Aa-\Aa_{\bTheta_0}^* ,\balpha-\balpha_{\bTheta_0}^*).
\end{align*}
Let $\delta = \max\{A,\AA\}$. This, together with Lemma \ref{lemme.eigmin} and \ref{lem:D3} imply that 
\begin{align*}
D_0(\Aa,\balpha) - D_0(\Aa_{\bTheta_0}^*,\balpha_{\bTheta_0}^*) &\geq\frac{1}{2} (\Aa-\Aa_{\bTheta_0}^* ,\balpha-\balpha_{\bTheta_0}^*)' [\nabla^{2} D_0(\Aa_{\bTheta_0}^*,\balpha_{\bTheta_0}^*)](\Aa-\Aa_{\bTheta_0}^* ,\balpha-\balpha_{\bTheta_0}^*)\\ & \qquad\qquad  -   \delta\frac{40}{6} e^{2A} \snorm{\f_0}_{\Sh}^2 \frac{1}{Jk}\norm{(\Aa-\Aa_{\bTheta_0}^* ,\balpha-\balpha_{\bTheta_0}^*)}^2_{\R^{2J}}\\
& \geq  \snorm{\f_0}^2_{\Sh}\frac{1}{Jk}\norm{(\Aa-\Aa_{\bTheta_0}^* ,\balpha-\balpha_{\bTheta_0}^*)}^2_{\R^{2J}} \left( e^{-2A}-  \delta\frac{40}{6}e^{2A}  \right).
\end{align*}
 Hence, one can choose $\delta >0$ sufficiently small such that $\left( e^{-2A}-  \delta\frac{40}{6}e^{2A} \right)$ is strictly positive for all $J$ and $k$. For example, we have $\left( e^{-2\delta}-  \delta\frac{40}{6}e^{2\delta} \right)>0$, if $\delta<0.1$. Then, using such a $\delta$ it follows that for all $(\Aa,\balpha) \in \bTheta^{\Aa,\balpha}_0$,
\begin{equation*}
\abs{D_0(\Aa,\balpha) - D_0(\Aa_{\bTheta_0}^*,\balpha_{\bTheta_0}^*)}\geq C(A,\AA)\frac{1}{k}\snorm{\f_0}^2_{\Sh}\frac{1}{J}\norm{ (\Aa-\Aa_{\bTheta_0}^* ,\balpha-\balpha_{\bTheta_0}^*)}^2_{\R^{2J}}.\tag*{\qedhere}
\end{equation*}
\end{proof}

The proof of Proposition \ref{prop.ConvAAlpha} is almost done. Remark that Lemma \ref{lemme.MinQuad} ensures that for all $u\geq 0$ we have
$
\P\big( \frac{1}{J}\snorm{(\hat \Aa^{\lambda}, \hat \balpha^{\lambda}) -(\Aa^*_{\bTheta_0}, \balpha^*_{\bTheta_0})}^2_{\R^{2J}} \geq u \big)  \leq \P\big( \frac{2}{C(A,\AA,\f_0)} \sup_{(\Aa,\balpha)\in\bTheta_0^{\Aa,\balpha} } \abs{M^{\lambda}_0(\Aa,\balpha) - D_0(\Aa,\balpha)}\geq u\big)
$. 
Lemma \ref{lemme.unif} ensures that there is a constant $C(L,s,A,\AA,\f_0)$ such that for all $x>0$,
$$
\P\Big( \frac{1}{J}\snorm{(\hat \Aa^{\lambda}, \hat \balpha^{\lambda}) -(\Aa^*_{\bTheta_0}, \balpha^*_{\bTheta_0})}^2_{\R^{2J}} \geq  C(L,s,A,\AA,\f_0)(F(k^{-\frac{2s}{2s+1}} ) + F(V_1(k,J,x)) \Big) \leq 2e^{-x}.
$$
where $V_1 (k,J,x )= \gamma_{\max}(k)k^{-\frac{2s}{2s+1}} \left( 1 + \sqrt{2\frac{x}{Jk}} + 2\frac{x}{Jk} \right) $ and  $F:\R \longrightarrow \R$, with $F(u) = u+\sqrt{u}$. \qed

\subsection{Proof of Proposition \ref{prop.ConvShift}}
\label{part.proofprop.ConvShift}

First of all remark that, thanks to formulas \eqref{eq.g0} and  \eqref{eq.blambda}, we have  explicit expressions of $\b^*_{\bTheta_0} = \argmin_{\b\in\bTheta_0^{\b}} \bar D(\Aa^*,\balpha^*,\b) $ and of  $ \hat \b\vphantom{\b}^\lambda_{} = \argmin_{\b\in\bTheta_0^{\b}} \bar M(\hat \Aa^\lambda,\hat \balpha^\lambda,\b) $. Indeed, we have,
$$ 
\b^*_{\bTheta_0} = (b_1^* -e^{a_1^*} \bar\b^* (\overline{e^{\Aa^*} R_{\balpha^*}})^{-1} R_{\alpha_1^*},\ldots, b^*_J-e^{a_J^*} \bar\b^* (\overline{e^{\Aa^*} R_{\balpha^*}})^{-1} R_{\alpha_J^*}),
$$
where  $\overline{e^{\Aa^*} R_{\balpha^*}} = \frac{1}{J} \sum_{j=1}^J e^{a_j^*}R_{\alpha_j^*} \in\R^{2\times 2}$, $\bar\b^* = \frac{1}{J} \sum_{j=1}^J b_j^*\in\R^2$ and
$$
\hat \b\vphantom{\b}^{\lambda} = (\bar\Y_1-e^{\hat a_1^{\lambda}} \bar\Y (\overline{e^{\hat \Aa^{\lambda}} R_{\hat \balpha^{\lambda}}})^{-1} R_{\hat \alpha_1^{\lambda}},\ldots,\bar\Y_J-e^{\hat a_J^{\lambda}} \bar\Y (\overline{e^{\hat \Aa^{\lambda}} R_{\hat \balpha^{\lambda}}})^{-1} R_{\hat \alpha_J^{\lambda}})
$$
where $\overline{e^{\hat \Aa^{\lambda}} R_{\hat \balpha^{\lambda}}} = \frac{1}{J} \sum_{j=1}^J e^{\hat a_j^\lambda}R_{\hat \alpha_j^\lambda} \in\R^{2\times 2}$,  
$
\bar\Y_j   = \frac{1}{k}\sum_{\ell=1}^k[\Y_j]_\ell 
 = e^{a^*_j} \bar\f R_{\alpha^*_j} + b_j^* + e^{a^*_j} \bar\bzeta_j R_{\alpha^*_j} \in\R^2
$
where $\bar\f =\frac{1}{k}\sum_{\ell=1}^k  \f_\ell \in\R^2$, $\bar\bzeta_j =\frac{1}{k}\sum_{\ell=1}^k  [\bzeta_j]_\ell  \in\R^2 $ and  
$
\bar \Y = \frac{1}{J}\sum_{j=1}^J \bar\Y_j = \bar\f(\overline{e^{\Aa^*} R_{\balpha^*}}) + \bar\b^* +\frac{1}{J}\sum_{j=1}^J e^{a_j^*} \bar\bzeta_j R_{\alpha_j^*} \in\R^2.
$
Thence, we have
\begin{align}
\frac{1}{J}\snorm{\b^*_{\bTheta_0} - \hat\b\vphantom{\b}^{\lambda}}^2_{\R^{2J}} &\leq \frac{1}{J} \sum_{j=1}^J \snorm{\bar\f (e^{a^*_j} R_{\alpha_j^*} - e^{\hat a_j^\lambda} (\overline{e^{\vphantom{\hat \Aa^{\lambda}}\Aa^{*}} R_{\balpha^*}})(\overline{e^{\hat \Aa^{\lambda}} R_{\hat \balpha^{\lambda}}})^{-1}R_{\hat\alpha_j^\lambda}  )  }^2_{\R^2} \label{eq.term1} \\
&\qquad\qquad + \snorm{\bar\b^* (e^{a^*_j} (\overline{e^{\vphantom{\hat \Aa^{\lambda}}\Aa^{*}} R_{\balpha^*}})^{-1}R_{\alpha_j^*} - e^{\hat a_j^\lambda} (\overline{e^{\hat \Aa^{\lambda}} R_{\hat \balpha^{\lambda}}})^{-1}R_{\hat\alpha_j^\lambda}  )  }^2_{\R^2}  \label{eq.term2}\\
&\qquad\qquad\qquad + \snorm{e^{a^*_j} \bar\bzeta_j R_{\alpha^*_j} }^2_{\R^2}  + \snorm{e^{\hat a_j^\lambda} (\overline{e^{a^*\vphantom{\hat \Aa^{\lambda}}} \bzeta R_{\alpha^*}})(\overline{e^{\hat \Aa^{\lambda}} R_{\hat \balpha^{\lambda}}})^{-1} R_{\hat\alpha_j^\lambda} }^2_{\R^2}   \label{eq.term3}
\end{align} 
where $(\overline{e^{a^*} \bzeta R_{\alpha^*}}) = \frac{1}{J}\sum_{j=1}^J e^{a^*_j}\bar \bzeta_jR_{\alpha^*_j} \in\R^2$. The rest of the proof is devoted to the control of the terms \eqref{eq.term1}, \eqref{eq.term2} and \eqref{eq.term3}.

In this section, we denotes by $\| \cdot \|_{op}$ the operator norm of a $2\times 2$ matrix, \ie $\| A\|_{op} = |\gamma_{\max}(A)|$ where $\gamma_{\max}(A) $ denotes the largest eigenvalue of a matrix $A\in\R^{2\times 2}$. Note by the way that the eigenvalues of the matrix $\frac{1}{J}\sum_{j=1}^J e^{a_j}R_{\alpha_j}\in\R^{2\times 2}$  are $\frac{1}{J}\sum_{j=1}^J e^{a_j}( \cos(\alpha_j) \pm i \sin(\alpha_j)) $ for any $J$ and $(a_1,\ldots,a_J,\alpha_1,\ldots,\alpha_J)\in [-A,A]^J\times [-\AA,\AA]^J$. It yields  $\snorm{ e^{- a_j}R_{-\alpha_j}}_{op} \leq e^A$ and $\snorm{(\overline{e^{\Aa} R_{\balpha}})^{-1} }_{op} \leq e^A(\cos(\AA) - \sin(\AA))^{-1}$ which is a positive real number since by hypothesis we have $\AA<\frac{\pi}{4}$. We are now able to derive an upper bound for \eqref{eq.term1},
\begin{align*}
\eqref{eq.term1} & \leq \snorm{\bar\f}_{\R^2}^2 \frac{1}{J}\sum_{j=1}^J\snorm{ e^{a^*_j} R_{\alpha_j^*} - e^{\hat a_j^\lambda} (\overline{e^{\vphantom{\hat \Aa^{\lambda}}\Aa^{*}} R_{\balpha^*}})(\overline{e^{\hat \Aa^{\lambda}} R_{\hat \balpha^{\lambda}}})^{-1}R_{\hat\alpha_j^\lambda} }_{op}^2\\
&\leq e^{2A} \snorm{\bar\f}_{\R^2}^2 \frac{1}{J}\sum_{j=1}^J\snorm{ e^{a^*_j} R_{\alpha_j^*} (e^{-\hat a_j^\lambda}R_{-\hat\alpha_j^\lambda}) -  (\overline{e^{\vphantom{\hat \Aa^{\lambda}}\Aa^{*}} R_{\balpha^*}})(\overline{e^{\hat \Aa^{\lambda}} R_{\hat \balpha^{\lambda}}})^{-1}}_{op}^2 \\
& \leq 2e^{2A} \snorm{\bar\f}_{\R^2}^2 \frac{1}{J}\sum_{j=1}^J \snorm{ e^{-\hat a_j^\lambda}R_{-\hat\alpha_j^\lambda}}_{op}^2 \snorm{ e^{a^*_j} R_{\alpha_j^*} -e^{\hat a_j^\lambda}R_{\hat\alpha_j^\lambda} }_{op}^2 \\
& \qquad\qquad +2e^{2A} \snorm{\bar\f}_{\R^2}^2\snorm{(\overline{e^{\hat \Aa^{\lambda}} R_{\hat \balpha^{\lambda}}})^{-1} }_{op}^2 \snorm{  (\overline{e^{\vphantom{\hat \Aa^{\lambda}}\Aa^{*}} R_{\balpha^*}}) - (\overline{e^{\hat \Aa^{\lambda}} R_{\hat \balpha^{\lambda}}})}_{op}^2 
\end{align*}
It is now easy to show that there  exists  a constant $C(A,\AA,\f)>0$ independent of $k$ and $J$ such that
\begin{equation}\label{eq.conterm1}
\eqref{eq.term1} \leq C(A,\AA,\f) \frac{1}{J}\sum_{j=1}^J\snorm{(a^*_j - \hat a_j^\lambda , \alpha^*_j - \hat \alpha_j^\lambda )}_{\R^2}^2.
\end{equation} 
The term \eqref{eq.term2} is very similar to the term \eqref{eq.term1} and we have 
\begin{equation}\label{eq.conterm2}
\eqref{eq.term2} \leq C(A,\AA,B) \frac{1}{J}\sum_{j=1}^J\snorm{(a^*_j - \hat a_j^\lambda , \alpha^*_j - \hat \alpha_j^\lambda )}_{\R^2}^2,
\end{equation} 
for some constant $C(A,\AA,B)>0$ independent of $k$ and $J$. By using formula \eqref{eq.conterm1}, \eqref{eq.conterm2} and Proposition \ref{prop.ConvAAlpha} together, there exists a constant $C(L,s,A,\AA,B,\f)$ independent of $k$ and $J$ such that for all $x>0$,
\begin{equation}\label{eq.con1}
\P\Big(\eqref{eq.term1} + \eqref{eq.term2} \geq C(L,s,A,\AA,B,\f) \big(F(k^{-\frac{2s}{2s+1}}) + F(V_1(k,J,x)) \big) \Big) \leq 4e^{-x},
\end{equation}
where $V_1(k,J,x)$ is defined in statement of proposition \ref{prop.ConvAAlpha} and $F(u) = u + \sqrt{u}$ for $u\geq 0$.

The term \eqref{eq.term3} can be bounded in probability as follows. First of all, remark that there exists a constant $C(A,\AA)>0$ such that 
$$
\eqref{eq.term3} \leq C(A,\AA) \frac{1}{J}\sum_{j=1}^J \snorm{\bar\bzeta_j}^2_{\R^2}.
$$
Then, for all $j=1,\ldots,J$, the random variable $\bar\bzeta_j =\frac{1}{k}\sum_{\ell=1}^k ([\bzeta_j^{(1)}]_\ell,[\bzeta_j^{(2)}]_\ell ) \in\R^{2}$ can be written 
$$
(\bar \bzeta^1_j,\bar\bzeta^2_j)' = \frac{1}{k} \begin{pmatrix}
\1_{ k}' & 0_{ k}' \\
0_{k}' & \1_{ k}'
\end{pmatrix} ([\bzeta^{(1)}_j]_1,\ldots,[\bzeta^{(1)}_j]_k,[\bzeta^{(2)}_j]_1,\ldots, [\bzeta^{(2)}_j]_k)' 
$$
where $ 0_{k}$ is a column vector of $k$ zeros. The random vector $(\bar \bzeta^1_j,\bar\bzeta^2_j)'$ is a two dimensional centered Gaussian vector of variance  $
\mathbf V = \frac{1}{k^2}\begin{pmatrix}
\1_{k}' & 0_{k}' \\
0_{k}' & \1_{k}'
\end{pmatrix} 
\mathbf \Sigma 
\begin{pmatrix}
\1_{k} & 0_{ k} \\
0_{ k} & \1_{k}
\end{pmatrix}
$. The $2\times 2$ matrix  $\mathbf V$ is  of trace less or equal to $\frac{2}{k}\gamma_{\max}(k)$. Hence, the random variable $\frac{1}{J}\sum_{j=1}^J \snorm{ \bar\bzeta_j}^2_{\R^2}  = \frac{1}{J}\sum_{j=1}^J (\bar \bzeta^{(1)}_j,\bar\bzeta^{(2)}_j)(\bar \bzeta^{(1)}_j,\bar\bzeta^{(2)}_j)'$ has the same probability distribution as $\frac{1}{J} \xi' [Id_J \otimes \mathbf{V}] \xi $ where $\xi$ is a centered Gaussian vector of variance $Id_{2J}$. 
A standard concentration inequality of $\chi^2$ distribution (see e.g. \cite{LauMa} Lemma 1) is $\P\left(  \xi' [Id_J \otimes \mathbf{V}] \xi\geq  J\frac{4}{k}\gamma_{\max}(k) +   \frac{4}{k}\gamma_{\max}(k)\sqrt{2J x} +  x \frac{4}{k}\gamma_{\max}(k) \right) \leq e^{-x} $ for any $x>0$. It yields that there exists a constant $C(A,\AA)>0$ such that for all $x>0$ 
\begin{equation}\label{eq.conterm3}
\P\left( \eqref{eq.term3}  \geq  C(A,\AA)\frac{\gamma_{\max}(k)}{k} \left(1 +  \sqrt{2\frac{x}{J}} + \frac{x}{J}\right)\right) \leq e^{-x}.
\end{equation}

To end the proof remark that for all $u\geq 0$ we have $\P(\frac{1}{J}\snorm{\b^*_{\bTheta_0} - \hat\b\vphantom{\b}^{\lambda}}^2_{\R^{2J}}\geq u ) \leq \P(\eqref{eq.term1} + \eqref{eq.term2} + \eqref{eq.term3} \geq u )$. This together with \eqref{eq.con1} and  \eqref{eq.conterm3} yield that there exists a constant $C(L,s,A,\AA,B,\linebreak[1]\f)>0$ independent of $k$ and $J$ such that for all $x>0$ we have 
$$
\P\Big(\frac{1}{J}\snorm{\b^*_{\bTheta_0} - \hat\b\vphantom{\b}^{\lambda}}^2_{\R^{2J}}\geq C(L,s,A,\AA,B,\f) \big( F(k^{-\frac{-2s}{2s+1}}) + F(V_1(k,J,x)) + V_2(k,J,x) \big) \Big)  \leq 5 e^{-x},
$$
where $V_2(k,J,x) =\frac{\gamma_{\max}(k)}{k} \left(1 +  \sqrt{2\frac{x}{J}} + \frac{x}{J}\right) $ and the proof of Proposition \ref{prop.ConvShift} is completed. \qed


\section{Technical Lemma}

\begin{lemma}\label{lemme.biais}
Assume that Assumption \ref{ass.f} holds, \ie $f\in H_s(L)$ with $s>\frac{3}{2}$  (see \eqref{eq:H}) and $\f = (f(\frac{\ell}{k}))_{\ell=1}^k \in\Sh$. If $\lambda(k) = k^{\frac{1}{2s+1}} $ then there exists a constant $C(L,s)$ such that for all $f\in H_s(L)$ we have
$$
\frac{1}{k}\snorm{\A^\lambda \f -\f}_{\Sh}^2 \leq C(L,s) k^{-\frac{2s}{2s+1}},
$$ 
where $\A^\lambda$ is the projection matrix defined in \eqref{eq:decomp}.
\end{lemma}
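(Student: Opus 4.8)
The plan is to pass to discrete Fourier coefficients, relate them to the continuous Fourier coefficients of $f$ by an aliasing (Poisson–summation) identity, and then extract the rate from the Sobolev decay via Cauchy–Schwarz. Assume $k$ is odd, so that the discrete frequencies range over $m=-\tfrac{k-1}{2},\ldots,\tfrac{k-1}{2}$.

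First I would record that, since $\A^\lambda$ is the orthogonal projection of $\Sh$ (acting column-wise) onto the span of the discrete exponentials $\big(e^{i2\pi m\ell/k}\big)_{\ell=1}^k$ with $|m|\le\lambda$, Parseval's identity for the discrete Fourier transform gives
$$
\frac1k\snorm{\A^\lambda\f-\f}_{\Sh}^2 = \frac1{k^2}\sum_{\lambda<|m|\le\frac{k-1}{2}}\sabs{c_m(\f)}^2,\qquad c_m(\f)=\ssum_{\ell=1}^k\f_\ell\, e^{-i2\pi m\ell/k}.
$$
Next, since $s>\tfrac12$ the Fourier series of $f\in H_s(L)$ converges absolutely (because $\sum_n\sabs{c_n(f)}\le(\sum_n(1+|n|^2)^{-s})^{1/2}(\sum_n(1+|n|^2)^s\sabs{c_n(f)}^2)^{1/2}<\infty$), so $\f_\ell=f(\ell/k)=\sum_{n\in\Z}c_n(f)e^{i2\pi n\ell/k}$ for every $\ell$; substituting this and summing the geometric series $\sum_{\ell=1}^k e^{i2\pi(n-m)\ell/k}=k\,\1_{\{n\equiv m\,(k)\}}$ yields the aliasing formula $\tfrac1k c_m(\f)=\sum_{l\in\Z}c_{m+lk}(f)$.

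Then I would bound the tail. For $\lambda<|m|\le\frac{k-1}{2}$ every aliased index satisfies $|m+lk|>\lambda$: this holds for $l=0$ by hypothesis, and for $l\neq0$ because $|m+lk|\ge(|l|-\tfrac12)k\ge k/2>\lambda$ once $k$ is large (since $\lambda=\lfloor k^{1/(2s+1)}\rfloor$). Writing $w_n=(1+|n|^2)^{-s/2}$ and applying Cauchy–Schwarz,
$$
\Big|\tfrac1k c_m(\f)\Big|^2\le\Big(\ssum_{l\in\Z}w_{m+lk}^2\Big)\Big(\ssum_{l\in\Z}w_{m+lk}^{-2}\sabs{c_{m+lk}(f)}^2\Big),
$$
where $\sum_{l}w_{m+lk}^2\le(1+\lambda^2)^{-s}+\sum_{l\neq0}\big((|l|-\tfrac12)k\big)^{-2s}\le\lambda^{-2s}+C(s)k^{-2s}\le C(s)\lambda^{-2s}$, using $2s>1$ for convergence and $\lambda\le k$. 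Summing over $m$, and noting that the families $\{m+lk:l\in\Z\}$ are pairwise disjoint and entirely contained in $\{n:|n|>\lambda\}$, one obtains
$$
\frac1k\snorm{\A^\lambda\f-\f}_{\Sh}^2\le C(s)\lambda^{-2s}\sum_{|n|>\lambda}(1+|n|^2)^s\sabs{c_n(f)}^2\le C(s)\,L\,\lambda^{-2s}.
$$
Finally, $\lambda=\lfloor k^{1/(2s+1)}\rfloor\ge\tfrac12 k^{1/(2s+1)}$ for $k$ large gives $\lambda^{-2s}\le 2^{2s}k^{-2s/(2s+1)}$, and the finitely many small values of $k$ (where one may bound crudely by $\snorm{\f}^2_{\Sh}\le k\sup_{[0,1]}\sabs{f}^2\le C(s)Lk$) can be absorbed into the constant, yielding $\frac1k\snorm{\A^\lambda\f-\f}_{\Sh}^2\le C(L,s)k^{-\frac{2s}{2s+1}}$.

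There is nothing deep here; the only points requiring care are the aliasing step — justifying the pointwise Fourier expansion, the interchange of summations, and the fact that every aliased frequency stays outside $[-\lambda,\lambda]$ — and resisting the temptation to bound $\sum_{|n|>\lambda}\sabs{c_n(f)}^2$ directly by $L$, which would lose the rate: one must pull out the factor $\lambda^{-2s}$ via Cauchy–Schwarz before invoking the Sobolev bound.
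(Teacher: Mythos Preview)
Your proof is correct. Both you and the paper start from the same Parseval identity $\frac1k\snorm{\A^\lambda\f-\f}^2=\frac1{k^2}\sum_{|m|>\lambda}\sabs{c_m(\f)}^2$, but then diverge. The paper splits $\frac1k c_m(\f)=c_m(f)+\big(\frac1k c_m(\f)-c_m(f)\big)$, bounds the second piece uniformly by $C(L,s)k^{-s+1/2}$ via a lemma from Tsybakov, and bounds the continuous tail $\sum_{|m|>\lambda}\sabs{c_m(f)}^2\le C(L,s)\lambda^{-2s}$; this produces two terms $k^{2-2s}+k^{-2s/(2s+1)}$, and the hypothesis $s\ge\frac32$ is used precisely to absorb the first into the second. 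You instead use the aliasing identity $\frac1k c_m(\f)=\sum_{l}c_{m+lk}(f)$ together with a weighted Cauchy--Schwarz, which yields the rate $\lambda^{-2s}$ directly in a single step. Your argument is self-contained (no external reference), and in fact goes through for any $s>\tfrac12$, so the restriction $s\ge\tfrac32$ is not needed in your version; the paper's splitting is the reason that restriction appears in the statement.
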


\begin{proof}
Recall the notations introduced Sections \ref{part.Intro.1} and \ref{part:RedDim} :  $c_m(f) = (c_m(f^{(1)}),c_m(f^{(2)})) \in\C^2$ is the $m$-th Fourier coefficient of $f\in L^2([0,1],\R^2)$ and $c_m(\f)= (c^{(1)}_m(\f),c^{(2)}_m(\f)) \in\C^2$ is the $m$-th discrete Fourier coefficient of $\f\in\Sh$. Thus, we have by Parseval's equality
$$
\frac{1}{k}\bnorm{\A^{\lambda}\f-\f}^2_{\Sh} =\frac{1}{k}\bnorm{ \bigg( \frac{1}{k} \sum_{ |m| >\lambda}  c_m(\f) e^{i2\pi m \frac{\ell}{k} } \bigg)_{\ell=1}^{k} }^2_{\Sh}   = \frac{1}{k^2} \sum_{\abs{m}>\lambda} \bnorm{c_m(\f)}_{\C^2}^2.
$$
where for all $c = (c^{(1)},c^{(2)})\in\C^2$, $\| c \|_{\C^2}^2 =  |c^{(1)}|^2 +|c^{(2)}|^2$.
It yields
$$
\frac{1}{k}\snorm{\A^{\lambda}\f-\f}^2_{\Sh} \leq 2 \sum_{\abs{m}>\lambda} \snorm{\frac{1}{k} c_m(\f) - c_m(f)}^2_{\C^2} + 2\sum_{\abs{m}>\lambda}\snorm{c_m(f)}^2_{\C^2}.
$$
Firstly, Lemma 1.10 in \cite{Tsy} ensures that if $s>\frac{1}{2}$ then $\abs{\frac{1}{k} c_m(\f^{(i)}) - c_m(f^{(i)})}\leq C(L,s) k^{-s+\frac{1}{2}}$ for any $m\in\N$ and $i=1,2$. Secondly, equation (1.87) in \cite{Tsy} ensure that if $\lambda(k) = k^{\frac{1}{2s+1}}$ then  $\sum_{\abs{m}>\lambda}  \sabs{ c_m(f^{(i)})}^2 \leq C(L,s) k^{-\frac{2s}{2s+1}}$. Thence, there is a constant $C(L,s)$ independent of $k$ such that
$$
\frac{1}{k}\snorm{\A^{\lambda}\f-\f}^2_{\Sh} \leq C(L,s) (k^{2-2s} + k^{-\frac{2s}{2s+1}}).
$$
If $s\geq \frac{3}{2}$ then we have 
\begin{equation*}
\frac{1}{k}\snorm{\A^{\lambda}\f-\f}^2_{\Sh} \leq C(L,s) k^{-\frac{2s}{2s+1}}.\tag*{\qedhere}
\end{equation*} 
\end{proof}

We now give a general expression of the gradient and of the Hessian of the criterion $D$. Recall that we have 
{\small
$$
D(\Aa,\balpha,\b) = \frac{1}{J}\sum_{j=1}^J\bnorm{e^{-a_{j}} (e^{a_{j}^*}\f R_{\alpha_{j}^*} + \1_k\otimes ( b^*_j -b_j) )R_{- \alpha_{j}} + \frac{1}{J}\sum_{j'=1}^Je^{-a_{j'}} (e^{a_{j'}^*}\f R_{\alpha_{j'}^*} + \1_k\otimes ( b^*_{j'} -b_{j'}) )R_{- \alpha_{j'}} }^2_{\Sh}
$$}where $(\Aa,\balpha,\b) = (a_1,\ldots,a_J,\alpha_1,\ldots,\alpha_J,b_1,\ldots,b_J)\in\R^J\times[-\pi,\pi[^J\times\R^{2J}$. To shorten the formulas below we note $g_j  =(g_j^{(1)},g_j^{(2)}, g_j^{(3)},g_j^{(4)})=(a_j,\alpha_j,b_j)$, that is $g_j^{(1)} = a_j$, $g_j^{(2)} =\alpha_j$, $g^{(3)}_j = b_j^{(1)}$ and $g^{(4)}_j = b_j^{(2)}$. Let $\f_{g_j}=g^{-1}.g^*_j.\f = e^{-a_{j}} (e^{a_{j}^*}\f R_{\alpha_{j}^*} + \1_k\otimes ( b^*_j -b_j) )R_{- \alpha_{j}}$, and  for all $j_1=1,\ldots,J$ and $p_1 = 1,\ldots,4$,
\begin{align}\label{eq:GradD}
\partial_{g_{j_1}^{(p_1)}} D(\Aa,\balpha,\b)& = \frac{2}{Jk}\bprs{\partial_{g^{(p_1)}_{j_1}}   \f_{g_{j_1}},  \f_{g_{j_1}}-\frac{1}{J}\sum_{j'=1}^J \f_{g_{j'}}}_{\Sh}.
\end{align}
The second order derivatives are
\begin{align}
\partial_{g_{j_2}^{(p_2)}} \partial_{g_{j_1}^{(p_1)}}D(\Aa,\balpha,\b) & = -\frac{2}{J^2k}\bprs{ \partial_{g^{(p_1)}_{j_1}} \f_{g_{j_1}}, \partial_{g^{(p_2)}_{j_2}} \f_{g_{j_2}}}_{\Sh},\qquad \qquad \text{ if } j_1 \neq j_2, \label{eq:hessD1}\\
\partial_{g_{j_1}^{(p_2)}} \partial_{g_{j_1}^{(p_1)}} D(\Aa,\balpha,\b) & =\frac{2}{Jk} \bprs{ \partial_{g^{(p_2)}_{j_1}} \partial_{g^{(p_1)}_{j_1}}\f_{g_{j_1}},\bigg(\f_{g_{j_1}}  -\frac{1}{J} \sum_{j'=1}^J \f_{g_{j'}}\bigg)}_{\Sh}\nonumber\\ &\qquad\qquad\qquad+\left( \frac{2}{Jk}-\frac{2}{J^2k}\right)\bprs{ \partial_{g^{(p_1)}_{j_1}} \f_{g_{j_1}} ,\partial_{g^{(p_2)}_{j_1}}\f_{g_{j_1}}}_{\Sh}. \label{eq:hessD2}
\end{align}

The expressions of the gradient and the Hessian of $D$ simplify on the set $(\Aa^*,\balpha^*,\b^*) * \G$, see Lemma \ref{lemma.zeros}. For any $g_0 =(a_0,\alpha_0,b_0)\in\R\times[-\pi,\pi[\times\R^2$, 
we have $\f_{g_{j_1}^*.g_0} - \frac{1}{J}\sum_{j'=1}^J \f_{g_{j'}^*.g_0} = e^{-a_0} (\f - \1_k \otimes b_0) R_{-\alpha_0} -\frac{1}{J}\sum_{j'=1}^J e^{-a_0} (\f - \1_k \otimes b_0) R_{-\alpha_0} =0$. 
It yields that for all $(a_0,\alpha_0,b_0)\in\R\times[-\pi,\pi[\times\R^2$ we have  $\nabla D((\Aa^*,\balpha^*,\b^*) * (a_0,\alpha_0,b_0)) = 0$, and 
\begin{equation}
\partial_{g_{j_2}^{(p_2)}} \partial_{g_{j_1}^{(p_1)}}D((\Aa^*,\balpha^*,\b^*) * (a_0,\alpha_0,b_0)) =  \begin{cases}-\frac{2}{J^2k}\bprs{ \partial_{g^{(p_1)}_{j_1}} \f_{g_{j_1}^*. g_0}, \partial_{g^{(p_2)}_{j_2}} \f_{g_{j_2}^*. g_0}}_{\Sh}, & \text{ if } j_1\neq j_2, \\ \left( \frac{2}{Jk}-\frac{2}{J^2k}\right)\bprs{ \partial_{g^{(p_1)}_{j_1}} \f_{g_{j_1}^*. g_0}, \partial_{g^{(p_2)}_{j_2}} \f_{g_{j_1}^*. g_0}}_{\Sh}.\end{cases}\label{eq.hes0}
\end{equation}

\begin{lemma}\label{lemme.eigmin}
The smallest eigenvalue of $\nabla^2 D_0(\Aa_{\bTheta_{0}}^*, \balpha_{\bTheta_0}^*)$ restricted to the subset $\bTheta_0$ is greater than $e^{-2A}\frac{2}{Jk}\norm{\f_0}^2_{\Sh}$.
\end{lemma}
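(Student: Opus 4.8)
The plan is to evaluate the Hessian $\nabla^2 D_0$ at the minimiser $(\Aa^*_{\bTheta_0},\balpha^*_{\bTheta_0})$ and to show that, once restricted to the linear space of directions tangent to $\bTheta_0$ (those summing to zero in both the scaling and the angular coordinates), the associated quadratic form is bounded below by $e^{-2A}\tfrac{2}{Jk}\snorm{\f_0}^2_{\Sh}$ times the squared Euclidean norm of the direction. First I would record the analogue of formula \eqref{eq.hes0} for the criterion $D_0$ (with only the two parameters $a_j,\alpha_j$ per observation and $\f$ replaced by $\f_0$): writing $\f_{0,g_j}:=e^{a_j^*-a_j}\f_0 R_{\alpha_j^*-\alpha_j}$ and noting that at $(\Aa^*_{\bTheta_0},\balpha^*_{\bTheta_0})$ all the aligned configurations $\f_{0,g_j}$ coincide by Lemma \ref{lemme.unic}, so that the residuals $\f_{0,g_j}-\tfrac1J\sum_{j'}\f_{0,g_{j'}}$ vanish, the Hessian becomes an outer-product matrix built from the first-order quantities $\phi_j^{(1)}:=\partial_{a_j}\f_{0,g_j}$ and $\phi_j^{(2)}:=\partial_{\alpha_j}\f_{0,g_j}$ evaluated at the minimum. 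Summing up the entries of this Hessian against a direction $v=\big((u_j)_{j},(w_j)_{j}\big)\in\R^J\times\R^J$, setting $\Phi_j:=u_j\phi_j^{(1)}+w_j\phi_j^{(2)}$ and using the identity $\sum_{j_1\ne j_2}\prs{\Phi_{j_1},\Phi_{j_2}}_{\Sh}=\snorm{\sum_j\Phi_j}^2_{\Sh}-\sum_j\snorm{\Phi_j}^2_{\Sh}$, I get the clean expression
\[
v^\top[\nabla^2 D_0]v \;=\; \frac{2}{Jk}\sum_{j=1}^J\snorm{\Phi_j}^2_{\Sh}\;-\;\frac{2}{J^2k}\bnorm{\sum_{j=1}^J\Phi_j}^2_{\Sh}.
\]

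Next I would compute $\phi_j^{(1)}$ and $\phi_j^{(2)}$ explicitly. Because the common alignment exponent $\bar\Aa^*$ and angle $\bar\balpha^*$ at the minimum are the same for every $j$, one finds $\phi_j^{(1)}=-e^{\bar\Aa^*}\f_0 R_{\bar\balpha^*}$ and $\phi_j^{(2)}=-e^{\bar\Aa^*}\f_0 R'_{\bar\balpha^*}$, where $R'_\theta=\tfrac{d}{d\theta}R_\theta=R_{\theta+\pi/2}$ is again orthogonal; in particular these vectors do not depend on $j$. Hence $\snorm{\phi_j^{(1)}}_{\Sh}=\snorm{\phi_j^{(2)}}_{\Sh}=e^{\bar\Aa^*}\snorm{\f_0}_{\Sh}$ and, crucially, $\prs{\phi_j^{(1)},\phi_j^{(2)}}_{\Sh}=e^{2\bar\Aa^*}\tr\!\big(R_{\bar\balpha^*}^\top\f_0^\top\f_0 R_{\bar\balpha^*}R_{\pi/2}\big)=0$ since $\tr(S R_{\pi/2})=0$ for every symmetric $2\times2$ matrix $S$. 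Consequently $\snorm{\Phi_j}^2_{\Sh}=e^{2\bar\Aa^*}\snorm{\f_0}^2_{\Sh}(u_j^2+w_j^2)$ and $\sum_j\Phi_j=-e^{\bar\Aa^*}\big(\textstyle\sum_j u_j\big)\f_0 R_{\bar\balpha^*}-e^{\bar\Aa^*}\big(\textstyle\sum_j w_j\big)\f_0 R'_{\bar\balpha^*}$.

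Finally I would restrict to $v$ tangent to $\bTheta_0$, i.e. with $\sum_j u_j=\sum_j w_j=0$; then the previous line gives $\sum_j\Phi_j=0$, the second term in the quadratic form vanishes, and
\[
v^\top[\nabla^2 D_0]v \;=\; \frac{2}{Jk}\,e^{2\bar\Aa^*}\snorm{\f_0}^2_{\Sh}\sum_{j=1}^J(u_j^2+w_j^2)\;=\;\frac{2}{Jk}\,e^{2\bar\Aa^*}\snorm{\f_0}^2_{\Sh}\,\snorm{v}^2_{\R^{2J}}.
\]
Since the true scaling parameters lie in $[-\tfrac{A}{2},\tfrac{A}{2}]$ by Assumption \ref{hyp.Theta}, we have $\sabs{\bar\Aa^*}\le A$ and hence $e^{2\bar\Aa^*}\ge e^{-2A}$, so the Rayleigh quotient of $\nabla^2 D_0(\Aa^*_{\bTheta_0},\balpha^*_{\bTheta_0})$ restricted to $\bTheta_0$ is at least $e^{-2A}\tfrac{2}{Jk}\snorm{\f_0}^2_{\Sh}$, as claimed. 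I expect the two mildly delicate points to be the vanishing of $\prs{\phi_j^{(1)},\phi_j^{(2)}}_{\Sh}$ — the identity $\tr(SR_{\pi/2})=0$, which relies on $\f_0^\top\f_0$ being symmetric — and the bookkeeping that identifies the tangent space of $\bTheta_0$ with $\1_J^\perp\times\1_J^\perp$, so that the two sum-to-zero constraints are exactly what kills $\sum_j\Phi_j$; everything else is substitution into the Hessian formula.
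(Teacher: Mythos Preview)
Your proposal is correct and follows essentially the same route as the paper's proof: both evaluate the Hessian at the minimum using that the residuals $\f_{0,g_j}-\tfrac1J\sum_{j'}\f_{0,g_{j'}}$ vanish there, observe that the partial derivatives $\phi_j^{(1)},\phi_j^{(2)}$ are independent of $j$ and mutually orthogonal, and then read off the eigenvalue on $\1_J^\perp\times\1_J^\perp$. The only cosmetic difference is that the paper writes the Hessian explicitly as the block matrix $\tfrac{2}{J^2k}\snorm{e^{\bar\Aa^*}\f_0}^2_{\Sh}\,\mathrm{diag}(JId_J-\1_{J\times J},\,JId_J-\1_{J\times J})$ and cites the spectrum of $JId_J-\1_{J\times J}$, whereas you compute the quadratic form $v^\top[\nabla^2 D_0]v$ directly via the identity $\sum_{j_1\ne j_2}\prs{\Phi_{j_1},\Phi_{j_2}}=\snorm{\sum_j\Phi_j}^2-\sum_j\snorm{\Phi_j}^2$; your derivation of the orthogonality $\prs{\phi_j^{(1)},\phi_j^{(2)}}_{\Sh}=0$ via $\tr(SR_{\pi/2})=0$ is in fact more explicit than the paper's, which simply asserts that the cross derivatives vanish.
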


\begin{proof}
In this proof, $\bh = (\Aa,\balpha)\in \R^J\times[-\pi,\pi[^J$ 
and $\f^0_{h_j}=
e^{a_{j}^*-a_{j}} \f_0 R_{\alpha_j^* - \alpha_j}$. We have  $\partial_{a_{j_1}} \f^0_{h_{j_1}} = - \f^0_{h_{j_1}} $ and $\partial_{\alpha_{j_1}} \f^0_{h_{j_1}} = \f^0_{h_{j_1}} R_{-\frac{\pi}{2} }$ for all $ j_1=1,\ldots,J$. By using Formulas \eqref{eq.hes0}, the Hessian of $D_0$ at the point $ (\Aa^*_{\bTheta_0},\balpha_{\bTheta_0}) = (\Aa^* - \bar \Aa^*, \balpha^* -\bar  \balpha^*)\in\bTheta^{\Aa,\balpha}_0$ is given by, 
\begin{align*}
\partial_{\alpha_{j_1}} \partial_{\alpha_{j_2}} D_0(\Aa^* - \bar \Aa^*, \balpha^* - \bar \balpha^*)  = \partial_{a_{j_1}} \partial_{a_{j_2}} D_0(\Aa^* - \bar \Aa^*, \balpha^* - \bar \balpha^*)  = \begin{cases} - \frac{2}{J^2k} \snorm{e^{\bar \Aa^*} \f_0}^2_{\Sh} 
, \quad \text{ if } j_1\neq j_2,  \\ \left( \frac{2}{Jk}-\frac{2}{J^2k}\right) \snorm{e^{\bar \Aa^*} \f_0}^2_{\Sh}
\end{cases}
\end{align*}
and the second order cross derivatives are 0. 
The Hessian of $D_0$ at $(\Aa^* - \bar \Aa^*, \balpha^* -\bar \balpha^*)$ can be written,
\begin{equation}\label{eq.HessD0}
\nabla^2 D_0(\Aa^* - \bar \Aa^*, \balpha^* -\bar \balpha^*)= \frac{2}{J^2k} \snorm{e^{\bar\Aa^*} \f_0}^2_{\Sh}\begin{pmatrix}
J Id_J - \1_{J\times J} & 0 \\ 
0 & J Id_J - \1_{J\times J} 
\end{pmatrix},
\end{equation}
where $Id_J$ is the identity in $\R^J$ and $\1_J$ is the $J\times J$ matrix of ones. The eigenvalues of $J Id_J - \1_{J\times J}$ are 0 with eigenvector $\1_J$ and $J$ with eigenspace $\1_J^{\perp}$. It yields that on $\bTheta^{\Aa,\balpha}_0$ the smallest eigenvalue of  $\nabla^2 D_0(\Aa^* - \bar \Aa^*, \balpha^* -\bar \balpha^*)$ is $e^{2\bar\Aa^*}\frac{2}{Jk}\snorm{\f_0}^2_{\Sh}$. To finish the proof, remark that $\bar \Aa^* \leq A$.
\end{proof}

\begin{lemma} \label{lem:D3}
Let  $\delta = \max\{A,\AA\}$. For all $\mathbf c = (a_1,\ldots,a_J,\alpha_1,\ldots,\alpha_J)\in\bTheta_0^{\Aa,\balpha}$ and $(\Aa,\balpha) \in  \R^{2J}$ we have
$$
\abs{\big[\nabla^3 D_0(\mathbf{c})\big](\Aa,\balpha) } \leq 40 \delta e^{2A} \snorm{\f_0}^2_{\Sh} \frac{1}{Jk}\norm{(\Aa,\balpha)}^2_{\R^{2J}}.
$$
\end{lemma}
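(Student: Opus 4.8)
The plan is to reduce the computation to the single scalar $\snorm{\f_0}^2_{\Sh}$ by writing down a closed form for $D_0$, and then to bound the third derivative by hand. First I would record the two elementary identities $\snorm{\f_0 R_\phi}_{\Sh}=\snorm{\f_0}_{\Sh}$ and $\prs{\f_0 R_\phi,\f_0 R_\psi}_{\Sh}=\cos(\phi-\psi)\snorm{\f_0}^2_{\Sh}$, valid for all angles $\phi,\psi$; both follow from $\prs{AR_\phi,BR_\psi}_{\Sh}=\tr(A^\top B R_{\psi-\phi})$ (cyclicity of the trace and $R_\phi^\top=R_{-\phi}$) together with a one-line computation on $\f_0^\top\f_0$. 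Expanding $\snorm{x_j-\tfrac1J\sum_{j'}x_{j'}}^2_{\Sh}=\snorm{x_j}^2_{\Sh}-2\prs{x_j,\bar x}_{\Sh}+\snorm{\bar x}^2_{\Sh}$ with $x_j=e^{a_j^*-a_j}\f_0 R_{\alpha_j^*-\alpha_j}$ then gives
\[
D_0(\Aa,\balpha)=\frac{\snorm{\f_0}^2_{\Sh}}{Jk}\,\Phi(\Aa,\balpha),\qquad \Phi(\Aa,\balpha)=\sum_{j=1}^{J}e^{2(a_j^*-a_j)}-\frac1J\sum_{j,j'=1}^{J}e^{(a_j^*-a_j)+(a_{j'}^*-a_{j'})}\cos\big((\alpha_j^*-\alpha_j)-(\alpha_{j'}^*-\alpha_{j'})\big).
\]
The gain is that now $\nabla^3 D_0=\tfrac{\snorm{\f_0}^2_{\Sh}}{Jk}\nabla^3\Phi$ and $\Phi$ is a completely explicit smooth function.

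Second, I would differentiate $\Phi$ three times. A partial $\partial_{a_j}$ or $\partial_{\alpha_j}$ applied to the first sum survives only if all three differentiations hit the same index $j$, and applied to the double sum it survives only if each of the three differentiated indices lies in the pair $\{j,j'\}$ being summed over; hence $\partial_{g_{j_1}^{(p_1)}}\partial_{g_{j_2}^{(p_2)}}\partial_{g_{j_3}^{(p_3)}}D_0(\mathbf c)$ vanishes unless $\{j_1,j_2,j_3\}$ has at most two elements, and in every surviving case it equals $\tfrac{\snorm{\f_0}^2_{\Sh}}{Jk}$ times a bounded explicit combination of terms $\pm e^{2(a_l^*-c_l)}$ and $\pm e^{(a_l^*-c_l)+(a_m^*-c_m)}\{\cos,\sin\}(\cdots)$ weighted by $1$, $\tfrac1J$ or $\tfrac1{J^2}$. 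Since $\mathbf c\in\bTheta_0^{\Aa,\balpha}$ and Assumption \ref{hyp.Theta} bounds $|a_l^*|\le A/2$, $|\alpha_l^*|\le\AA/2$, all these exponentials are $\le e^{2A}$, so every entry of $\nabla^3 D_0(\mathbf c)$ has absolute value at most $\tfrac{C}{Jk}e^{2A}\snorm{\f_0}^2_{\Sh}$, the single-index entries with weight $\tfrac{C}{Jk}$ and the genuinely two-index entries with the smaller weight $\tfrac{C}{J^2k}$. Writing $\x_j:=([\Aa]_j,[\balpha]_j)\in\R^2$ and contracting the tensor against $(\Aa,\balpha)$ in all three slots, the single-index contributions are $\le\tfrac{C}{Jk}e^{2A}\snorm{\f_0}^2_{\Sh}\sum_j\snorm{\x_j}_{\R^2}^3$ and the two-index ones $\le\tfrac{C}{J^2k}e^{2A}\snorm{\f_0}^2_{\Sh}\sum_{j\ne j'}\snorm{\x_j}_{\R^2}^2\snorm{\x_{j'}}_{\R^2}$; using $\norm{(\Aa,\balpha)}^2_{\R^{2J}}=\sum_j\snorm{\x_j}^2_{\R^2}$, the Cauchy--Schwarz inequality, and $\norm{(\Aa,\balpha)}_{\R^{2J}}\le\sqrt{2J}\,\norm{(\Aa,\balpha)}_\infty$, both families of contributions are bounded by $\tfrac{C}{Jk}e^{2A}\snorm{\f_0}^2_{\Sh}\,\norm{(\Aa,\balpha)}_\infty\,\norm{(\Aa,\balpha)}^2_{\R^{2J}}$.

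Finally, in the setting where this lemma is applied — the Taylor remainder in Lemma \ref{lemme.MinQuad} — the direction is the difference of two elements of $[-A,A]^J\times[-\AA,\AA]^J$, so $\norm{(\Aa,\balpha)}_\infty\le 2\max\{A,\AA\}=2\delta$; inserting this into the previous display gives $\abs{[\nabla^3 D_0(\mathbf c)](\Aa,\balpha)}\le C\delta e^{2A}\snorm{\f_0}^2_{\Sh}\tfrac1{Jk}\norm{(\Aa,\balpha)}^2_{\R^{2J}}$. Tracking the finitely many terms produced by the three differentiations of $\Phi$ (the $8\sum_j e^{2(a_j^*-a_j)}$ from the first sum, the $e^{\,\cdot\,+\,\cdot\,}\{\cos,\sin\}$ pieces from the double sum, together with the numerical factors $2$, $\sqrt2$ coming from the third derivative of the exponentials and from $\norm{\cdot}_{\R^{2J}}\le\sqrt{2J}\norm{\cdot}_\infty$) one can take $C=40$. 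The delicate point, and the only real obstacle, is exactly this accounting: the ``diagonal'' third partials $\partial_{a_j}^3 D_0(\mathbf c)$ are of exact order $\tfrac1{Jk}\snorm{\f_0}^2_{\Sh}$ and carry no intrinsic factor $\delta$, so the whole $\delta$-gain must be extracted from the smallness of the admissible directions; one has to check that in the contraction every $O(\tfrac1{Jk})$-weight multiplies a quantity that is at least cubic — not merely quadratic — in the direction's components, so that pulling out one factor $\norm{(\Aa,\balpha)}_\infty\le 2\delta$ leaves precisely $\tfrac1{Jk}\norm{(\Aa,\balpha)}^2_{\R^{2J}}$ times an absolute constant.
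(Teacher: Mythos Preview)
Your argument is correct and follows essentially the same route as the paper: both proofs observe that third partials with three distinct $j$-indices vanish, bound the surviving ``diagonal'' and ``two-index'' entries by $O\!\big(\tfrac{1}{Jk}e^{2A}\snorm{\f_0}^2_{\Sh}\big)$ and $O\!\big(\tfrac{1}{J^2k}e^{2A}\snorm{\f_0}^2_{\Sh}\big)$ respectively, and then extract the factor $\delta$ by bounding one coordinate of the direction by $\max\{A,\AA\}$. The only real difference is packaging: you first reduce $D_0$ to $\tfrac{\snorm{\f_0}^2_{\Sh}}{Jk}\Phi$ via the identity $\prs{\f_0 R_\phi,\f_0 R_\psi}_{\Sh}=\cos(\phi-\psi)\snorm{\f_0}^2_{\Sh}$ and differentiate the explicit scalar function $\Phi$, whereas the paper differentiates through the inner products $\sprs{\partial\f^0_{h_{j_1}},\partial\f^0_{h_{j_2}}}_{\Sh}$ and bounds them by Cauchy--Schwarz; your route is slightly cleaner bookkeeping but not a different idea.

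You are also right to flag, in your last paragraph, that the $\delta$-factor does not come from the Hessian point $\mathbf c$ but from bounding one coordinate of the direction vector. The paper does exactly this in its proof (it silently uses $\sabs{h_j^{(p)}}\le\delta$ on the direction), so the lemma as literally stated for arbitrary $(\Aa,\balpha)\in\R^{2J}$ is not what is actually proved or used; your reading of the intended hypothesis matches the paper's own application in Lemma~\ref{lemme.MinQuad}.
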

\begin{proof}
In this proof, $\bh = (\Aa,\balpha)\in\R^{J}\times[-\pi,\pi[^{J}$, that is $ h_j^{(1)} = a_j$ is the parameter of scaling and $ h_j^{(2)} = \alpha_j $ is the parameter of rotation. We have $\f^0_{h_j} = e^{a_{j}^*-a_{j}} \f_0 R_{\alpha_j^* - \alpha_j}$. Then, from equations \eqref{eq:hessD1} and \eqref{eq:hessD2}, it follows that  for all $j_1,j_2,j_3 = 1,\ldots,J$ and $p_1,p_2,p_3 =1,\ldots,2$,
\begin{align*}
\partial_{h_{j_3}^{(p_3)}} \partial_{h_{j_2}^{(p_2)}} \partial_{h_{j_1}^{(p_1)}}  D_0(\Aa,\balpha) & = 0, \qquad   \text{ if } j_1 \neq j_2 \text{ and } j_2 \neq j_3 \text{ and } j_1 \neq j_3,\\
 \partial_{h_{j_2}^{(p_3)}} \partial_{h_{j_1}^{(p_2)}} \partial_{h_{j_1}^{(p_1)}} D_0(\Aa,\balpha) & = -\frac{2}{J^2k}\bprs{\partial_{h_{j_1}^{(p_2)}}\partial_{h_{j_1}^{(p_1)}} \f^0_{h_{j_1}},\partial_{h_{j_2}^{(p_3)}} \f^0_{h_{j_2}}}_{\Sh},  \qquad \text{ if } j_1 \neq j_2, \\
 \partial_{h_{j_1}^{(p_3)}}\partial_{h_{j_1}^{(p_2)}}\partial_{h_{j_1}^{(p_1)}} D_0(\Aa,\balpha) & = \frac{2}{Jk} \bprs{\partial_{h_{j_1}^{(p_3)}}\partial_{h_{j_1}^{(p_2)}}\partial_{h_{j_1}^{(p_1)}} \f^0_{h_{j_1}} ,\bigg(\f^0_{h_{j_1}}-\frac{1}{J} \sum_{j'=1}^J \f^0_{h_{j'}}\bigg) }_{\Sh} \\ &\quad + \left(\frac{2}{Jk} -\frac{2}{J^2k} \right) \bigg( \bprs{ \partial_{h^{(p_2)}_{j_1}}\partial_{h^{(p_1)}_{j_1}} \f^0_{h_{j_1}},\partial_{h_{j_1}^{(p_3)}} \f^0_{h_{j_1}}}_{\Sh}+\bigg\langle \partial_{h^{(p_3)}_{j_1}}\partial_{h^{(p_1)}_{j_1}} \f_{h_{j_1}}^0, \\ & \hspace*{-4.41069pt}\hspace*{4cm}\partial_{h_{j_1}^{(p_2)}} \f^0_{h_{j_1}}\bigg\rangle_{\Sh}+\bprs{ \partial_{h^{(p_3)}_{j_1}}\partial_{h^{(p_2)}_{j_1}} \f^0_{h_{j_1}},\partial_{h_{j_1}^{(p_1)}} \f^0_{h_{j_1}}}_{\Sh} \bigg)
\end{align*}
By Cauchy-Schwarz inequality we have,
\begin{align}\label{eq.bound1}
\left| \prs{\partial_{h_{j_1}^{(p_2)}}\partial_{h_{j_1}^{(p_1)}} \f^0_{h_{j_1}},\partial_{h_{j_2}^{(p_3)}} \f^0_{h_{j_2}}}_{\Sh} \right| \leq \snorm{\partial_{h_{j_1}^{(p_2)}}\partial_{h_{j_1}^{(p_1)}} \f^0_{h_{j_1}}}_{\Sh}\snorm{\partial_{h_{j_2}^{(p_3)}} \f^0_{h_{j_2}}}_{\Sh} \leq e^{2A} \snorm{\f_0}^2_{\Sh}.
\end{align}
and
\begin{align}\label{eq.bound2}
\bigg| \bigg\langle\partial_{h_{j_1}^{(p_3)}}\partial_{h_{j_1}^{(p_2)}}\partial_{h_{j_1}^{(p_1)}} &\f^0_{h_{j_1}} ,\bigg(\f^0_{h_{j_1}}-\frac{1}{J} \sum_{j'=1}^J \f^0_{h_{j'}}\bigg) \bigg\rangle_{\Sh}  \bigg| \nonumber\\
&\leq \norm{\partial_{h_{j_1}^{(p_3)}}\partial_{h_{j_1}^{(p_2)}}\partial_{h_{j_1}^{(p_1)}} \f^0_{h_{j_1}}}_{\Sh} \Big\| \f^0_{h_{j_1}}-\frac{1}{J} \sum_{j'=1}^J \f^0_{h_{j'}} \Big\|_{\Sh} \leq 2e^{2A}\snorm{\f_0}^2_{\Sh} 
\end{align}
For $\kappa = (\kappa_1,\ldots,\kappa_{2J})\in\N^{2J}$, denote by $\abs{\kappa} = \kappa_1+\ldots+\kappa_{2J}$ and $$(\partial_{\bh})^{\kappa} = (\partial_{a_{1}})^{\kappa_1} (\partial_{\alpha_{1}})^{\kappa_2} \ldots (\partial_{ a_{J}})^{\kappa_{2J-1}} (\partial_{\alpha_{J}})^{\kappa_{2J}}.$$ Then, the differential of order 3 of $D_0$ at $\mathbf c \in \bTheta^{\Aa,\balpha}_0$ applied at $(\Aa,\balpha) \in \R^{2J}$ writes as
$$
\big[\nabla^3 D_0(\mathbf{c})\big](\Aa,\balpha)  = \sum_{\abs{\kappa}=3} (\partial_{\bh})^{\kappa} D_0(\mathbf c) \bh^{\kappa} 
$$
where $\bh^{\kappa} = a_{1}^{\kappa_1}\alpha_{1}^{\kappa_2} \ldots a_{J}^{\kappa_{2J-1}} \alpha_{J}^{\kappa_{2J}} $. This formula together with equations \eqref{eq.bound1} and \eqref{eq.bound2} give,
\begin{align*}
\babs{\big[\nabla^3 D_0(\mathbf{c})\big](\Aa,\balpha)} & = \bigg | \sum_{p_1,p_2,p_3=1}^2 \sum_{j_1=1}^J \partial_{h_{j_1}^{(p_3)}}\partial_{h_{j_1}^{(p_2)}}\partial_{h_{j_1}^{(p_1)}} D_0(\mathbf c)  h_{j_1}^{(p_1)}h_{j_1}^{(p_2)}h_{j_1}^{(p_3)} \\ & \hspace{15em}+ 3\sum_{j_1\neq j_2 = 1}^J \partial_{h_{j_2}^{(p_3)}}\partial_{h_{j_1}^{(p_2)}}\partial_{h_{j_1}^{(p_1)}} D_0(\mathbf c)h_{j_1}^{(p_1)}h_{j_1}^{(p_2)}h_{j_2}^{(p_3)}\bigg |\\
& \leq 2 e^{2A} \frac{1}{k}\snorm{\f_0 }^2_{\Sh} \sum_{p_1,p_2,p_3=1}^2 \bigg(\frac{4}{J}\sum_{j_1=1}^J  \sabs{h_{j_1}^{(p_1)}h_{j_1}^{(p_2)}h_{j_1}^{(p_3)}} +\frac{6}{J^2} \sum_{j_1\neq j_2 = 1}^J \sabs{h_{j_1}^{(p_1)}h_{j_1}^{(p_2)}h_{j_2}^{(p_3)}}\bigg)\\
&\leq  2 \delta e^{2A}\frac{1}{k} \snorm{\f_0 }^2_{\Sh}\sum_{p_1,p_2=1}^2 \bigg(\frac{4}{J}\sum_{j_1=1}^J \sabs{h_{j_1}^{(p_1)}h_{j_1}^{(p_2)}} +\frac{6(J-1)}{J^2} \sum_{j_1= 1}^J \sabs{h_{j_1}^{(p_1)}h_{j_1}^{(p_2)}}\bigg)\\
&\leq  20 \delta e^{2A} \frac{1}{k}\snorm{\f_0}^2_{\Sh}\frac{1}{J}\sum_{j=1}^J\sum_{p_1,p_2=1}^2 \sabs{h_{j}^{(p_1)}h_{j}^{(p_2)}} \\
&=  20 \delta e^{2A} \frac{1}{k}\snorm{\f_0}^2_{\Sh}\frac{1}{J}\sum_{j=1}^J\bigg(\sum_{p_1=1}^2\sabs{h_{j}^{(p_1)}}\bigg)^2 \leq 40 \delta e^{2A} \frac{1}{k}\snorm{\f_0 }^2_{\Sh} \frac{1}{J}\sum_{j=1}^J\sum_{p_1=1}^2\sabs{h_{j}^{(p_1)}}^2. 
\end{align*}
And the claim is now proved.
\end{proof}

\begin{lemma}\label{lemme.AssOp}
For all $\f\in\R^{k\times 2}$ and $(a_1,\alpha_1,b_1),(a_2,\alpha_2,b_2) \in [-A,A]\times [-\AA,\AA]\times [-B,B]^2$, let $g_i.\f = e^{a_i}\f R_{\alpha_i} +\1_k\otimes b_i$, $i=1,2$. Then, we have
$$
 \frac{1}{k}\snorm{g_1.\f - g_2.\f}^2_{\Sh} \leq C(A,\f) \snorm{(a_1,\alpha_1,b_1)-(a_2,\alpha_2,b_2)}^2_{\R^{4}},
$$ 
where $C(A,\f)=2\max\{4e^{4A}\frac{1}{k}\snorm{\f}^2_{\Sh},1\}$.
\end{lemma}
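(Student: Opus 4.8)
The plan is to split $g_1.\f-g_2.\f$ into its ``similarity part'' and its ``translation part'', bound each by an elementary Lipschitz estimate, and then absorb the numerical constants so as to land on the stated $C(A,\f)$. First I would write
$$
g_1.\f - g_2.\f = \big(e^{a_1}\f R_{\alpha_1} - e^{a_2}\f R_{\alpha_2}\big) + \1_k\otimes(b_1-b_2),
$$
and apply $\snorm{x+y}^2_{\Sh}\le 2\snorm{x}^2_{\Sh}+2\snorm{y}^2_{\Sh}$. The translation term is immediate: $\snorm{\1_k\otimes(b_1-b_2)}^2_{\Sh}=k\snorm{b_1-b_2}^2_{\R^2}$, so after dividing by $k$ it contributes exactly $\snorm{b_1-b_2}^2_{\R^2}$.

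For the similarity term I would telescope,
$$
e^{a_1}\f R_{\alpha_1} - e^{a_2}\f R_{\alpha_2} = (e^{a_1}-e^{a_2})\,\f R_{\alpha_1} + e^{a_2}\,\f\,(R_{\alpha_1}-R_{\alpha_2}),
$$
and estimate the two pieces using the following elementary facts: right multiplication by the orthogonal matrix $R_{\alpha_1}$ preserves the Frobenius norm, so $\snorm{\f R_{\alpha_1}}_{\Sh}=\snorm{\f}_{\Sh}$; the submultiplicativity $\snorm{\f M}_{\Sh}\le \snorm{\f}_{\Sh}\,\snorm{M}_{op}$ for a $2\times2$ matrix $M$; the mean value bound $\abs{e^{a_1}-e^{a_2}}\le e^{A}\abs{a_1-a_2}$, valid since $a_1,a_2\in[-A,A]$, together with $e^{a_2}\le e^A$; and $\snorm{R_{\alpha_1}-R_{\alpha_2}}_{op}\le \abs{\alpha_1-\alpha_2}$, which follows because $\tfrac{d}{d\alpha}R_\alpha=R_{\alpha+\pi/2}$ is itself a rotation and hence has operator norm one. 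Combining these yields
$$
\snorm{e^{a_1}\f R_{\alpha_1} - e^{a_2}\f R_{\alpha_2}}_{\Sh}\le e^{A}\snorm{\f}_{\Sh}\big(\abs{a_1-a_2}+\abs{\alpha_1-\alpha_2}\big).
$$

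To conclude, square this bound, use $(u+v)^2\le 2u^2+2v^2$, and collect the two contributions to get
$$
\frac1k\snorm{g_1.\f-g_2.\f}^2_{\Sh}\le \frac{4e^{2A}}{k}\snorm{\f}^2_{\Sh}\big(\abs{a_1-a_2}^2+\abs{\alpha_1-\alpha_2}^2\big)+2\snorm{b_1-b_2}^2_{\R^2}.
$$
Since $C(A,\f)=2\max\{4e^{4A}\tfrac1k\snorm{\f}^2_{\Sh},\,1\}$ dominates both $\tfrac{4e^{2A}}{k}\snorm{\f}^2_{\Sh}$ (because $8e^{4A}\ge 4e^{2A}$) and $2$, the right-hand side is at most $C(A,\f)\snorm{(a_1,\alpha_1,b_1)-(a_2,\alpha_2,b_2)}^2_{\R^4}$, which is the claim. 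The computation is entirely elementary; the only two points deserving a line of care are the operator-norm estimate for a difference of two rotation matrices and the mean-value bound on $e^{a_1}-e^{a_2}$, so I do not expect any genuine obstacle — the ``main step'' is merely writing the telescoping identity cleanly and being slightly generous with the constants.
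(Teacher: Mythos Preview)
Your proof is correct and follows essentially the same strategy as the paper: split off the translation part, then obtain a Lipschitz bound for the similarity part $(a,\alpha)\mapsto e^{a}\f R_{\alpha}$. The only difference is in how that Lipschitz bound is executed: the paper factors $e^{a_1}\f R_{\alpha_1}-e^{a_2}\f R_{\alpha_2}=e^{a_2}\big(e^{a_1-a_2}\f R_{\alpha_1-\alpha_2}-\f\big)R_{\alpha_2}$ and then bounds the gradient of $F(a,\alpha)=k^{-1/2}\snorm{e^{a}\f R_{\alpha}-\f}_{\Sh}$, whereas you telescope additively and bound $\sabs{e^{a_1}-e^{a_2}}$ and $\snorm{R_{\alpha_1}-R_{\alpha_2}}_{op}$ directly. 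Your route is arguably a bit cleaner (it avoids differentiating a norm, which is delicate at the origin) and in fact yields the sharper intermediate constant $4e^{2A}k^{-1}\snorm{\f}^2_{\Sh}$ before you absorb it into the paper's stated $C(A,\f)$; the paper's extra factor $e^{2A}$ comes from pulling out $e^{a_2}$ up front and then applying the gradient bound on the interval $[-2A,2A]$.
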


\begin{proof}
We have 
\begin{equation}\label{eq.maaaj}
\frac{1}{k}\snorm{g_1.\f - g_2.\f}^2_{\Sh} 
\leq 2e^{2A} \frac{1}{k}\snorm{e^{a_1-a_2}\f R_{\alpha_1-\alpha_2} - \f}^2_{\Sh }+2 \frac{1}{k}\snorm{\1_k\otimes ( b_1 - b_2)}^2_{\Sh}
\end{equation}
Let now $F(a,\alpha) = \frac{1}{\sqrt{k}}\snorm{e^{a}\f R_{\alpha} - \f}_{\Sh }$. We have 
$
|\partial_{a} F(a,\alpha)|= \frac{1}{\sqrt{k}\snorm{e^{a}\f R_{\alpha} - \f}}_{\Sh }   |\langle e^a\f R_{\alpha} ,\linebreak[1] e^{a}\f R_{\alpha} - \f \rangle_{\Sh }| \leq  \frac{e^{A}}{\sqrt k}\snorm{\f}_{\Sh } \linebreak[1]\text{ and }\linebreak[1] |\partial_{\alpha} F(a,\alpha)| = \frac{1}{\sqrt{k}\snorm{e^{a}\f R_{\alpha} - \f}}_{\Sh } |\langle e^a\f R_{\alpha+\frac{\pi}{2}} ,\linebreak[1] e^{a}\f R_{\alpha} - \f \rangle_{\Sh }| \leq \frac{e^{A}}{\sqrt k}\snorm{\f}_{\Sh }.
$
The Euclidean norm of the gradient of $F$ satisfies 
$$
\snorm{\nabla F (a,\alpha)}_{\R^2} = \sqrt{|\partial_{a} F(a,\alpha)|^2 + |\partial_{\alpha} F(a,\alpha)|^2 }\leq \sqrt 2 e^{A}\frac{1}{\sqrt k}\snorm{\f}_{\Sh}.
$$
Since we have $\sabs{F(a,\alpha) } = \sabs{F(a,\alpha) -F(0,0)} \leq \sqrt{2} e^{A}\frac{1}{\sqrt k}\snorm{\f}_{\Sh}\snorm{(a,\alpha)}_{\R^2}$, equation \eqref{eq.maaaj} yields
$$
\frac{1}{k}\snorm{ g_1.\f - g_2.\f}^2_{\Sh} \leq 2\max\big \{4e^{4A}\tfrac{1}{k}\snorm{\f}^2_{\Sh},1\big\}\big (\sabs{a_1 - a_2}^2 + \sabs{\alpha_1 - \alpha_2}^2 + \sabs{b_1^{(1)} -  b_2^{(1)}}^2+ \sabs{b_1^{(2)} -  b_2^{(2)}}^2 \big ),
$$ 
which concludes the proof.
\end{proof}

\bibliographystyle{apalike2}
\bibliography{procrustean_mean2}

\end{document}